\newtheorem{thm}{\textbf{Theorem}}
\newtheorem{rmk}{\textbf{Remark}}
\newtheorem{lma}{\textbf{Lemma}}
\newtheorem{defi}{\textbf{Definition}}
\newtheorem{prop}{\textbf{Proposition}}
\newtheorem{corol}{\textbf{Corollary}}
\begin{document}

% paper title
\title{Dynamic Computation Offloading for Mobile-Edge Computing with Energy Harvesting Devices\\}

\author{\IEEEauthorblockN{Yuyi Mao, Jun Zhang, and Khaled B. Letaief, \emph{Fellow, IEEE}\\ }
\thanks{The authors are with the Department of Electronic and Computer Engineering, the Hong Kong University
of Science and Technology, Clear Water Bay, Kowloon, Hong Kong (e-mail: \{ymaoac, eejzhang, eekhaled\}@ust.hk). Khaled B. Letaief is also with Hamad bin Khalifa University, Doha, Qatar (e-mail: kletaief@hkbu.edu.qa).}
\thanks{This work is supported by the Hong Kong Research Grants Council under Grant No. 16200214.}

%\IEEEauthorblockA{Dept. of ECE, The Hong Kong University of Science and Technology\\
%Email: \{ymaoac, eejzhang, eekhaled\}@ust.hk}
}

%\markboth{Extended version}%
%{Shell \MakeLowercase{\textit{et al.}}: Bare Demo of IEEEtran.cls for Journals}
\maketitle
\vspace{-45pt}
\begin{abstract}
Mobile-edge computing (MEC) is an emerging paradigm to meet the ever-increasing computation demands from mobile applications. By offloading the computationally intensive workloads to the MEC server, the quality of computation experience, e.g., the execution latency, could be greatly improved. Nevertheless, as the on-device battery capacities are limited, computation would be interrupted when the battery energy runs out. To provide satisfactory computation performance as well as achieving green computing, it is of significant importance to seek renewable energy sources to power mobile devices via energy harvesting (EH) technologies. In this paper, we will investigate a green MEC system with EH devices and develop an effective computation offloading strategy. The \emph{execution cost}, which addresses both the execution latency and task failure, is adopted as the performance metric. A low-complexity online algorithm, namely, the \emph{Lyapunov optimization-based dynamic computation offloading} (LODCO) algorithm is proposed, which jointly decides the offloading decision, the CPU-cycle frequencies for mobile execution, and the transmit power for computation offloading. A unique advantage of this algorithm is that the decisions depend only on the instantaneous side information without requiring distribution information of the computation task request, the wireless channel, and EH processes. The implementation of the algorithm only requires to solve a deterministic problem in each time slot, for which the optimal solution can be obtained either in closed form or by bisection search. Moreover, the proposed algorithm is shown to be asymptotically optimal via rigorous analysis. Sample simulation results shall be presented to verify the theoretical analysis as well as validate the effectiveness of the proposed algorithm.
\end{abstract}
\vspace{-20pt}
\begin{keywords}
Mobile-edge computing, energy harvesting, dynamic voltage and frequency scaling, power control, QoE, Lyapunov optimization.
\end{keywords}

\section{Introduction}
The growing popularity of mobile devices, such as smart phones, tablet computers and wearable devices, is accelerating the advent of the Internet of Things (IoT) and triggering a revolution of mobile applications \cite{Gubbi1309}. With the support of on-device cameras and embedded sensors, new applications with advanced features, e.g., navigation, face recognition and interactive online gaming, have been created. Nevertheless, the tension between resource-limited devices and computation-intensive applications becomes the bottleneck for providing satisfactory quality of experience (QoE) and hence may defer the advent of a mature mobile application market \cite{Kumar1302}.

Mobile-edge computing (MEC), which provides cloud computing capabilities within the radio access networks (RAN), offers a new paradigm to liberate the mobile devices from heavy computation workloads \cite{ETSI1409}. In conventional cloud computing systems, remote public clouds, e.g., Amazon Web Services, Google Cloud Platform and Microsoft Azure, are leveraged, and thus long latency may be incurred due to data exchange in wide area networks (WANs). In contrast, MEC has the potential to significantly reduce latency, avoid congestion and prolong the battery lifetime of mobile devices by offloading the computation tasks from the mobile devices to a physically proximal MEC server \cite{Satyanarayanan0910,Kumar1004}. Thus, lots of recent efforts have been attracted from both industry \cite{ETSI1409} and academia \cite{Barbarossa1411}.

{Unfortunately, although computation offloading is effective in exploiting the powerful computation resources at cloud servers, for conventional battery-powered devices, the computation performance may be compromised due to insufficient battery energy for task offloading, i.e., mobile applications will be terminated and mobile devices will be out of service when the battery energy is exhausted. This can possibly be overcome by using larger batteries or recharging the batteries regularly. However, using larger batteries at the mobile devices implies increased hardware cost, which is not desirable. On the other hand, recharging batteries frequently is reported as the most unfavorable characteristic of mobile phones\footnote{CNN.com, ``Battery life concerns mobile users,'' available on http://edition.cnn.com/2005/TECH/ptech/09/22/phone.study/.}, and it may even be impossible in certain application scenarios, e.g., in the wireless sensor networks (WSNs) and the IoT for surveillance where the nodes are typically hard-to-reach.} Meanwhile, the rapidly increasing energy consumption of the information and communication technology (ICT) sector also brings a strong need for green computing \cite{Lamber12}. Energy harvesting (EH) is a promising technology to resolve these issues, which can capture ambient recyclable energy, including solar radiation, wind, as well as human motion energy \cite{Sudevalayam1107}, and thus it facilitates self-sustainability and perpetual operation \cite{Ulukus15}.

By integrating EH techniques into MEC, satisfactory and sustained computation performance can be achieved. While MEC with EH devices open new possibilities for cloud computing, it also brings new design challenges. In particular, the computation offloading strategies dedicated for MEC systems with battery-powered devices cannot take full benefits of the renewable energy sources. In this paper, we will develop new design methodologies for MEC systems with EH devices.

\subsection{Related Works}
Computation offloading for mobile cloud computing systems has attracted significant attention in recent years. To increase the batteries' lifetime and improve the computation performance, various code offloading frameworks, e.g., MAUI \cite{Cuervo10} and ThinkAir \cite{Kosta1203}, were proposed. However, the efficiency of computation offloading highly depends on the wireless channel condition, as the implementation of computation offloading requires data transmission. This calls for computation offloading policies that incorporate the characteristics of wireless channels \cite{DHuang1206,Munoz1510,XChen1504}. In \cite{DHuang1206}, a stochastic control algorithm adapted to the time-varying wireless channel was proposed, which determines the offloaded software components. For the femto-cloud computing systems, where the cloud server is formed by a set of femto access points, the transmit power, precoder and computation load distribution were jointly optimized in \cite{Munoz1510}. In addition, a game-theoretic decentralized computation offloading algorithm was proposed for multi-user mobile cloud computing systems \cite{XChen1504}. Nevertheless, these works assume non-adjustable processing capabilities of the central processing units (CPUs) at the mobile devices, which is not energy-efficient since the CPU energy consumption increases super-linearly with the CPU-cycle frequency \cite{Burd96}. With dynamic voltage and frequency scaling (DVFS) techniques, the local execution energy consumption for applications with strict deadline constraints is minimized by controlling the CPU-cycle frequencies \cite{WZhang1309}. Besides, a joint allocation of communication and computation resources for multi-cell MIMO cloud computing systems was proposed in \cite{Sardellitti1506}. Most recently, the energy-delay tradeoff of mobile cloud systems with heterogeneous types of computation tasks were investigated by a Lyapunov optimization algorithm, which decides the offloading policy, task allocation, CPU clock speeds and selected network interfaces \cite{JKwak1512}.

Energy harvesting was introduced to communication systems for its potential to realize self-sustainable and green communications \cite{GPiro13,YMao1506}. With non-causal side information (SI)\footnote{`Causal SI' refers to the case that, at any time instant, only the past and current SI is known, while non-causal SI means that the future SI is also available.}, including the channel side information (CSI) and energy side information (ESI), the maximum throughput of point-to-point EH fading channels can be achieved by the directional water-filling algorithm \cite{Ozel11}. The study was later extended to EH networks with causal SI \cite{LHuang13}. Cellular networks with renewable energy supplies have also been widely investigated. Resources allocation policies that maximize the energy efficiency in OFDMA systems with hybrid energy supplies (HES), i.e., both grid and harvested energy are accessible to base stations, were proposed in \cite{DNg13}. To save the grid energy consumption, a sleep control scheme for cellular networks with HES was developed in \cite{JGong14}, and a low-complexity online base station assignment and power control algorithm based on Lyapunov optimization was proposed in \cite{YMao1512}.

The design principles for MEC systems with EH devices are different from those for EH communication systems or MEC systems with battery-powered devices. On one hand, compared to EH communication systems, computation offloading policies require a joint design of the offloading decision, i.e., whether to offload a task, the CPU-cycle frequencies for mobile execution\footnote{We use ``local execution'' and ``mobile execution'' interchangeably in this paper.}, and the transmission policy for task offloading, which makes it much more challenging. On the other hand, compared to MEC systems with battery-powered devices, the design objective is shifted from minimizing the battery energy consumption to optimizing the computation performance as the harvested energy comes for free. In addition, taking care of the ESI is a new design consideration, and the time-correlated battery energy dynamics poses another challenge.

\subsection{Contributions}
In this paper, we will investigate MEC systems with EH devices and develop an effective dynamic computation offloading algorithm. Our major contributions are summarized as follows:

\begin{itemize}
\item We consider an EH device served by an MEC server, where the computation tasks can be executed locally at the device or be offloaded to the MEC server for cloud execution\footnote{It is worthwhile to point out that powering mobile devices in MEC systems with wireless energy harvesting was proposed in \cite{CYou1606}, where the harvested energy is radiated from a hybrid access point and fully controllable. This is different from the system considered in this paper where the EH process is random and uncontrollable.}. An execution cost that incorporates the execution delay and task failure is adopted as the performance metric, while DVFS and power control are adopted to optimize the mobile execution process and data transmission for computation offloading, respectively.

\item The \emph{execution cost minimization} (ECM) problem, which is an intractable high-dimensional Markov decision problem, is formulated assuming causal SI, and a low-complexity online \textbf{L}yapunov \textbf{o}ptimization-based \textbf{d}ynamic \textbf{c}omputation \textbf{o}ffloading (LODCO) algorithm is proposed. In each time slot, the system operation, including the offloading decision, the CPU-cycle frequencies for mobile execution, and the transmit power for computation offloading, only depends on the optimal solution of a deterministic optimization problem, which can be obtained either in closed form or by bisection search.

\item We identify a non-decreasing property of the scheduled CPU-cycle frequencies (the transmit power) with respect to the battery energy level, which shows that a larger amount of available energy leads to a shorter execution delay for mobile execution (MEC server execution). Performance analysis for the LODCO algorithm is also conducted. It is shown that the proposed algorithm can achieve asymptotically optimal performance of the ECM problem by tuning a two-tuple control parameters. Moreover, it does not require statistical information of the involved stochastic processes, including the computation task request, the wireless channel, and EH processes, which makes it applicable even in unpredictable environments.

\item Simulation results are provided to verify the theoretical analysis, especially the asymptotic optimality of the LODCO algorithm. Moreover, the effectiveness of the proposed policy is demonstrated by comparisons with three benchmark polices with greedy harvested energy allocation. It is shown that the LODCO algorithm not only achieves significant performance improvement in terms of execution cost, but also effectively reduces task failure.

\end{itemize}

The organization of this paper is as follows. In Section II, we introduce the system model. The ECM problem is formulated in Section III. The LODCO algorithm for the ECM problem is proposed in Section IV and its performance analysis is conducted in Section V. We show the simulation results in
Section VI and conclude this paper in Section VII.

\section{System Model}

In this section, we will introduce the system model studied in this paper, i.e., a mobile-edge computing (MEC) system with an EH device. Both the computation model and energy harvesting model will be discussed.

\subsection{Mobile-edge Computing Systems with EH Devices}

\begin{figure}[h]
\begin{center}
   \includegraphics[width=0.65\textwidth]{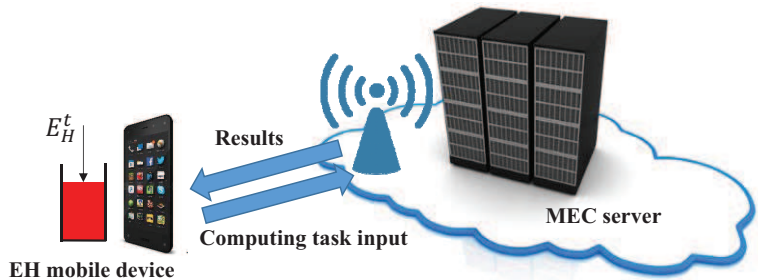}
\end{center}
\vspace{-25pt}
\caption{A mobile-edge computing system with an EH mobile device.}
\label{sysmodel}
\end{figure}

We consider an MEC system consisting of a mobile device and an MEC server as shown in Fig. \ref{sysmodel}. In particular, the mobile device is equipped with an EH component and powered purely by the harvested renewable energy. The MEC server, which could be a small data center managed by the telecom operator, is located at a distance of $d$ meters away and can be accessed by the mobile device through the wireless channel. The mobile device is associated with a system-level clone at the MEC server, namely, the cloud clone, which runs a virtual machine and can execute the computation tasks on behalf of the mobile device \cite{WZhang1309}. By offloading the computation tasks for MEC, the computation experience can be improved significantly \cite{Satyanarayanan0910,Kumar1004,Barbarossa1411}.

We assume that time is slotted, and denote the time slot length and the time slot index set by $\tau$ and $\mathcal{T}\triangleq \{0,1,\cdots\}$, respectively. The wireless channel is assumed to be independent and identically distributed (i.i.d.) block fading, i.e., the channel remains static within each time slot, but varies among different time slots. Denote the channel power gain at the $t$th time slot as $h^{t}$, and $h^{t}\sim F_{H}\left(x\right), t\in\mathcal{T}$, where $F_{H}\left(x\right)$ is the cumulative distribution function (CDF) of $h^{t}$. {For ease of reference, we list the key notations of our system model in Table \ref{notationtable}.}

{
\begin{table}[ht]
\center \protect
\caption{Summary of Key Notations}
\begin{tabular}{ll}
\hline
{\textbf{Notation}} & {\textbf{Description}}  \tabularnewline
\hline
{$d$} &Distance between the mobile device and the MEC server  \tabularnewline
{$\mathcal{T}$} &Index set of the time slots \tabularnewline
{$h^{t}$} &Channel power gain from the mobile device to the MEC server in time slot $t$\tabularnewline
{$A\left(L,\tau_{d}\right)$} &Computation task with $L$ bits input and deadline $\tau_{d}$  \tabularnewline
{$\{I_{j}^{t}\}$} &Computation mode indicators at time slot $t$ \tabularnewline
{$\zeta^{t}$} &Task arrival indicator at time slot $t$ \tabularnewline
{$X$ ($W$)} &Number of CPU cycles required to process one bit task input ($A\left(L,\tau_{d}\right)$) \tabularnewline
{$\{f_{w}^{t}\}$} &Scheduled CPU-cycle frequencies for local execution at time slot $t$  \tabularnewline
{$p^{t}$} &Transmit power for computation offloading at time slot $t$ \tabularnewline
{$f^{\max}_{\rm{CPU}}$ ($p_{\rm{tx}}^{\max}$)} &Maximum allowable CPU-cycle frequency (transmit power) \tabularnewline
{$D^{t}_{\rm{mobile}}$ ($D^{t}_{\rm{server}}$)} &Execution delay of local execution (MEC server execution) at time slot $t$\tabularnewline
{$E^{t}_{\rm{mobile}}$ ($E^{t}_{\rm{server}}$)} &Energy consumption of local execution (MEC server execution) at time slot $t$ \tabularnewline
{$e^{t}$ ($E_{H}^{t}$)} &Harvested (harvestable) energy at time slot $t$ \tabularnewline
{$E_{H}^{\max}$} &Maximum value of $E_{H}^{t}$ \tabularnewline
{$B^{t}$}  &Battery energy level at the beginning of time slot $t$ \tabularnewline
{$\phi$} &The weight of the task dropping cost \tabularnewline
\hline
\end{tabular}
\label{notationtable}
\end{table}
}

\subsection{Computation Model}
We use $A\left(L,\tau_{d}\right)$ to represent a computation task, where $L$ (in bits) is the input size of the task, and $\tau_{d}$ is the execution deadline, i.e., if it is decided that task $A\left(L,\tau_{d}\right)$ is to be executed, it should be completed within time $\tau_{d}$. The computation tasks requested by the applications running at the mobile device are modeled as an i.i.d. Bernoulli process. Specifically, at the beginning of each time slot, a computation task $A\left(L,\tau_{d}\right)$ is requested with probability $\rho$, and with probability $1-\rho$, there is no request. Denote $\zeta^{t}=1$ if a computation task is requested at the $t$th time slot and $\zeta^{t}=0$ if otherwise, i.e., $\mathbb{P}\left(\zeta^{t}=1\right)=1-\mathbb{P}\left(\zeta^{t}=0\right)=\rho, t\in\mathcal{T}$. We focus on delay-sensitive applications with execution deadline less than the time slot length, i.e., $\tau_{d}\leq \tau$ \cite{DHuang1206,Munoz1510,XChen1504,Sardellitti1506,ZJiang1512}, and assume no buffer is available for queueing the computation requests.
%\footnote{For applications with less stringent delay requirement, i.e., $\tau_{d}>\tau$, it is possible that a computing task experiences multiple channel %blocks when it is being offloaded to the cloud server, which makes the computation offloading policy design much more challenging as two-timescale %stochastic optimization will be involved. This will be investigated in our future works.}.

Each computation task can either be executed locally at the mobile device, or be offloaded to and executed by the MEC server. It may also happen that neither of these two computation modes is feasible, e.g., when energy is insufficient at the mobile device, and hence the computation task will be dropped. Denote $I_{j}^{t}\in\{0,1\}$ with $j=\{\rm{m},\rm{s},\rm{d}\}$ as the computation mode indicators, where $I_{\rm{m}}^{t}=1$ and $I_{\rm{s}}^{t}=1$ indicate that the computation task requested in the $t$th time slot is executed at the mobile device and offloaded to the MEC server, respectively, while $I_{\rm{d}}^{t}=1$ means the computation task is dropped. Thus, the computation mode indicators should satisfy the following operation constraint:
\begin{equation}
I_{\rm{m}}^{t}+I_{\rm{s}}^{t}+I_{\rm{d}}^{t}=1,  t\in\mathcal{T}.
\label{indicatorconstraint}
\end{equation}

\textbf{Local Executing Model:} The number of CPU cycles required to process one bit input is denoted as $X$, which varies from different applications and can be obtained through off-line measurement \cite{Miettinen10}. In other words, $W=LX$ CPU cycles are needed in order to successfully execute task $A\left(L,\tau_{d}\right)$. The frequencies scheduled for the $W$ CPU cycles in the $t$th time slot are denoted as $f_{w}^{t},w=1,\cdots,W$, which can be implemented by adjusting the chip voltage with DVFS techniques \cite{Rabaey96}. As a result, the delay for executing the computation task requested in the $t$th time slot locally at the mobile device can be expressed as
\begin{equation}
D^{t}_{\rm{mobile}}= \sum_{w=1}^{W}\left(f_{w}^{t}\right)^{-1}.
\label{delaymobile}
\end{equation}
Accordingly, the energy consumption for local execution by the mobile device is given by
\begin{equation}
E^{t}_{\rm{mobile}}= \kappa \sum_{w=1}^{W}\left(f_{w}^{t}\right)^{2},
\label{energymobile}
\end{equation}
where $\kappa$ is the effective switched capacitance that depends on the chip architecture \cite{Burd96}. Moreover, we assume the CPU-cycle frequencies are constrained by $f_{\rm{CPU}}^{\max}$, i.e., $f_{w}^{t}\leq f_{\rm{CPU}}^{\max},\forall w$.

\textbf{Mobile-edge Executing Model:} In order to offload the computation task for MEC, the input bits of $A\left(L,\tau_{d}\right)$ should be transmitted to the MEC server. {We assume sufficient computation resource, e.g., a high-speed multi-core CPU, is available at the MEC server, and thus ignore its execution delay \cite{WZhang1309,JKwak1512,CYou1606}.} It is further assumed that the output of the computation is of small size so the transmission delay for feedback is negligible. Denote the transmit power as $p^{t}$, which should be less than the maximum transmit power $p_{\rm{tx}}^{\max}$. According to the Shannon-Hartley formula, the achievable rate in the $t$th time slot is given by
%\begin{equation}
$r\left(h^{t},p^{t}\right) = \omega \log_{2} \left(1+ \frac{h^{t}p^{t}}{\sigma}\right)$,
%\label{rate}
%\end{equation}
where $\omega$ is the system bandwidth and $\sigma$ is the noise power at the receiver. Consequently, if the computation task is executed by the MEC server, the execution delay equals the transmission delay for the input bits, i.e.,
\begin{equation}
D^{t}_{\rm{server}}= \frac{L}{r\left(h^{t},p^{t}\right)},
\label{delaycloud}
\end{equation}
and\footnote{When the execution delay in the MEC server is non-negligible, the proposed algorithm can still be applied by modifying the expression of $D^{t}_{\rm{server}}$ in (\ref{delaycloud}) as $D^{t}_{\rm{server}}=L\slash r\left(h^{t},p^{t}\right)+\tau_{\rm{server}}$, where $\tau_{\rm{server}}$ denotes the execution delay in the MEC server.} the energy consumed by the mobile device is given by
\begin{equation}
E^{t}_{\rm{server}}= p^{t}\cdot D^{t}_{\rm{server}} = p^{t}\cdot \frac{L}{r\left(h^{t},p^{t}\right)}.
\label{energycloud}
\end{equation}

\subsection{Energy Harvesting Model}
The EH process is modeled as successive energy packet arrivals, i.e., $E_{H}^{t}$ units of energy arrive at the mobile device at the beginning of the $t$th time slot. We assume $E_{H}^{t}$'s are i.i.d. among different time slots with the maximum value of $E_{H}^{\max}$. Although the i.i.d. model is simple, it captures the stochastic and intermittent nature of the renewable energy processes \cite{LHuang13,YMao1512,Laksh14}. In each time slot, part of the arrived energy, denoted as $e^{t}$, satisfying
\begin{equation}
0\leq e^{t} \leq E_{H}^{t}, t\in\mathcal{T},
\label{harvestableconst}
\end{equation}
will be harvested and stored in a battery, and it will be available for either local execution or computation offloading starting from the next time slot. We start by assuming that the battery capacity is sufficiently large. Later we will show that by picking the values of $e^{t}$'s, the battery energy level is deterministically upper-bounded under the proposed computation offloading policy, and thus we only need a finite-capacity battery in actual implementation. More importantly, including $e^{t}$'s as optimization variables facilitates the derivation and performance analysis of the proposed algorithm. Similar techniques were adopted in previous studies, such as \cite{LHuang13}, \cite{YMao1512} and \cite{Laksh14}. Denote the battery energy level at the beginning of time slot $t$ as $B^{t}$. Without loss of generality, we assume $B^{0}=0$ and $B^{t}<+\infty, t\in\mathcal{T}$. In this paper, energy consumed for purposes other than local computation and transmission is ignored for simplicity, while more general energy models can be handled by the proposed algorithm with minor modifications.{\footnote{We will demonstrate how to adapt the proposed algorithm to more general energy models of mobile devices, e.g., by taking the power consumption of screens and operating systems into account, in Section IV-A.}} Denote the energy consumed by the mobile device in time slot $t$ as $\mathcal{E}\left(\bm{I}^{t},\bm{f}^{t},p^{t}\right)$, which depends on the selected computation mode, scheduled CPU-cycle frequencies and transmit power, and can be expressed as
\begin{equation}
\mathcal{E}\left(\bm{I}^{t},\bm{f}^{t},p^{t}\right)=I_{\rm{m}}^{t}E^{t}_{\rm{mobile}} + I_{\rm{s}}^{t} E^{t}_{\rm{server}},
\end{equation}
subject to the following energy causality constraint:
\begin{equation}
\mathcal{E}\left(\bm{I}^{t},\bm{f}^{t},p^{t}\right)\leq B^{t}<+\infty, t\in\mathcal{T}.
\label{EHcausality}
\end{equation}
Thus, the battery energy level evolves according to the following equation:
\begin{equation}
B^{t+1}=B^{t}-\mathcal{E}\left(\bm{I}^{t},\bm{f}^{t},p^{t}\right)+e^{t}, t\in\mathcal{T}.
\label{batterydynamics}
\end{equation}

%{\color{blue}This can be realized by deploying a battery management system (BMS) at the mobile device in practice \cite{Eichi1306}.}

With EH mobile devices, the computation offloading policy design for MEC systems becomes much more complicated compared to that of conventional mobile cloud computing systems with battery-powered devices. Specifically, both the ESI and CSI need to be handled, and the temporally correlated battery energy level makes the system decision coupled in different time slots. Consequently, an optimal computation offloading strategy should strike a good balance between the computation performance of the current and future computation tasks.

\section{Problem Formulation}
In this section, we will first introduce the performance metric, namely, the execution cost. The execution cost minimization (ECM) problem will then be formulated and its unique technical challenges will be identified.

\subsection{Execution Cost Minimization Problem}
Execution delay is one of the key measures for users' QoE \cite{DHuang1206,Munoz1510,XChen1504,WZhang1309,Sardellitti1506,JKwak1512}, which will be adopted to optimize the computation offloading policy for the considered MEC system. Nevertheless, due to the intermittent and sporadic nature of the harvested energy, some of the requested computation tasks may not be able to be executed and have to be dropped, e.g., due to lacking of energy for local computation, while the wireless channel from the mobile device to the MEC server is in deep fading, i.e., the input of the tasks cannot be delivered. To take this aspect into consideration, we penalize each dropped task by a unit of cost. Thus, we define the execution cost as the weighted sum of the execution delay and the task dropping cost, which can be expressed by the following formula:
\begin{equation}
{\rm{cost}}^{t}=\mathcal{D}\left(\bm{I}^{t},\bm{f}^{t},p^{t}\right)+\phi \cdot \bm{1}\left(\zeta^{t}=1,I_{\rm{d}}^{t}=1\right),
\label{execost}
\end{equation}
where $\phi$ (in second) is the weight of the task dropping cost, $\bm{1}\left(\cdot\right)$ is the indicator function, and $\mathcal{D}\left(\bm{I}^{t},\bm{f}^{t},p^{t}\right)$ is given by
\begin{equation}
\mathcal{D}\left(\bm{I}^{t},\bm{f}^{t},p^{t}\right)=\bm{1}\left(\zeta^{t}=1\right)\cdot \left(I_{\rm{m}}^{t} D^{t}_{\rm{mobile}} + I_{\rm{s}}^{t} D^{t}_{\rm{server}}\right).
\end{equation}
Without loss of generality, we assume that executing a task successfully is preferred to dropping a task, i.e., $\tau_{d}\leq \phi$.

If it is decided that a task is to be executed, i.e., $I_{\rm{m}}^{t}=1$ or $I_{\rm{s}}^{t}=1$, it should be completed before the deadline $\tau_{d}$. In other words, the following deadline constraint should be met:
\begin{equation}
\mathcal{D}\left(\bm{I}^{t},\bm{f}^{t},p^{t}\right)\leq \tau_{d},t\in\mathcal{T}.
\label{deadconst}
\end{equation}
Consequently, the ECM problem can be formulated as:
\begin{align}
%\begin{split}
&\mathcal{P}_{1}:\ \min\limits_{\bm{I}^{t},\bm{f}^{t},p^{t},e^{t}}\lim\limits_{T\rightarrow \infty}\frac{1}{T}\mathbb{E}\left[\sum_{t=0}^{T-1}{\rm{cost}}^{t}\right]\nonumber\\
&\ \ \ \ \ \ \ \ \ {\rm{s.t.}}\ \ \ (\ref{indicatorconstraint}), (\ref{harvestableconst}), (\ref{EHcausality}), (\ref{deadconst})\nonumber\\
&\ \ \ \ \ \ \ \ \ \ \ \ \ \ \ \ I_{\rm{m}}^{t}+I_{\rm{s}}^{t}\leq \zeta^{t},  t\in\mathcal{T}\label{nonemptytask}\\
&\ \ \ \ \ \ \ \ \ \ \ \ \ \ \ \ \mathcal{E}\left(\bm{I}^{t},\bm{f}^{t},p^{t}\right)\leq E_{\max},  t\in\mathcal{T}\label{discharging}\\
&\ \ \ \ \ \ \ \ \ \ \ \ \ \ \ \ 0\leq p^{t} \leq p_{\rm{tx}}^{\max}\cdot \bm{1}\left(I_{\rm{s}}^{t}=1\right),  t\in\mathcal{T} \label{maxtranspwr}\\
&\ \ \ \ \ \ \ \ \ \ \ \ \ \ \ \ 0\leq f_{w}^{t}\leq f_{\max}^{\rm{CPU}}\cdot \bm{1}\left(I_{\rm{m}}^{t}=1\right), w=1,\cdots,W, t\in\mathcal{T}, \label{maxCPUfreq}\\
&\ \ \ \ \ \ \ \ \ \ \ \ \ \ \ \ I_{\rm{m}}^{t},I_{\rm{s}}^{t},I_{\rm{d}}^{t}\in\{0,1\},t\in\mathcal{T}\label{zeroone},
%\end{split}
\end{align}
where (\ref{nonemptytask}) indicates that if there is no computation task requested, neither mobile execution nor MEC server execution is feasible. (\ref{discharging}) is the battery discharging constraint, i.e., the amount of battery output energy cannot exceed $E_{\max}$ in each time slot, which is essential for preventing the battery from over discharging \cite{Laksh14,SSun1507}. The maximum allowable transmit power and the maximum CPU-cycle frequency constraints are imposed by (\ref{maxtranspwr}) and (\ref{maxCPUfreq}), respectively, while the zero-one indicator constraint for the computation mode indicators is represented by (\ref{zeroone}).

\subsection{Problem Analysis}
In the considered MEC system, the system state is composed of the task request, the harvestable energy, the battery energy level, as well as the channel state, and the action is the energy harvesting and the computation offloading decision, including the scheduled CPU-cycle frequencies and the allocated transmit power. It can be checked that the allowable action set depends only on the current system state, and is irrelevant with the state and action history. Besides, the objective is the long-term average execution cost. Thus, $\mathcal{P}_{1}$ is a Markov decision process (MDP) problem. In principle, $\mathcal{P}_{1}$ can be solved optimally by standard MDP algorithms, e.g., the \emph{relative value iteration algorithm} and the \emph{linear programming reformulation approach} \cite{BertsekasDP}. Nevertheless, for both algorithms, we need to use finite states to characterize the system, and discretize the feasible action set. For example, if we use $K=20$ states to quantize the wireless channel, $M=20$ states to characterize the battery energy level, $E=5$ states to describe the harvestable energy, and admits $L=10$ transmit power levels and $F=10$ CPU-cycle frequencies, there are $2KME=4000$ possible system states in total. For the relative value iteration algorithm, this will take a long time to converge as there will be as many as $L+1+F^{W}$ feasible actions in some states. For the linear programming (LP) reformulation approach, we need to solve an LP problem with $2KME\times \left(L+1+F^{W}\right)$ variables, which will be practically infeasible even for a small value of $W$, e.g., $1000$. In addition, it will be difficult to obtain solution insights with the MDP algorithms as they are based on numerical iteration. Moreover, quantizing the state and action may lead to severe performance degradation, and the memory requirement for storing the optimal policy will yet be another big challenge.

In the next section, we will propose a \textbf{L}yapunov \textbf{o}ptimization-based \textbf{d}ynamic \textbf{c}omputation \textbf{o}ffloading (LODCO) algorithm to solve $\mathcal{P}_{1}$, which enjoys the following favorable properties:
\begin{itemize}
    \item There is no need to quantize the system state and feasible action set, and the decision of the LODCO algorithm within each time slot is of low complexity. In addition, there is no memory requirement for storing the optimal policy.
    \item The LODCO algorithm has no prior information requirement on the channel statistics, the distribution of the renewable energy process or the computation task request process.
    \item The performance of the LODCO algorithm is controlled by a two-tuple control parameters. Theoretically, by adjusting these parameters, the proposed algorithm can behave arbitrarily close to the optimal performance of $\mathcal{P}_{1}$.
    \item  An upper bound of the required battery capacity is obtained, which shall provide guidelines for practical installation of the EH components and storage units.
\end{itemize}

\section{Dynamic Computation Offloading: The LODCO Algorithm}
In this section, we will develop the LODCO algorithm to solve $\mathcal{P}_{1}$. We will first show an important property of the optimal CPU-cycle frequencies, which helps to simplify $\mathcal{P}_{1}$. In order to take advantages of Lyapunov optimization, we will introduce a modified ECM problem to assist the algorithm design. The LODCO algorithm will be then proposed for the modified problem, which also provides a feasible solution to $\mathcal{P}_{1}$. In Section V, we will show that this solution is asymptotically optimal for $\mathcal{P}_{1}$.

\subsection{The LODCO Algorithm}
We first show that the optimal CPU-cycle frequencies of the $W$ CPU cycles scheduled for a single computation task should be the same, as stated in the following lemma.
\begin{lma}
If a task requested at the $t$th time slot is being executed locally, the optimal frequencies of the $W$ CPU cycles should be the same, i.e., $f_{w}^{t}=f^{t}, w=1,\cdots, W$.
\label{equalfreq}
\end{lma}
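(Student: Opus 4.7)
The plan is to show that, among all feasible frequency schedules $\left(f_1^{t},\ldots,f_W^{t}\right)$ for a locally executed task, the uniform schedule $f_w^{t}\equiv f^{t}$ weakly dominates every non-uniform schedule both in objective value and in feasibility. The key structural observation is that, in the execution cost (\ref{execost}), the delay $\sum_{w} (f_w^{t})^{-1}$ enters \emph{directly}, whereas the total mobile energy $\kappa\sum_{w} (f_w^{t})^{2}$ enters only through the feasibility constraints (\ref{EHcausality}) and (\ref{discharging}). Hence it suffices to prove the following ``smoothing'' fact: for any fixed energy budget $\sum_{w} (f_w^{t})^{2}=C$, the delay $\sum_{w}(f_w^{t})^{-1}$ is minimized by choosing all $f_w^{t}$ equal.

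To establish this smoothing fact I would substitute $u_w\triangleq (f_w^{t})^{2}$, turning the constraint into $\sum_{w} u_w = C$ and the objective into $\sum_{w} u_w^{-1/2}$. Since $u\mapsto u^{-1/2}$ is strictly convex on $(0,\infty)$, Jensen's inequality yields
\begin{equation}
\sum_{w=1}^{W} u_w^{-1/2} \;\geq\; W\left(\tfrac{C}{W}\right)^{-1/2} \;=\; \tfrac{W^{3/2}}{\sqrt{C}},
\end{equation}
with equality if and only if all $u_w$ coincide, i.e., all $f_w^{t}$ are equal. A one-line Lagrangian argument, whose stationarity condition $-(f_w^{t})^{-2}+2\lambda f_w^{t}=0$ forces a common value across $w$, would serve equally well and is what I would probably present for transparency.

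Given this, I would take any feasible non-uniform schedule and uniformize it by setting $\tilde f^{t}\triangleq \sqrt{(1/W)\sum_{w}(f_w^{t})^{2}}$. The uniformized schedule consumes exactly the same mobile energy $\kappa\sum_{w}(f_w^{t})^{2}$, so the energy causality constraint (\ref{EHcausality}) and the discharging constraint (\ref{discharging}) are preserved; by the smoothing fact the delay can only decrease, so the deadline constraint (\ref{deadconst}) and the execution-delay term in (\ref{execost}) are preserved or improved; and the per-cycle cap (\ref{maxCPUfreq}) is preserved because the quadratic mean never exceeds the maximum, $\tilde f^{t}\leq \max_{w} f_w^{t}\leq f_{\rm{CPU}}^{\max}$. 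Consequently every feasible schedule is weakly dominated by its uniformization, and we may restrict attention to uniform schedules without loss of optimality, which is exactly the claim.

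I do not anticipate a substantive obstacle: the heart of the argument is a single application of Jensen's inequality (or equivalently the power-mean inequality), and the feasibility bookkeeping is routine. The only point needing a sentence of care is the peak-frequency constraint, where the quadratic-mean–versus–maximum comparison does the job cleanly.
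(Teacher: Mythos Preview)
Your proof is correct and fills in exactly the details the paper omits: the paper merely states that the result follows ``by contradiction,'' and your uniformization argument via Jensen's inequality (same energy, strictly smaller delay for any non-uniform schedule, hence a strictly better feasible solution) is precisely the contradiction one would construct. The feasibility bookkeeping you supply---particularly the quadratic-mean-versus-maximum comparison for the per-cycle cap (\ref{maxCPUfreq})---is clean and complete.
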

\begin{proof}
The proof can be obtained by contradiction, which is omitted for brevity.
%Suppose there exists an optimal $\bm{f}^{t}=\{f_{w}^{t}\}$ with $f^{t}_{i}>f^{t}_{j}, i\neq j$, we construct a new solution %$\tilde{\bm{f}}^{t}=\{\tilde{f}_{w}^{t}\}$ as
%\begin{equation}
%\tilde{f}_{w}^{t}=
%\begin{cases}
%f_{w}^{t}, &w\neq i,j\\
%2f_{i}^{t}f_{j}^{t}\cdot\left(f_{i}^{t}+f_{j}^{t}\right)^{-1}, &w=i,j,
%\end{cases}
%\end{equation}
%which achieves the same execution delay, while reduces the harvested energy consumption by
%\begin{equation}
%\begin{split}
%\Delta E &=\kappa \left(f_{i}^{t}\right)^{2} + \kappa \left(f_{j}^{t}\right)^{2} - 2 \kappa %\left(\frac{2f_{i}^{t}f_{j}^{t}}{f_{i}^{t}+f_{j}^{t}}\right)^{2}\\
%&=\kappa \left(f_{i}^{t}+f_{j}^{t}\right)^{-2}\cdot %\left[\left(\left(f_{i}^{t}\right)^{2}-\left(f_{j}^{t}\right)^{2}\right)^{2}+2f_{i}^{t}f_{j}^{t}\left(f_{i}^{t}-f_{j}^{t}\right)^{2}\right]>0.
%\end{split}
%\end{equation}
%It can be verified that $\tilde{f}_{i}^{t}=\tilde{f}_{j}^{t}<\max\{f_{i}^{t},f_{j}^{t}\}$. Thus, $\tilde{\bm{f}}^{t}$ is also feasible and optimal.
\end{proof}

The property of the optimal CPU-cycle frequencies in Lemma \ref{equalfreq} indicates that we can optimize a scalar $f^{t}$ instead of a $W$-dimensional vector $\bm{f}^{t}$ for each computation task, which helps to reduce the number of optimization variables. However, due to the energy causality constraint (\ref{EHcausality}), the system's decisions are coupled among different time slots, which makes the design challenging. This is a common difficulty for the design of EH systems. We find that by introducing a non-zero lower bound, $E_{\min}$, on the battery output energy at each time slot, such coupling effect can be eliminated and the system operations can be optimized by ignoring (\ref{EHcausality}) at each time slot. Thus, we first introduce a modified version of $\mathcal{P}_{1}$ as
\begin{align}
&\mathcal{P}_{2}:\ \min\limits_{\bm{I}^{t},f^{t},p^{t},e^{t}}\lim\limits_{T\rightarrow \infty}\frac{1}{T}\mathbb{E}\left[\sum_{t=0}^{T-1}{\rm{cost}}^{t}\right]\nonumber\\
&\ \ \ \ \ \ \ \ \ {\rm{s.t.}}\ \ \ (\ref{indicatorconstraint}), (\ref{harvestableconst}), (\ref{EHcausality}), (\ref{deadconst})-(\ref{zeroone})\nonumber\\
&\ \ \ \ \ \ \ \ \ \ \ \ \ \ \ \ \mathcal{E}\left(\bm{I}^{t},f^{t},p^{t}\right)\in \{0\}\bigcup \left[E_{\min},E_{\max}\right],t\in\mathcal{T} \label{tightenedconst},
\end{align}
where $0< E_{\min}\leq E_{\max}$. Compared to $\mathcal{P}_{1}$, only a scalar $f^{t}$ needs to be determined for mobile execution, which preserves optimality according to Lemma \ref{equalfreq}, and thus $D_{\rm{mobile}}^{t}=W\left(f^{t}\right)^{-1}$ and $E_{\rm{mobile}}^{t}=W\kappa \left(f^{t}\right)^{2}$. {Besides, all constraints in $\mathcal{P}_{1}$ are retained in $\mathcal{P}_{2}$, and an additional constraint on the battery output energy is imposed by (\ref{tightenedconst}).} Hence, $\mathcal{P}_{2}$ is a tightened version of $\mathcal{P}_{1}$. Denote the optimal values of $\mathcal{P}_{1}$ and $\mathcal{P}_{2}$ as $\rm{EC}_{\mathcal{P}_{1}}^{*}$ and $\rm{EC}_{\mathcal{P}_{2}}^{*}$, respectively. The following proposition reveals the relationship between $\rm{EC}_{\mathcal{P}_{1}}^{*}$ and $\rm{EC}_{\mathcal{P}_{2}}^{*}$, which will later help show the asymptotic optimality of the proposed algorithm.

\begin{prop}
The optimal value of $\mathcal{P}_{2}$ is greater than that of $\mathcal{P}_{1}$, but smaller than the optimal value of $\mathcal{P}_{1}$ plus a positive constant $\nu\left(E_{\min}\right)$, i.e., ${\rm{EC}}_{\mathcal{P}_{1}}^{*}\leq {\rm{EC}}_{\mathcal{P}_{2}}^{*} \leq {\rm{EC}}_{\mathcal{P}_{1}}^{*}+\nu\left(E_{\min}\right)$, where $\nu\left(E_{\min}\right)=\rho\left[\phi\left(1-F_{H}\left(\eta\right)\right)+\bm{1}_{E_{\min}\geq E_{\min}^{\tau_{d}}}\cdot \left(\phi-\tau_{E_{\min}}\right)\right]$. Here, $\eta=\left(2^{\frac{L}{\tau_{d}\omega}}-1\right)\sigma\tau_{d}E_{\min}^{-1}$, $E_{\min}^{\tau_{d}}=\kappa W^{3} \tau_{d}^{-2}$ and $\tau_{E_{\min}}=\kappa^{\frac{1}{2}}W^{\frac{3}{2}}E_{\min}^{-\frac{1}{2}}$.
\label{tightperf}
\end{prop}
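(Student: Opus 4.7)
The plan is to prove the two inequalities separately; the lower bound is essentially free, while the upper bound carries the real content and produces the constant $\nu(E_{\min})$. For the lower bound, since $\mathcal{P}_{2}$ retains every constraint of $\mathcal{P}_{1}$ and only appends the tightened battery-output constraint (\ref{tightenedconst}), every feasible policy of $\mathcal{P}_{2}$ is also feasible for $\mathcal{P}_{1}$. Minimizing the same objective over a smaller feasible set yields ${\rm EC}_{\mathcal{P}_{1}}^{*}\leq {\rm EC}_{\mathcal{P}_{2}}^{*}$; Lemma~\ref{equalfreq} moreover ensures that collapsing the local-execution frequencies to a common scalar $f^{t}$ in $\mathcal{P}_{2}$ does not open an additional gap.

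For the upper bound I would build a feasible policy $\tilde{\pi}$ for $\mathcal{P}_{2}$ from an optimal policy $\pi^{*}$ of $\mathcal{P}_{1}$. At each slot $t$, let $\mathcal{E}^{*,t}$ denote the battery output of $\pi^{*}$; if $\mathcal{E}^{*,t}\in\{0\}\cup[E_{\min},E_{\max}]$, set $\tilde{\pi}$'s action equal to that of $\pi^{*}$, and otherwise (i.e., $\mathcal{E}^{*,t}\in (0,E_{\min})$) override the decision by setting $I_{\rm d}^{t}=1$ with zero energy usage. Feasibility of $\tilde{\pi}$ in $\mathcal{P}_{2}$ is then immediate: (\ref{tightenedconst}) holds by design; since $\tilde{\pi}$ never consumes more energy than $\pi^{*}$, a slot-by-slot induction from $\tilde{B}^{0}=B^{*,0}=0$ gives $\tilde{B}^{t}\geq B^{*,t}$, so the energy causality (\ref{EHcausality}) is inherited; the remaining per-slot constraints are preserved trivially.

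Next I would bound the per-slot expected excess cost $\Delta^{t}:=\mathbb{E}[\mathrm{cost}_{\tilde{\pi}}^{t}-\mathrm{cost}_{\pi^{*}}^{t}]$, which is supported only on the override slots. Two disjoint subcases arise. (i) If $\pi^{*}$ offloads with $E^{*}\in(0,E_{\min})$, meeting the deadline $\tau_{d}$ through (\ref{delaycloud}) and the monotonicity of $E/r(h,\cdot)$ in $p$ forces $h^{t}\geq (2^{L/(\tau_{d}\omega)}-1)\sigma\tau_{d}/E^{*}>\eta$, so the event lies inside $\{\zeta^{t}=1,\ h^{t}\geq \eta\}$ of probability at most $\rho(1-F_{H}(\eta))$, and replacing the incurred delay (at most $\tau_{d}\leq \phi$) by the drop penalty $\phi$ contributes at most $\phi$ per occurrence. (ii) If $\pi^{*}$ executes locally with $E^{*}\in(0,E_{\min})$, Lemma~\ref{equalfreq} together with (\ref{delaymobile})--(\ref{energymobile}) gives $D_{\rm mobile}^{*,t}=\sqrt{\kappa W^{3}/E^{*}}>\sqrt{\kappa W^{3}/E_{\min}}=\tau_{E_{\min}}$, and the deadline constraint further forces $E^{*}\geq E_{\min}^{\tau_{d}}$; hence this event is empty unless $E_{\min}\geq E_{\min}^{\tau_{d}}$, its probability is at most $\rho\bm{1}_{E_{\min}\geq E_{\min}^{\tau_{d}}}$, and the per-occurrence cost change is $\phi-D_{\rm mobile}^{*,t}\leq \phi-\tau_{E_{\min}}$. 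Summing the two contributions gives $\Delta^{t}\leq \nu(E_{\min})$; taking time averages and sending $T\to\infty$ delivers ${\rm EC}_{\mathcal{P}_{2}}^{*}\leq {\rm EC}_{\tilde{\pi}}\leq {\rm EC}_{\mathcal{P}_{1}}^{*}+\nu(E_{\min})$.

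I expect the main obstacle to be case (ii): the constant is tight only because $\pi^{*}$ cannot consume local energy in $(0,E_{\min})$ at all when $E_{\min}<E_{\min}^{\tau_{d}}$, and when it does consume such energy the delay it already pays is at least $\tau_{E_{\min}}$. Recognizing these two facts is what shaves $\tau_{E_{\min}}$ off the naive ``$\phi$'' penalty and produces the exact form of $\nu(E_{\min})$. A secondary technicality is maintaining long-run energy causality under a possibly history-dependent $\pi^{*}$, which is handled by the pathwise battery-level dominance argument sketched above.
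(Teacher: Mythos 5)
Your proposal is correct and follows essentially the same route as the paper: the lower bound from $\mathcal{P}_2$ being a tightening of $\mathcal{P}_1$, and the upper bound by converting the $\mathcal{P}_1$-optimal policy into a $\mathcal{P}_2$-feasible one that drops any task whose energy draw falls in $(0,E_{\min})$, with the two excess-cost terms (offloading bounded via the channel threshold $\eta$, local execution via $\tau_{E_{\min}}$ and the indicator on $E_{\min}^{\tau_d}$) matching the paper's accounting. Your explicit derivation of $h^{t}>\eta$ from the monotonicity of $p/r(h,p)$ and your pathwise battery-dominance induction are just more detailed versions of steps the paper asserts.
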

\begin{proof}
Please refer to Appendix A.
\end{proof}

In general, the upper bound in Proposition \ref{tightperf} is not tight. However, as $E_{\min}$ goes to zero, $\nu\left(E_{\min}\right)$ diminishes as shown in the following corollary.
\begin{corol}
By letting $E_{\min}$ approach zero, ${\rm{EC}}_{\mathcal{P}_{2}}^{*}$ can be made arbitrarily close to ${\rm{EC}}_{\mathcal{P}_{1}}^{*}$, i.e., $\lim\limits_{E_{\min}\rightarrow 0}\nu\left(E_{\min}\right)=0$.
\label{asym1}
\end{corol}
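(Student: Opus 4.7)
The plan is to show $\nu(E_{\min})\to 0$ by treating its two bracketed summands separately; since $\rho$ is a fixed constant, it suffices to prove that both $\phi\bigl(1-F_{H}(\eta)\bigr)$ and $\bm{1}_{E_{\min}\geq E_{\min}^{\tau_{d}}}\cdot(\phi-\tau_{E_{\min}})$ tend to zero as $E_{\min}\to 0^{+}$.

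For the first summand, I would observe that $\eta=\bigl(2^{L/(\tau_{d}\omega)}-1\bigr)\sigma\tau_{d}\cdot E_{\min}^{-1}$, where the factor $\bigl(2^{L/(\tau_{d}\omega)}-1\bigr)\sigma\tau_{d}$ is a strictly positive constant independent of $E_{\min}$. Hence $\eta\to+\infty$ as $E_{\min}\to 0^{+}$. Since $F_{H}$ is the CDF of the non-negative channel power gain, $\lim_{x\to\infty}F_{H}(x)=1$, so $1-F_{H}(\eta)\to 0$ and the first term vanishes in the limit.

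For the second summand, I would exploit the fact that the threshold $E_{\min}^{\tau_{d}}=\kappa W^{3}\tau_{d}^{-2}$ is a fixed strictly positive constant, completely independent of the parameter $E_{\min}$ being driven to zero. Consequently, for every $E_{\min}<E_{\min}^{\tau_{d}}$ the indicator evaluates to zero and the entire summand is identically zero. In particular, once $E_{\min}$ has dropped below this fixed threshold the contribution is exactly $0$, so its limit as $E_{\min}\to 0^{+}$ is trivially zero. Combining the two limits and factoring out $\rho$ yields $\nu(E_{\min})\to 0$.

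There is no substantive obstacle here; the result is an elementary limit argument. The only points requiring care are (i) keeping straight which quantities in the definition of $\nu(E_{\min})$ are genuine constants and which depend on $E_{\min}$, and (ii) invoking the standard property that $F_{H}$ is a bona fide CDF so that $F_{H}(x)\to 1$ as $x\to\infty$. One might be tempted to worry that the factor $\phi-\tau_{E_{\min}}$ in the second summand blows up as $E_{\min}\to 0$ (since $\tau_{E_{\min}}=\kappa^{1/2}W^{3/2}E_{\min}^{-1/2}\to\infty$), but this growth is irrelevant because the indicator extinguishes the term well before the factor becomes large.
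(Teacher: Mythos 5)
Your proof is correct and is exactly the elementary limit argument the paper has in mind (the authors omit their proof as routine): $\eta \to +\infty$ forces $1-F_{H}(\eta)\to 0$ since $F_{H}$ is a CDF, and the indicator $\bm{1}_{E_{\min}\geq E_{\min}^{\tau_{d}}}$ vanishes identically once $E_{\min}$ drops below the fixed positive threshold $\kappa W^{3}\tau_{d}^{-2}$, so the potentially unbounded factor $\phi-\tau_{E_{\min}}$ never contributes. Your explicit remark on why that factor's divergence is harmless is a worthwhile point of care that the paper does not spell out.
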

%\begin{proof}
%First, since when $E_{\min}< E_{\min}^{\tau_{d}}$, $\bm{1}_{E_{\min}\geq E_{\min}^{\tau_{d}}}\cdot \left(\phi-\tau_{E_{\min}}\right)=0$, we have %$\lim\limits_{E_{\min}\rightarrow 0}\bm{1}_{E_{\min}\geq E_{\min}^{\tau_{d}}}\cdot \left(\phi-\tau_{E_{\min}}\right)=0$. Meanwhile, since $\lim\limits %_{E_{\min}\rightarrow 0}\eta = +\infty$ and $\lim\limits_{x\rightarrow +\infty} F_{H}\left(x\right)=1$, we have $\lim\limits_{E_{\min}\rightarrow 0} %F_{H}\left(\eta\right)=1$, i.e., $\lim\limits_{E_{\min}\rightarrow 0} 1-F_{H}\left(\eta\right)=0$. Thus, the desired result is obtained.
%\end{proof}
\begin{proof}
The proof is omitted due to space limitation.
\end{proof}

Proposition \ref{tightperf} bounds the optimal performance of $\mathcal{P}_{2}$ by that of $\mathcal{P}_{1}$, while Corollary \ref{asym1} shows that the performance of both problems can be made arbitrarily close. Actually, Corollary \ref{asym1} fits our intuition, since when $E_{\min}\rightarrow 0$, $\mathcal{P}_{2}$ reduces to $\mathcal{P}_{1}$. However, due to the temporally correlated battery energy levels, the system's decisions are time-dependent, and thus the vanilla version of Lyapunov optimization techniques, where the allowable action sets are i.i.d., cannot be applied directly. Fortunately, the weighted perturbation method offers an effective solution to circumvent this issue \cite{Neely10CDC}.
In order to present the algorithm, we first define the perturbation parameter and the virtual energy queue at the mobile device, which are two critical elements.
\begin{defi}
The perturbation parameter $\theta$ for the EH mobile device is a bounded constant satisfying
\begin{equation}
\theta \geq \tilde{E}_{\max}+ V \phi \cdot  E_{\min}^{-1},
\label{perturbpara}
\end{equation}
where $\tilde{E}_{\max}=\min\{\max\{\kappa W \left(f_{\rm{CPU}}^{\max}\right)^{2},p_{\rm{tx}}^{\max}\tau\},E_{\max}\}$, and $0<V<+\infty$ is a control parameter in the LODCO algorithm with unit as ${\rm{J}^{2}}\cdot {\rm{second}}^{-1}$.{\footnote{Since the right-hand side of (\ref{perturbpara}) increases with $\phi$ ($\phi\in\left[\tau_{d},+\infty\right)$), a larger value of $\phi$ will result in a large value of $\theta$, i.e., a higher perturbed energy level in the proposed algorithm.}}
\end{defi}

\begin{defi}
The virtual energy queue $\tilde{B}^{t}$ is defined as
%\begin{equation}
$\tilde{B}^{t}=B^{t}-\theta$,
%\label{virtuequeue}
%\end{equation}
which is a shifted version of the actual battery energy level at the mobile device.
\label{virtuequeue}
\end{defi}

As will be elaborated later, the proposed algorithm minimizes the weighted sum of the net harvested energy and the execution cost in each time slot, with weights of the virtual energy queue length $\tilde{B}^{t}$, and the control parameter $V$, respectively, which tends to stabilize $B^{t}$ around $\theta$ and meanwhile minimize the execution cost. The LODCO algorithm is summarized in Algorithm \ref{alg1}. In each time slot, the system operation is determined by solving a deterministic per-time slot problem, which is parameterized by the current system state and with all constraints in $\mathcal{P}_{2}$ except the energy causality constraint (\ref{EHcausality}).

{
\begin{rmk}
When the power consumption for maintaining the basic operations at the mobile device, denoted as $P_{\rm{basic}}$, is considered, there will be four computation modes for the time slots with $\zeta^{t}=1$, i.e., mobile execution ($I_{\rm{m}}^{t}=1$), MEC server execution ($I_{\rm{s}}^{t}=1$), dropping the task while maintaining the basic operations ($I_{\rm{d}}^{t}=1$), as well as dropping the task and disabling the basic operations ($I_{\rm{f}}^{t}=1$); while for the time slots with $\zeta^{t}=0$, two modes exist, i.e., the basic operations are maintained ($I_{\rm{d}}^{t}=1$) or disabled ($I_{\rm{f}}^{t}=1$). As a result, the energy consumed by the mobile device at the $t$th time slot can be written as $\mathcal{E}\left(\bm{I}^{t},f^{t},p^{t}\right)=I_{\rm{m}}^{t}E^{t}_{\rm{mobile}}+I_{\rm{s}}^{t}E^{t}_{\rm{server}}+\left(I_{\rm{m}}^{t}+I_{\rm{s}}^{t}+I_{\rm{d}}^{t}\right)P_{\rm{basic}}\tau$.
We introduce a unit of cost to penalize the interruption of basic operations, and thus the execution cost can be expressed as ${\rm{cost}}^{t}=\mathcal{D}\left(\bm{I}^{t},\bm{f}^{t},p^{t}\right)+\phi \cdot \bm{1}\left(\zeta^{t}=1,I_{\rm{d}}^{t}\ {\rm{or}}\ I_{\rm{f}}^{t}=1\right)+{\psi} \cdot \bm{1}\left(I_{\rm{f}}^{t}=1\right)$, where ${\psi}>0$ is the weight of the basic operations interruption cost. It is worthwhile to note that the framework of the proposed LODCO algorithm can be modified for this case, where the major changes lie on the selection of the perturbation parameter $\theta$ and the solution for the per-time slot problem, and will not be detailed in this paper.
\end{rmk}}

{\color{black}
\begin{algorithm}[h]
\caption{The LODCO Algorithm}
\label{alg1}
\begin{algorithmic}[1]
\STATE At the beginning of time slot $t$, obtain the task request indicator $\zeta^{t}$, virtual energy queue length $\tilde{B}^{t}$, harvestable energy $E_{H}^{t}$, and channel gain $h^{t}$.
\STATE Decide $e^{t}, \bm{I}^{t}$, $f^{t}$ and $p^{t}$ by solving the following deterministic problem:
\begin{align}
&\min_{\bm{I}^{t},p^{t}, f^{t}, e^{t}} \tilde{B}^{t}\left[e^{t}-\mathcal{E}\left(\bm{I}^{t},f^{t},p^{t}\right)\right]+V\left[\mathcal{D}\left(\bm{I}^{t},f^{t},p^{t}\right)+\phi\cdot \bm{1}\left(\zeta^{t}=1, I_{\rm{d}}^{t}=1\right)\right]\nonumber\\
&\ \ \ \mathrm{s.t.}\ \ \ (\ref{indicatorconstraint}), (\ref{harvestableconst}), (\ref{deadconst})-(\ref{tightenedconst})\nonumber.
\end{align}
\STATE Update the virtual energy queue according to (\ref{batterydynamics}) and Definition \ref{virtuequeue}.
\STATE Set $t=t+1$.
\end{algorithmic}
\end{algorithm}}

\subsection{Optimal Computation Offloading in Each Time Slot}
In this subsection, we will develop the optimal solution for the per-time slot problem, which consists of two components: the optimal energy harvesting, i.e., to determine $e^{t}$, as well as the optimal computation offloading decision, i.e., to determine $\bm{I}^{t}$, $f^{t}$ and $p^{t}$. The results obtained in this subsection are essential for feasibility verification and performance analysis of the LODCO algorithm in Section V.

\textbf{Optimal Energy Harvesting:} It is straightforward to show that the optimal amount of harvested energy $e^{t*}$ can be obtained by
solving the following LP problem:
\begin{equation}
\min_{0\leq e^{t}\leq E_{H}^{t}} \tilde{B}^{t}e^{t},
\end{equation}
and its optimal solution is given by
\begin{equation}
e^{t*}=E_{H}^{t}\cdot \bm{1}\{\tilde{B}^{t}\leq 0\}.
\label{optEH}
\end{equation}

\textbf{Optimal Computation Offloading:}
After decoupling $e^{t}$ from the objective function, we can then simplify the per-time slot problem into the following optimization problem $\mathcal{P}_{\rm{CO}}$:
\begin{equation}
\begin{split}
&\mathcal{P}_{\rm{CO}}:\min_{\bm{I}^{t},f^{t},p^{t}} -\tilde{B}^{t}\cdot \mathcal{E}\left(\bm{I}^{t},f^{t},p^{t}\right)+V\left[(\mathcal{D}\left(\bm{I}^{t},f^{t},p^{t}\right)+\phi\cdot \bm{1}\left(\zeta^{t}=1, I_{\rm{d}}^{t}=1\right)\right]\\
&\ \ \ \ \ \ \ \ \ {\rm{s.t.}}\ \ (\ref{indicatorconstraint}), (\ref{deadconst})-(\ref{tightenedconst}).
\end{split}
\end{equation}
Denote the feasible action set and the objective function of $\mathcal{P}_{\rm{CO}}$ as $\mathcal{F}^{t}_{\rm{CO}}$ and $J^{t}_{\rm{CO}}\left(\bm{I}^{t},f^{t},p^{t}\right)$, respectively. For the time slots without computation task request, i.e., $\zeta^{t}=0$, there is a single feasible solution for $\mathcal{P}_{\rm{CO}}$ due to (\ref{nonemptytask}), which is given by $I_{\rm{m}}^{t}=I_{\rm{s}}^{t}=0$, $I^{t}_{\rm{d}}=1$, $f^{t}=0$, and $p^{t}=0$. Thus, we will focus on the time slots with computation task requests in the following. First, we obtain the optimal CPU-cycle frequency for a task being executed locally at the mobile device by solving the following optimization problem $\mathcal{P}_{\rm{ME}}$:
\begin{align}
&\mathcal{P}_{\rm{ME}}: \min_{f^{t}} -\tilde{B}^{t}\cdot \kappa W \left(f^{t}\right)^{2}+V\cdot \frac{W}{f^{t}}\nonumber\\
&\ \ \ \ \ \ \ \ {\rm{s.t.}}\ \ \ 0<f^{t}\leq f_{\rm{CPU}}^{\max}\label{maxCPUfreq2}\\
&\ \ \ \ \ \ \ \ \ \ \ \ \ \ \frac{W}{f^{t}}\leq \tau_{d} \label{delayME}\\
&\ \ \ \ \ \ \ \ \ \ \ \ \ \ \kappa W \left(f^{t}\right)^{2}\in\left[E_{\min},E_{\max}\right] \label{energyME},
\end{align}
which is obtained by plugging $I_{\rm{m}}^{t}=1$, $I_{\rm{s}}^{t}=I_{\rm{d}}^{t}=0$ and $p^{t}=0$ into {$\mathcal{P}_{\rm{CO}}$}, and using the fact that $f^{t}>0$ for local execution. (\ref{delayME}) is the execution delay constraint for mobile execution, and (\ref{energyME}) is the CPU energy consumption constraint obtained by combining (\ref{discharging}) and (\ref{tightenedconst}). We denote the objective function of $\mathcal{P}_{\rm{ME}}$ as $J_{\rm{m}}^{t}\left(f^{t}\right)$. Note that mobile execution is not necessarily feasible due to limited computation capability of the processing unit at the mobile device as indicated by (\ref{maxCPUfreq2}). In the following proposition, we develop the feasibility condition and the optimal solution for $\mathcal{P}_{\rm{ME}}$ given it is feasible.
\begin{prop}
$\mathcal{P}_{\rm{ME}}$ is feasible if and only if $f_{L}\leq f_{U}$, where $f_{L}=\max\{\sqrt{\frac{E_{\min}}{\kappa W}},\frac{W}{\tau_{d}}\}$ and $f_{U}=\min\{\sqrt{\frac{E_{\max}}{\kappa W}},f_{\max}\}$. If $\mathcal{P}_{\rm{ME}}$ is feasible, its optimal solution is given by:
\begin{equation}
f^{t*}=\begin{cases}
f_{U}, &\tilde{B}^{t}\geq 0\ {\rm{or}}\ \tilde{B}^{t}<0, f_{0}^{t}> f_{U}\\
f_{0}^{t}, &\tilde{B}^{t}<0, f_{L}\leq f_{0}^{t} \leq f_{U}\\
f_{L}, &\tilde{B}^{t}<0, f_{0}^{t}<f_{L},
\end{cases}
\label{optCPU}
\end{equation}
where $f^{t}_{0}=\left(\frac{V}{-2\tilde{B}^{t}\kappa}\right)^{\frac{1}{3}}$.
\label{optCPUlma}
\end{prop}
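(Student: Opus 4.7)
The plan is to reduce $\mathcal{P}_{\rm{ME}}$ to a univariate convex/monotone optimization on a box constraint in $f^{t}$, and then read off the solution by comparing the unconstrained stationary point $f_{0}^{t}$ with the interval endpoints. First I will collapse the three inequality constraints into one: the deadline constraint (\ref{delayME}) is equivalent to $f^{t}\geq W/\tau_{d}$, and the energy constraint (\ref{energyME}) is equivalent to $\sqrt{E_{\min}/(\kappa W)}\leq f^{t}\leq \sqrt{E_{\max}/(\kappa W)}$. Intersecting these with (\ref{maxCPUfreq2}) yields $f_{L}\leq f^{t}\leq f_{U}$, so $\mathcal{P}_{\rm{ME}}$ is feasible if and only if $f_{L}\leq f_{U}$.

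Next I will analyze $J_{\rm{m}}^{t}(f^{t})=-\tilde{B}^{t}\kappa W (f^{t})^{2}+VW/f^{t}$ on $[f_{L},f_{U}]$ by computing its first two derivatives,
\begin{equation}
\frac{d J_{\rm{m}}^{t}}{d f^{t}}=-2\tilde{B}^{t}\kappa W f^{t}-\frac{VW}{(f^{t})^{2}},\qquad \frac{d^{2} J_{\rm{m}}^{t}}{d (f^{t})^{2}}=-2\tilde{B}^{t}\kappa W+\frac{2VW}{(f^{t})^{3}}.
\end{equation}
Then I would split into two cases according to the sign of $\tilde{B}^{t}$. If $\tilde{B}^{t}\geq 0$, both terms of $J_{\rm{m}}^{t}$ are non-increasing in $f^{t}>0$ (the quadratic term has a non-positive coefficient and the reciprocal term is strictly decreasing), so $J_{\rm{m}}^{t}$ is monotonically non-increasing and the minimizer over $[f_{L},f_{U}]$ is $f_{U}$, matching the first branch of (\ref{optCPU}).

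If instead $\tilde{B}^{t}<0$, the second derivative is strictly positive on $f^{t}>0$, so $J_{\rm{m}}^{t}$ is strictly convex. Setting the first derivative to zero and solving gives the unique unconstrained stationary point $f_{0}^{t}=\bigl(V/(-2\tilde{B}^{t}\kappa)\bigr)^{1/3}$. By convexity, the minimizer on the interval $[f_{L},f_{U}]$ is simply the projection of $f_{0}^{t}$ onto this interval, which reproduces the three sub-cases $f^{t*}=f_{U}$ when $f_{0}^{t}>f_{U}$, $f^{t*}=f_{0}^{t}$ when $f_{L}\leq f_{0}^{t}\leq f_{U}$, and $f^{t*}=f_{L}$ when $f_{0}^{t}<f_{L}$.

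The proof is essentially routine single-variable convex optimization; I do not foresee a substantive technical obstacle. The only place that requires a little care is the case $\tilde{B}^{t}\geq 0$, where $J_{\rm{m}}^{t}$ is not convex but one must still argue that monotonicity alone is enough to pin the optimum at the right endpoint, and the bookkeeping of verifying that the three interval-endpoint/interior branches in (\ref{optCPU}) exhaust all possibilities once feasibility $f_{L}\leq f_{U}$ holds.
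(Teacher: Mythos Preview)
Your proposal is correct and follows essentially the same approach as the paper: collapse the three constraints into the box $[f_{L},f_{U}]$, argue monotonicity of $J_{\rm{m}}^{t}$ when $\tilde{B}^{t}\geq 0$, and use convexity plus the unique stationary point $f_{0}^{t}$ when $\tilde{B}^{t}<0$. The only cosmetic difference is that the paper establishes convexity by noting that $J_{\rm{m}}^{t}$ is a sum of two convex functions, whereas you verify it via the sign of the second derivative; both arguments are equivalent.
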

\begin{proof}
We first show the feasibility condition. Due to (\ref{delayME}), $f^{t}$ should be no less than $W\slash \tau_{d}$ in order to meet the delay constraint. Besides, since the CPU energy consumption increases with $f^{t}$, the battery output energy constraint can be equivalently expressed as $\sqrt{\frac{E_{\min}}{\kappa W}} \leq f^{t}\leq \sqrt{\frac{E_{\max}}{\kappa W}}$. By incorporating (\ref{maxCPUfreq2}), we rewrite the feasible CPU-cycle frequency set as $f_{L}=\max\{\sqrt{\frac{E_{\min}}{\kappa W}},W\slash \tau_{d}\}\leq f^{t}\leq f_{U}=\min\{\sqrt{\frac{E_{\max}}{\kappa W}},f_{\max}\}$, i.e., $\mathcal{P}_{\rm{ME}}$ is feasible if and only if $f_{L}\leq f_{U}$.

Next, we proceed to show the optimality of (\ref{optCPU}) when $\mathcal{P}_{\rm{ME}}$ is feasible. When $\tilde{B}^{t}\geq 0$, $J_{\rm{m}}^{t}\left(f^{t}\right)$ decreases with $f^{t}$, i.e., the minimum value is achieved by $f^{t}=f_{U}$. When $\tilde{B}^{t}<0$, $J_{\rm{m}}^{t}\left(f^{t}\right)$ is convex with respect to $f^{t}$ as both $-\tilde{B}^{t}\kappa W\left(f^{t}\right)^{2}$ and $VW\slash f^{t}$ are convex functions of $f^{t}$. By taking the first-order derivative of $J_{\rm{m}}^{t}\left(f^{t}\right)$ and setting it to zero, we obtain a unique solution $f_{0}^{t}=\left(\frac{V}{-2\tilde{B}^{t}\kappa}\right)^{\frac{1}{3}}>0$. If $f_{0}^{t}<f_{L}$, $J_{\rm{m}}^{t}\left(f^{t}\right)$ is increasing in $\left[f_{L},f_{U}\right]$, and thus $f^{t*}=f_{L}$; if $f_{0}^{t}>f_{U}$, $J_{\rm{m}}^{t}\left(f^{t}\right)$ is decreasing in $\left[f_{L},f_{U}\right]$, and thus $f^{t*}=f_{U}$; otherwise, if $f_{L}\leq f_{0}^{t} \leq f_{U}$, $J_{\rm{m}}^{t}\left(f^{t}\right)$ is decreasing in $\left[f_{L},f_{0}^{t}\right]$ and increasing in $\left(f_{0}^{t},f_{U}\right]$, and we have $f^{t*}=f_{0}^{t}$.
\end{proof}

It can be seen from Proposition \ref{optCPUlma} that the optimal CPU-cycle frequency is chosen by balancing the cost of the harvested energy and the execution cost. Interestingly, we find that a higher CPU-cycle frequency, i.e., lower execution delay, can be supported with a greater amount of available harvested energy, which is because that the cost of renewable energy is reduced and more energy can be used to enhance the user's QoE, as demonstrated in Corollary \ref{property1}.
\begin{corol}
The optimal CPU-cycle frequency for local execution $f^{t*}$ is independent with the channel gain $h^{t}$, and non-decreasing with the virtual energy queue length $\tilde{B}^{t}$.
\label{property1}
\end{corol}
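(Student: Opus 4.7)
The plan is to derive both claims directly from the closed-form characterization of $f^{t*}$ in Proposition~\ref{optCPUlma}. For the first claim, I would simply inspect the three branches of~(\ref{optCPU}): the bounds $f_L$ and $f_U$ are built from $E_{\min}$, $E_{\max}$, $\kappa$, $W$, $\tau_{d}$, and $f_{\rm{CPU}}^{\max}$, while the interior critical point $f_0^t$ depends only on $V$, $\tilde{B}^t$, and $\kappa$. None of these quantities involves $h^t$, so $f^{t*}$ is independent of the channel gain; intuitively this reflects that local execution does not use the wireless link, and $h^t$ enters the per-time-slot problem only through the server branch.

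For the monotonicity claim, I would split on the sign of $\tilde{B}^t$. When $\tilde{B}^t\geq 0$, Proposition~\ref{optCPUlma} gives $f^{t*}=f_U$, a constant in $\tilde{B}^t$. When $\tilde{B}^t<0$, the key object is $f_0^t=\bigl(V/(-2\tilde{B}^t\kappa)\bigr)^{1/3}$. I would first verify that $f_0^t$ is strictly increasing in $\tilde{B}^t$ on $(-\infty,0)$: as $\tilde{B}^t$ grows toward zero, $-2\tilde{B}^t\kappa$ shrinks, so $f_0^t$ grows, and in fact $f_0^t\to\infty$ as $\tilde{B}^t\to 0^-$. Consequently the three subcases in~(\ref{optCPU}) correspond to three consecutive sub-intervals of $(-\infty,0)$: a leftmost one where $f_0^t<f_L$ and $f^{t*}=f_L$, a middle one where $f_L\leq f_0^t\leq f_U$ and $f^{t*}=f_0^t$, and a rightmost one where $f_0^t>f_U$ and $f^{t*}=f_U$.

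The composite map $\tilde{B}^t\mapsto f^{t*}$ is then the clipping $\max\{f_L,\min\{f_U,f_0^t\}\}$ of an increasing continuous function of $\tilde{B}^t$, and such truncation preserves monotonicity. At the join $\tilde{B}^t=0$, the fact that $f_0^t\to\infty>f_U$ as $\tilde{B}^t\to 0^-$ ensures the piecewise function is continuous and matches the constant value $f_U$ on $\tilde{B}^t\geq 0$. Combining the two regimes yields the desired monotonicity on all of $\mathbb{R}$.

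I do not anticipate any serious obstacle: the corollary is essentially an observation about the explicit piecewise formula~(\ref{optCPU}). The only subtlety worth spelling out is the boundary behavior at $\tilde{B}^t=0$, where one must argue via the divergence of $f_0^t$ that the pieces splice together without creating a downward jump; everything else reduces to monotonicity of an elementary cube-root expression and the fact that min/max with constants preserves monotonicity.
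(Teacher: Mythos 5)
Your proposal is correct and follows essentially the same route as the paper's own proof: both arguments note that neither the problem $\mathcal{P}_{\rm{ME}}$ nor the closed form (\ref{optCPU}) involves $h^{t}$, and both derive monotonicity from the facts that $f_{L},f_{U}$ are constants while $f_{0}^{t}$ increases in $\tilde{B}^{t}$ for $\tilde{B}^{t}<0$. Your write-up merely spells out the clipping structure $\max\{f_{L},\min\{f_{U},f_{0}^{t}\}\}$ and the splice at $\tilde{B}^{t}=0$ in more detail than the paper does.
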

\begin{proof}
Since $\mathcal{P}_{\rm{ME}}$ does not depend on $h^{t}$, the optimal CPU-cycle frequency is independent with the channel state. As $f_{L}$ and $f_{U}$ are constants independent with $\tilde{B}^{t}$, and $f_{0}^{t}$ increases with $\tilde{B}^{t}$ for $\tilde{B}^{t}<0$, we can conclude that $f^{t*}$ is non-decreasing with $\tilde{B}^{t}$ based on (\ref{optCPU}).
\end{proof}

Next, we will consider the case that the task is executed by the MEC server, where the optimal transmit power for computation offloading can be obtained by solving the following optimization problem $\mathcal{P}_{\rm{SE}}$:
\begin{align}
&\mathcal{P}_{\rm{SE}}: \min_{p^{t}} -\tilde{B}^{t}\cdot \frac{p^{t}L}{r\left(h^{t},p^{t}\right)} +V\cdot \frac{L}{r\left(h^{t},p^{t}\right)}\nonumber\\
&\ \ \ \ \ \ \ {\rm{s.t.}}\ \ 0<p^{t}\leq p^{\max}_{\rm{tx}}\label{maxtranspwr2}\\
&\ \ \ \ \ \ \ \ \ \ \ \ \ \frac{L}{r\left(h^{t},p^{t}\right)}\leq \tau_{d} \label{delayCE}\\
&\ \ \ \ \ \ \ \ \ \ \ \ \ \frac{p^{t}L}{r\left(h^{t},p^{t}\right)} \in\left[E_{\min},E_{\max}\right] \label{energyCE},
\end{align}
which is obtained by plugging $I_{\rm{s}}^{t}=1$, $I_{\rm{m}}^{t}=I_{\rm{d}}^{t}=0$ and $f^{t}=0$ into $\mathcal{P}_{\rm{CO}}$, and using the fact that $p^{t}>0$ for computation offloading. (\ref{delayCE}) and (\ref{energyCE}) stand for the execution delay constraint and the battery output energy constraint for MEC, respectively. We denote the objective function of $\mathcal{P}_{\rm{SE}}$ as $J_{\rm{s}}^{t}\left(p^{t}\right)$. Due to the wireless fading, it may happen that computation offloading is infeasible. In order to derive the feasibility condition and the optimal solution for $\mathcal{P}_{\rm{SE}}$ given it is feasible, we first provide the following lemma to facilitate the analysis.
\begin{lma}
For $h > 0$, $g_{1}\left(h,p\right)\triangleq \frac{p}{r\left(h,p\right)}$ is an increasing function of $p$ ($p>0$) that takes value from $\left(\sigma \ln2 \left(\omega h\right)^{-1},+\infty\right)$.
\label{nondecreasingEng}
\end{lma}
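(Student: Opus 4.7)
The plan is to reduce the claim to a one-variable monotonicity fact about the function $\varphi(u)=u/\ln(1+u)$. Substituting $u=hp/\sigma$ into the definition of $r(h,p)=\omega\log_{2}(1+hp/\sigma)$ gives
\begin{equation}
g_{1}(h,p)=\frac{p}{r(h,p)}=\frac{\sigma\ln 2}{\omega h}\cdot \frac{u}{\ln(1+u)}=\frac{\sigma\ln 2}{\omega h}\,\varphi(u).
\end{equation}
Since $h>0$ and $u=hp/\sigma$ is a strictly increasing affine function of $p$ that maps $(0,\infty)$ onto $(0,\infty)$, it suffices to show that $\varphi$ is strictly increasing on $(0,\infty)$ and to compute its one-sided limits at $0$ and $\infty$.

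For the monotonicity of $\varphi$, I would differentiate once:
\begin{equation}
\varphi'(u)=\frac{\ln(1+u)-u/(1+u)}{[\ln(1+u)]^{2}},
\end{equation}
so the sign of $\varphi'(u)$ is the sign of the numerator $N(u)\triangleq\ln(1+u)-u/(1+u)$. Note $N(0)=0$ and
\begin{equation}
N'(u)=\frac{1}{1+u}-\frac{1}{(1+u)^{2}}=\frac{u}{(1+u)^{2}}>0 \quad \text{for } u>0.
\end{equation}
Hence $N(u)>0$ for all $u>0$, giving $\varphi'(u)>0$ and establishing strict monotonicity of $g_{1}(h,\cdot)$.

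For the range, I would evaluate the boundary behaviour of $\varphi$. A Taylor expansion yields $\lim_{u\downarrow 0}\varphi(u)=\lim_{u\downarrow 0}u/\ln(1+u)=1$, so $\lim_{p\downarrow 0}g_{1}(h,p)=\sigma\ln 2/(\omega h)$, which is the infimum but not attained. At the other end, $\ln(1+u)$ grows only logarithmically, so $\varphi(u)\to\infty$, giving $g_{1}(h,p)\to\infty$ as $p\to\infty$. Combined with continuity and strict monotonicity, the image is exactly the open interval $(\sigma\ln 2/(\omega h),+\infty)$, as claimed.

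The only nontrivial step is the sign analysis of $N(u)$; everything else is bookkeeping with the substitution and elementary limits. I do not anticipate a real obstacle, since $N(u)>0$ for $u>0$ is a standard inequality that falls out immediately once one differentiates twice as above.
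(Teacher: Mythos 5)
Your proof is correct and complete. The paper itself omits the proof of this lemma (``due to space limitation''), so there is no argument to compare against; your reduction to the monotonicity of $\varphi(u)=u/\ln(1+u)$ via the substitution $u=hp/\sigma$, together with the sign analysis of $N(u)=\ln(1+u)-u/(1+u)$ and the boundary limits $\varphi(0^{+})=1$ and $\varphi(u)\to\infty$, is exactly the standard way to establish both the strict monotonicity and the stated range $\left(\sigma\ln 2\left(\omega h\right)^{-1},+\infty\right)$.
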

\begin{proof}
The proof is omitted due to space limitation.
\end{proof}

%\begin{proof}
%The first-order partial derivative of $g_{1}\left(h,p\right)$ with respect to $p$ is given by
%\begin{equation}
%\frac{\partial g_{1}\left(h,p\right)}{\partial p}=\frac{ %\log_{2}\left(1+\frac{hp}{\sigma}\right)-\frac{ \frac{h p}{\sigma} }{\left(1 + %\frac{hp}{\sigma}\right)\ln 2}}{\omega \log_{2}^{2}\left(1+\frac{hp}{\sigma}\right)},
%\end{equation}
%which is positive due to the inequality $\ln\left(1+x\right)>\frac{x}{1+x},\forall x> 0$. %Thus, $g_{1}\left(h,p\right)$ increases with $p$ ($p>0$) for a given $h>0$. In addition, since %$\lim\limits_{p\rightarrow 0} g_{1}\left(h,p\right)=\sigma \ln2 \left(\omega h\right)^{-1}$ %and $\lim\limits_{p\rightarrow +\infty} g_{1}\left(h,p\right)=+\infty$, we have %$g_{1}\left(h,p\right)\in \left(\sigma \ln2 \left(\omega h\right)^{-1},+\infty\right)$.
%\end{proof}

Based on Lemma \ref{nondecreasingEng}, we combine constraints (\ref{maxtranspwr2})-(\ref{energyCE}) into an inequality and obtain the feasibility condition for $\mathcal{P}_{\rm{SE}}$, as demonstrated in the following lemma.
\begin{lma}
$\mathcal{P}_{\rm{SE}}$ is feasible if and only if $p_{L}^{t}\leq p_{U}^{t}$, where $p_{L}^{t}$ and $p_{U}^{t}$ are defined as
\begin{equation}
p^{t}_{L}\triangleq
\begin{cases}
p_{L,\tau_{d}}^{t}, & \frac{\sigma L \ln 2}{\omega h^{t}} \geq E_{\min}\\
\max\{p_{L,\tau_{d}}^{t},p_{E_{\min}}^{t}\}, & \frac{\sigma L \ln 2}{\omega h^{t}} < E_{\min}
\end{cases}\ {\rm{and}}\
%\label{eqOLEng1}
%\end{equation}
%\begin{equation}
p^{t}_{U}\triangleq
\begin{cases}
\min\{p_{\rm{tx}}^{\max},p^{t}_{E_{\max}}\}, & \frac{\sigma L \ln 2}{\omega h^{t}} < E_{\max}\\
0, & \frac{\sigma L \ln 2}{\omega h^{t}} \geq E_{\max},
\end{cases}
\label{eqOLEng2}
\end{equation}
respectively. In (\ref{eqOLEng2}), $p_{L,\tau_{d}}^{t}\triangleq\left(2^{\frac{L}{\omega \tau_{d}}}-1\right)\sigma \slash h^{t}$,  $p_{E_{\min}}^{t}$ is the unique solution for $pL=r\left(h^{t},p\right)E_{\min}$ given $\sigma L \ln 2\left(\omega h^{t}\right)^{-1} < E_{\min}$, and $p^{t}_{E_{\max}}$ is the unique solution for $pL=r\left(h^{t},p\right)E_{\max}$ given $\sigma L \ln 2\left(\omega h^{t}\right)^{-1} < E_{\max}$.
\label{feasOLlma}
\end{lma}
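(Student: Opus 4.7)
The plan is to translate each of the three constraints (\ref{maxtranspwr2})--(\ref{energyCE}) on the scalar variable $p^{t}>0$ into either a lower bound or an upper bound, and then recognize that $\mathcal{P}_{\rm{SE}}$ is feasible exactly when the largest lower bound does not exceed the smallest upper bound. The delay constraint (\ref{delayCE}) is the easiest: rearranging $L\slash r(h^{t},p^{t})\leq\tau_{d}$ via the Shannon formula gives the explicit threshold $p^{t}\geq p_{L,\tau_{d}}^{t}=(2^{L\slash(\omega\tau_{d})}-1)\sigma\slash h^{t}$. The box constraint (\ref{maxtranspwr2}) directly contributes $p^{t}\leq p_{\rm{tx}}^{\max}$.

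The substance of the proof lies in handling the two-sided energy constraint (\ref{energyCE}), which should be rewritten as $E_{\min}\slash L \leq g_{1}(h^{t},p^{t}) \leq E_{\max}\slash L$ using the notation of Lemma~\ref{nondecreasingEng}. I would then invoke Lemma~\ref{nondecreasingEng} to conclude that $g_{1}(h^{t},\cdot)$ is strictly increasing from its infimum $\sigma\ln 2\slash(\omega h^{t})$ to $+\infty$. This immediately forces a case split according to where $\sigma L\ln 2\slash(\omega h^{t})$, i.e.\ the scaled infimum value $L\cdot\inf_{p>0}g_{1}(h^{t},p)$, lies relative to $E_{\min}$ and $E_{\max}$.

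For the lower energy constraint $g_{1}(h^{t},p^{t})\geq E_{\min}\slash L$: if $\sigma L\ln 2\slash(\omega h^{t})\geq E_{\min}$, then every admissible value $g_{1}(h^{t},p^{t})$ already exceeds $E_{\min}\slash L$, so no new lower bound is imposed and $p_{L}^{t}$ reduces to $p_{L,\tau_{d}}^{t}$; otherwise, strict monotonicity and continuity of $g_{1}$ yield a unique root $p_{E_{\min}}^{t}$ of $pL=r(h^{t},p)E_{\min}$, and the constraint becomes $p^{t}\geq p_{E_{\min}}^{t}$, so combining with the delay bound gives $p_{L}^{t}=\max\{p_{L,\tau_{d}}^{t},p_{E_{\min}}^{t}\}$. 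A symmetric argument handles the upper energy constraint $g_{1}(h^{t},p^{t})\leq E_{\max}\slash L$: when $\sigma L\ln 2\slash(\omega h^{t})\geq E_{\max}$, the range of $g_{1}$ lies strictly above $E_{\max}\slash L$ and no positive $p^{t}$ satisfies the constraint, which I encode as $p_{U}^{t}=0$ (forcing infeasibility because $p^{t}>0$); otherwise a unique root $p_{E_{\max}}^{t}$ exists and, after intersecting with (\ref{maxtranspwr2}), yields $p_{U}^{t}=\min\{p_{\rm{tx}}^{\max},p_{E_{\max}}^{t}\}$.

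Finally, since every constraint has been reduced to either $p^{t}\geq p_{L}^{t}$ or $p^{t}\leq p_{U}^{t}$, feasibility is equivalent to $p_{L}^{t}\leq p_{U}^{t}$ (the boundary case $p_{L}^{t}=p_{U}^{t}>0$ produces a single feasible point, and $p_{U}^{t}=0$ is automatically infeasible). The only delicate point, and the one I expect to write out carefully, is the existence-and-uniqueness argument for $p_{E_{\min}}^{t}$ and $p_{E_{\max}}^{t}$: Lemma~\ref{nondecreasingEng} supplies continuity, strict monotonicity, and the exact image $(\sigma\ln 2\slash(\omega h^{t}),+\infty)$ of $g_{1}(h^{t},\cdot)$, so the intermediate value theorem plus injectivity give each root, provided the corresponding case-split inequality holds. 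No further subtleties arise, and the lemma follows.
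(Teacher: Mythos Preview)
Your proposal is correct and follows exactly the approach the paper intends: the paper's own proof is omitted for brevity and merely states that the result ``can be obtained based on Lemma~\ref{nondecreasingEng},'' which is precisely what you do by using the monotonicity and range of $g_{1}(h^{t},\cdot)$ to convert each constraint into an explicit half-line and then intersect. There is nothing to add.
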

\begin{proof}
The proof can be obtained based on Lemma \ref{nondecreasingEng}, which is omitted for brevity.
\end{proof}

We now develop the optimal solution for $\mathcal{P}_{\rm{SE}}$ as specified in the following proposition.
\begin{prop}
If $\mathcal{P}_{\rm{SE}}$ is feasible, i.e., $p_{L}^{t}\leq p_{U}^{t}$, its optimal solution is given by
\begin{equation}
p^{t*}=
\begin{cases}
p^{t}_{U}, &\tilde{B}^{t}\geq 0\ {\rm{or}}\ \tilde{B}^{t}<0, p_{U}^{t}<p_{0}^{t}\\
p^{t}_{L}, &\tilde{B}^{t}<0, p_{L}^{t}>p_{0}^{t}\\
p^{t}_{0}, &\tilde{B}^{t}<0, p_{L}^{t}\leq p_{0}^{t}\leq p_{U}^{t},
\end{cases}
\label{optOLpwr}
\end{equation}
where $p^{t}_{0}$ is the unique solution for equation $\Xi\left(h^{t},p,\tilde{B}^{t}\right)=0$ and $\Xi\left(h,p,\tilde{B}\right)\triangleq -\tilde{B}\log_{2}\left(1+\frac{hp}{\sigma}\right)-\frac{h}{\left(\sigma + hp\right)\ln 2}\left(V-\tilde{B}p\right)$.
\label{optOLpwrlma}
\end{prop}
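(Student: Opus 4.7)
The plan is to analyze the sign of $dJ_{\rm{s}}^{t}/dp$ on the feasibility interval $[p_L^{t},p_U^{t}]$ established in Lemma \ref{feasOLlma}, and read off the optimizer from the resulting shape of $J_{\rm{s}}^{t}$. The first step is a routine differentiation. Writing $r=r(h^{t},p)=\omega\log_{2}(1+h^{t}p/\sigma)$, one has $dr/dp=\omega h^{t}/[(\sigma+h^{t}p)\ln 2]$, and differentiating $J_{\rm{s}}^{t}(p)=L(V-\tilde{B}^{t}p)/r$ collects the pieces into
\begin{equation}
\frac{dJ_{\rm{s}}^{t}}{dp}=\frac{L\omega}{r^{2}}\,\Xi(h^{t},p,\tilde{B}^{t}).
\end{equation}
The prefactor $L\omega/r^{2}$ is strictly positive, so all the sign information is encoded in $\Xi$, which matches the stationarity condition $\Xi(h^{t},p,\tilde{B}^{t})=0$ appearing in the statement.

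Next I would dispatch the easy case $\tilde{B}^{t}\ge 0$. Setting $x=h^{t}p/\sigma$, the $\tilde{B}^{t}$-dependent part of $\Xi$ rearranges to $-\tilde{B}^{t}[\ln(1+x)-x/(1+x)]/\ln 2$, which is nonpositive by the elementary inequality $\ln(1+x)\ge x/(1+x)$ for $x\ge 0$. The remaining piece $-h^{t}V/[(\sigma+h^{t}p)\ln 2]$ is strictly negative, so $\Xi<0$ throughout $[p_L^{t},p_U^{t}]$, $J_{\rm{s}}^{t}$ is strictly decreasing there, and the minimum is attained at the right endpoint $p_U^{t}$, as claimed.

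For $\tilde{B}^{t}<0$, the plan is to show that $\Xi(h^{t},\cdot,\tilde{B}^{t})$ is strictly increasing and crosses zero exactly once. Differentiating, the two contributions proportional to $\tilde{B}^{t}h^{t}p$ cancel and one obtains
\begin{equation}
\frac{\partial\Xi}{\partial p}=\frac{(h^{t})^{2}(V-\tilde{B}^{t}p)}{(\sigma+h^{t}p)^{2}\ln 2}>0,
\end{equation}
because $V-\tilde{B}^{t}p>V>0$ for every $p\ge 0$. Combined with $\Xi(h^{t},0,\tilde{B}^{t})=-h^{t}V/(\sigma\ln 2)<0$ and $\Xi\to+\infty$ as $p\to\infty$, this delivers existence and uniqueness of the stationary point $p_{0}^{t}>0$ and shows that $J_{\rm{s}}^{t}$ is strictly decreasing on $(0,p_{0}^{t})$ and strictly increasing on $(p_{0}^{t},\infty)$. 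The three sub-cases in (\ref{optOLpwr}) then follow by the standard boundary argument: the minimizer over $[p_L^{t},p_U^{t}]$ is $p_U^{t}$ when $p_{0}^{t}$ lies above the interval, $p_L^{t}$ when it lies below, and $p_{0}^{t}$ when it lies inside.

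The main obstacle I anticipate is mechanical rather than conceptual: verifying carefully that $dJ_{\rm{s}}^{t}/dp$ really equals $L\omega\,\Xi/r^{2}$ and that $\partial\Xi/\partial p$ collapses to the clean form above, since the cancellation of the $-\tilde{B}^{t}h^{t}p$ terms is where a sign slip would do the most damage. Once those two identities are in hand, the rest of the argument reduces to tracking signs on a strictly monotone scalar function, which is routine.
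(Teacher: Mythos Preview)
Your proposal is correct and follows essentially the same route as the paper: compute the derivative of $J_{\rm s}^{t}$, identify its sign with that of $\Xi$, show $\Xi$ is strictly increasing in $p$ with a unique zero $p_0^{t}$ when $\tilde{B}^{t}<0$, and finish with the standard endpoint comparison. The only cosmetic differences are that the paper disposes of the case $\tilde{B}^{t}\geq 0$ by noting directly that both summands of $J_{\rm s}^{t}$ are non-increasing (invoking the monotonicity of $p/r(h,p)$ from Lemma~\ref{nondecreasingEng}) rather than via your $\ln(1+x)\geq x/(1+x)$ argument on $\Xi$, and the paper merely asserts $\partial\Xi/\partial p>0$ whereas you display the closed form $(h^{t})^{2}(V-\tilde{B}^{t}p)/[(\sigma+h^{t}p)^{2}\ln 2]$ explicitly.
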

\begin{proof}
When $\tilde{B}^{t}\geq 0$, since both terms in $J_{\rm{s}}^{t}\left(p^{t}\right)$ are non-increasing with $p^{t}$, we have $p^{t*}=p_{U}^{t}$. When $\tilde{B}^{t}<0$, we define $g_{2}\left(h,p,\tilde{B}\right)\triangleq -\frac{\tilde{B} p}{r\left(h,p\right)} +\frac{V}{r\left(h,p\right)}$, and thus
\begin{equation}
\frac{d g_{2}\left(h^{t},p,\tilde{B}^{t}\right)}{d p}=\frac{-\tilde{B}^{t}\log_{2}\left(1+\frac{h^{t}p}{\sigma}\right)-\frac{h^{t}}{\left(h^{t}p+\sigma\right)\ln 2}\left(-\tilde{B}^{t}p+V\right)}{\omega \log_{2}^{2}\left(1+\frac{h^{t}p}{\sigma}\right)}
\triangleq \frac{\Xi\left(h^{t},p,\tilde{B}^{t}\right)}{\omega \log_{2}^{2}\left(1+\frac{h^{t}p}{\sigma}\right)}.
\label{derig2}
\end{equation}
Since $\frac{d\Xi\left(h^{t},p,\tilde{B}^{t}\right)}{dp}>0$,
%\begin{equation}
%\begin{split}
%\frac{d\Xi\left(h^{t},p,\tilde{B}^{t}\right)}{dp}
%=\frac{h^{t}}{\ln 2}\cdot %\left[\frac{-\tilde{B}^{t}}{h^{t}p+\sigma}-\frac{-\tilde{B}^{t}\left(h^{t}p+\sigma\right)-h^{t}\left(-\tilde{B}^{t}p+V\right)}{\left(h^{t}p+\sigma\right)^{2}}\right]
%=\frac{\left(h^{t}\right)^{2}}{\ln 2}\cdot %\frac{-\tilde{B}^{t}p+V}{\left(h^{t}p+\sigma\right)^{2}}> 0,
%\end{split}
%\end{equation}
$\Xi\left(h^{t},p,\tilde{B}^{t}\right)$ increases with $p$. In addition, as $\Xi\left(h^{t},0,\tilde{B}^{t}\right)=-\frac{h^{t}V}{\sigma \ln 2}<0$ and $\lim\limits_{p\rightarrow +\infty}\Xi\left(h^{t},p,\tilde{B}^{t}\right)=+\infty$, there exists a unique $p_{0}^{t}\in\left(0,+\infty\right)$ satisfying $\Xi\left(h^{t},p_{0}^{t},\tilde{B}^{t}\right)=0, \forall h^{t}>0$. Since the denominator of (\ref{derig2}) is positive for $h^{t}>0$ and $p>0$, $\frac{dg_{2}\left(h^{t},p,\tilde{B}^{t}\right)}{dp}< 0$ for $p\in \left(0,p_{0}^{t}\right)$, i.e., $g_{2}\left(h^{t},p,\tilde{B}^{t}\right)$ is decreasing, and $\frac{dg_{2}\left(h^{t},p,\tilde{B}^{t}\right)}{dp} \geq 0$ for $p\in\left[p_{0}^{t},+\infty\right)$, i.e., $g_{2}\left(h^{t},p,\tilde{B}^{t}\right)$ is increasing. Consequently, when $\tilde{B}^{t}<0$ and $p_{L}^{t}\leq p_{0}^{t} \leq p_{U}^{t}$, $J_{\rm{s}}^{t}\left(p^{t}\right)$ is non-increasing in $\left[p_{L}^{t},p_{0}^{t}\right)$ while non-decreasing in $\left(p_{0}^{t},p_{U}^{t}\right]$, and thus $p^{t*}=p_{0}^{t}$; when $\tilde{B}^{t}<0$ and $p^{t}_{L}>p_{0}^{t}$, $J_{\rm{s}}^{t}\left(p^{t}\right)$ is non-decreasing in the feasible domain, and thus $p^{t*}=p_{L}^{t}$; otherwise when $\tilde{B}^{t}<0$ and $p^{t}_{U}<p_{0}^{t}$, $J_{\rm{s}}^{t}\left(p^{t}\right)$ is non-increasing in the feasible domain, we have $p^{t*}=p_{U}^{t}$.
\end{proof}

Similar to mobile execution, we find a monotonic behavior of the optimal transmit power for computation offloading, as shown in the following corollary.
\begin{corol}
For a given $h^{t}$ such that $\mathcal{P}_{\rm{SE}}$ is feasible, the optimal transmit power for computation offloading $p^{t*}$ is non-decreasing with $\tilde{B}^{t}$.
\label{property2}
\end{corol}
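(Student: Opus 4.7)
The plan is to lean directly on Proposition \ref{optOLpwrlma}, which expresses $p^{t*}$ as a piecewise function of $\tilde{B}^{t}$. Observe that $p_{L}^{t}$ and $p_{U}^{t}$ are determined solely by $h^{t}$ (together with system constants), and so do not depend on $\tilde{B}^{t}$. Hence monotonicity of $p^{t*}$ reduces to (a) showing that the interior stationary point $p_{0}^{t}$ is non-decreasing in $\tilde{B}^{t}$, and (b) verifying that the transitions between the three sub-cases in (\ref{optOLpwr}) are consistent with non-decreasing $p^{t*}$.

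For step (a), I would apply the implicit function theorem to $\Xi\!\left(h^{t},p_{0}^{t},\tilde{B}^{t}\right)=0$. The proof of Proposition \ref{optOLpwrlma} already established $\partial \Xi / \partial p > 0$ on the relevant domain. The remaining piece is the sign of
\[
\frac{\partial \Xi}{\partial \tilde{B}} \;=\; -\log_{2}\!\left(1+\tfrac{h^{t}p}{\sigma}\right) + \frac{h^{t}p}{(\sigma + h^{t}p)\ln 2}.
\]
Setting $x=h^{t}p/\sigma \geq 0$, this equals $(\ln 2)^{-1}\bigl[-\ln(1+x)+x/(1+x)\bigr]$, which is non-positive by the standard inequality $\ln(1+x)\geq x/(1+x)$. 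The implicit function theorem then yields $dp_{0}^{t}/d\tilde{B}^{t} = -(\partial \Xi/\partial \tilde{B})/(\partial \Xi/\partial p) \geq 0$, as required.

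For step (b), I would stitch the branches of (\ref{optOLpwr}) together as $\tilde{B}^{t}$ sweeps through $(-\infty,0)$ and then into $[0,+\infty)$. Since $p_{0}^{t}$ is non-decreasing in $\tilde{B}^{t}$ and $p_{L}^{t}, p_{U}^{t}$ are constant in $\tilde{B}^{t}$, the active case progresses in the order $\{p_{0}^{t}<p_{L}^{t}\}\!\to\!\{p_{L}^{t}\leq p_{0}^{t}\leq p_{U}^{t}\}\!\to\!\{p_{0}^{t}>p_{U}^{t}\}$, giving $p^{t*}=p_{L}^{t}\to p_{0}^{t}\to p_{U}^{t}$, respectively. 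At each boundary the value of $p^{t*}$ matches on both sides by construction, so $p^{t*}$ is continuous and non-decreasing over $\tilde{B}^{t}<0$. To glue with the $\tilde{B}^{t}\geq 0$ branch, I would note that as $\tilde{B}^{t}\to 0^{-}$, $\Xi\!\left(h^{t},p,0\right)=-h^{t}V/[(\sigma+h^{t}p)\ln 2]<0$ for every finite $p$, so $p_{0}^{t}\to +\infty$; thus immediately before $\tilde{B}^{t}$ crosses $0$ we are in the regime $p_{U}^{t}<p_{0}^{t}$, giving $p^{t*}=p_{U}^{t}$, which matches the $\tilde{B}^{t}\geq 0$ branch.

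The only non-routine step is signing $\partial \Xi / \partial \tilde{B}$, and this is handled by the elementary logarithm inequality above; the rest is a case-check that mirrors the structure of Proposition \ref{optOLpwrlma}, exactly analogous to how Corollary \ref{property1} was obtained from Proposition \ref{optCPUlma}.
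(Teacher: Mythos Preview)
Your proposal is correct, and the overall architecture matches the paper's: both arguments reduce to showing that the interior stationary point $p_{0}^{t}$ is non-decreasing in $\tilde{B}^{t}$ for $\tilde{B}^{t}<0$, then invoke the fact that $p_{L}^{t}$ and $p_{U}^{t}$ do not depend on $\tilde{B}^{t}$ to push monotonicity through the clipping in (\ref{optOLpwr}), and finally glue with the constant branch $p^{t*}=p_{U}^{t}$ on $\tilde{B}^{t}\geq 0$.

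The difference is in how you establish monotonicity of $p_{0}^{t}$. You apply the implicit function theorem and sign $\partial\Xi/\partial\tilde{B}$ via the inequality $\ln(1+x)\geq x/(1+x)$. The paper instead rearranges $\Xi(h^{t},p_{0}^{t},\tilde{B}^{t})=0$ algebraically into $\tilde{B}^{t}\cdot k(h^{t},p_{0}^{t})=h^{t}V/\ln 2$, where $k(h,p)=\tfrac{hp}{\ln 2}-(hp+\sigma)\log_{2}(1+\tfrac{hp}{\sigma})$, and then observes that $\partial k/\partial p=-h\log_{2}(1+\tfrac{hp}{\sigma})<0$; comparing two values $\tilde{B}^{t}_{-}<\tilde{B}^{t}_{+}<0$ directly forces $k(p_{0,+}^{t})<k(p_{0,-}^{t})$ and hence $p_{0,+}^{t}>p_{0,-}^{t}$. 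The paper's route avoids the logarithm inequality (the sign of $k$ follows for free from $\tilde{B}^{t}<0$ and the positive right-hand side) and needs only the immediate derivative $\partial k/\partial p$, so it is marginally slicker. Your route, by contrast, is more systematic and would generalize more readily if the defining equation were harder to rearrange. Either way, the content is the same; your step (b) and the gluing at $\tilde{B}^{t}=0$ are handled with more explicit care than in the paper, which simply asserts the conclusion from (\ref{optOLpwr}).
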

\begin{proof}
Please refer to Appendix B.
\end{proof}

\begin{rmk}
We can see from (\ref{optOLpwr}) that the optimal transmit power for computation offloading depends on both the battery energy level and the channel state. In Corollary \ref{property2}, we show a higher battery energy level awakes a higher transmit power, and thus incurs smaller execution latency. However, the monotonicity of $p^{t*}$ with respect to $h^{t}$ does not hold. This is due to the battery output energy constraint, which makes the feasible set of $p^{t}$ change with $h^{t}$.
\end{rmk}

Based on Proposition \ref{optCPUlma} and \ref{optOLpwrlma}, the optimal computation offloading decision can be obtained by evaluating the optimal values of $\mathcal{P}_{\rm{CO}}$ for the three computation modes, i.e., dropping the task, mobile execution and MEC server execution, which can be explicitly expressed as
\begin{equation}
\langle \bm{I}^{t*},f^{t*},p^{t*}\rangle = \arg\min_{\langle\bm{I}^{t},f^{t},p^{t}\rangle \in \mathcal{F}^{t}_{\rm{CO}} } J_{\rm{CO}}\left(\bm{I}^{t},f^{t},p^{t}\right),
\end{equation}
where $J_{\rm{CO}}\left(\bm{I}^{t},f^{t},p^{t}\right)=\bm{1}_{I_{\rm{m}}^{t}=1}J^{t}_{\rm{m}}\left(f^{t}\right)+\bm{1}_{I_{\rm{s}}^{t}=1}J^{t}_{\rm{s}}\left(p^{t}\right)+\bm{1}_{I_{\rm{d}}^{t}=1,\zeta^{t}=1}\cdot V\phi$, and $V\phi$ is the value of $J_{\rm{CO}}\left(\bm{I}^{t},f^{t},p^{t}\right)$ when a computation task is dropped. {Note that when $\zeta^{t}=1$ and $\mathcal{F}_{\rm{CO}}^{t}=\{\langle \left[I_{\rm{m}}^{t}=0,I_{\rm{s}}^{t}=0,I_{\rm{d}}^{t}=1\right],0,0\rangle\}$, the computation task has to be dropped, as $\mathcal{P}_{\rm{CO}}$ has only one feasible solution.} It is also worth mentioning that bisection search can be applied to obtain $p^{t}_{L}$, $p^{t}_{U}$ and $p^{t}_{0}$, i.e., solving $\mathcal{P}_{\rm{CO}}$ is of low complexity.

\section{Performance Analysis}
In this section, we will first prove the feasibility of the LODCO algorithm for $\mathcal{P}_{2}$, and the achievable performance of the proposed algorithm will then be analyzed.

%In this section, the performance of the LODCO algorithm will be analyzed. We will first prove the feasibility of the LODCO algorithm for $\mathcal{P}_{2}$. During the analysis, an auxiliary optimization problem $\mathcal{P}_{3}$ will be introduced, which bridges the optimal performance of $\mathcal{P}_2$ and the performance achieved by the proposed algorithm. Together with Proposition \ref{tightperf}, this will establish the asymptotic optimality of the LODCO algorithm for $\mathcal{P}_{1}$.

\subsection{Feasibility}
We verify the feasibility of the LODCO algorithm by showing that under the optimal solution for the per-time slot problem, the energy causality constraint in (\ref{EHcausality}) is always satisfied, as demonstrated in the following proposition.
\begin{prop}
Under the optimal solution for the per-time slot problem, when $B^{t}<\tilde{E}_{\max}$, $I_{\rm{d}}^{t}=1$, $I_{\rm{m}}^{t}=I_{\rm{s}}^{t}=0$, $f^{t}=0$, and $p^{t}=0$, and the energy causality constraint in (\ref{EHcausality}) will not be violated, i.e., the LODCO algorithm is feasible for $\mathcal{P}_{2}$ ($\mathcal{P}_{1}$).
\label{threszero}
\end{prop}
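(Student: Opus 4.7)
The plan is to split into two cases based on whether $B^{t}<\tilde{E}_{\max}$ or $B^{t}\geq \tilde{E}_{\max}$, and in each case verify that the energy drawn in the slot does not exceed $B^{t}$. The easy case is $B^{t}\geq \tilde{E}_{\max}$: by the definition of $\tilde{E}_{\max}$ as the minimum of $\max\{\kappa W(f_{\rm{CPU}}^{\max})^{2},p_{\rm{tx}}^{\max}\tau\}$ and $E_{\max}$, every feasible action in $\mathcal{P}_{\rm{CO}}$ consumes at most $\tilde{E}_{\max}$ units (for mobile execution because $f^{t}\leq f_{\rm{CPU}}^{\max}$ and $\kappa W(f^{t})^{2}\le E_{\max}$, for MEC offloading because $p^{t}\leq p_{\rm{tx}}^{\max}$ and $p^{t}D^{t}_{\rm{server}}\le E_{\max}$), so the draw is at most $B^{t}$ and (\ref{EHcausality}) holds trivially.

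The substantive case is $B^{t}<\tilde{E}_{\max}$. First I would translate this into a bound on the virtual queue via Definition \ref{virtuequeue}: since $\theta\geq \tilde{E}_{\max}+V\phi E_{\min}^{-1}$ from (\ref{perturbpara}), we get
\begin{equation}
\tilde{B}^{t}=B^{t}-\theta< \tilde{E}_{\max}-\theta \leq -V\phi E_{\min}^{-1},
\end{equation}
so in particular $\tilde{B}^{t}<0$ and $-\tilde{B}^{t}E_{\min}>V\phi$. Next I compare the three candidate objective values in $\mathcal{P}_{\rm{CO}}$ when $\zeta^{t}=1$. Dropping gives value $V\phi$. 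For mobile execution, any feasible $f^{t}$ must satisfy $\kappa W(f^{t})^{2}\geq E_{\min}$ by (\ref{energyME}), so $J_{\rm{m}}^{t}(f^{t})\geq -\tilde{B}^{t}E_{\min}+VW/f^{t}> V\phi$. For MEC offloading, any feasible $p^{t}$ satisfies $p^{t}L/r(h^{t},p^{t})\geq E_{\min}$ by (\ref{energyCE}), hence $J_{\rm{s}}^{t}(p^{t})\geq -\tilde{B}^{t}E_{\min}+VL/r(h^{t},p^{t})> V\phi$ as well. Therefore dropping strictly dominates, and the per-slot optimizer must choose $I_{\rm{d}}^{t}=1$, $I_{\rm{m}}^{t}=I_{\rm{s}}^{t}=0$, $f^{t}=p^{t}=0$, consuming zero energy. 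The same conclusion holds \emph{a fortiori} when $\zeta^{t}=0$, since (\ref{nonemptytask}) already forces $f^{t}=p^{t}=0$.

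Combining both cases, the battery draw never exceeds $B^{t}$, so (\ref{EHcausality}) holds in every slot. Because the LODCO decisions also obey all the remaining constraints of $\mathcal{P}_{2}$ (by construction of $\mathcal{P}_{\rm{CO}}$, which embeds (\ref{indicatorconstraint}),(\ref{harvestableconst}),(\ref{deadconst})--(\ref{tightenedconst})), the algorithm is feasible for $\mathcal{P}_{2}$, and hence, since any feasible solution of $\mathcal{P}_{2}$ is feasible for the relaxation $\mathcal{P}_{1}$, also for $\mathcal{P}_{1}$.

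The main obstacle is the inequality showing that mobile execution and offloading are both strictly worse than dropping. It hinges on the fact that the non-zero lower bound $E_{\min}$ on the per-slot battery output, introduced precisely to tighten $\mathcal{P}_{1}$ into $\mathcal{P}_{2}$, is what lets the perturbation threshold $V\phi E_{\min}^{-1}$ kick in; without this lower bound one could not rule out arbitrarily small (and hence arbitrarily cheap) mobile/offloading actions, and the argument that drives the battery to stay above $\tilde{E}_{\max}$ would collapse. Careful bookkeeping in this step, together with consistent use of the lower bound in (\ref{perturbpara}), will make the proof go through cleanly.
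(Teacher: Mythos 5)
Your proof is correct and follows essentially the same route as the paper's: split on $B^{t}\gtrless\tilde{E}_{\max}$, bound the per-slot energy draw by $\tilde{E}_{\max}$ in the easy case, and in the hard case use the perturbation bound $\tilde{B}^{t}<-V\phi E_{\min}^{-1}$ together with the lower bound $E_{\min}$ on the battery output to show that any non-dropping action has objective value exceeding $V\phi$ and is therefore dominated by dropping. Your version merely phrases the comparison directly rather than by contradiction and spells out the intermediate inequality more explicitly.
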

\begin{proof}
When $B^{t}<\tilde{E}_{\max}$, we will show by contradiction that with the optimal computation offloading decision, $\mathcal{E}\left(\bm{I}^{t},f^{t},p^{t}\right)=0$. Suppose there exists an optimal computation offloading decision $\langle \bm{I}^{t*},f^{t*},p^{t*}\rangle$ with either $I_{\rm{m}}^{t*}=1$ or $I_{\rm{s}}^{t*}=1$. With this solution, due to the non-zero lower bound of the battery output energy, i.e., (\ref{tightenedconst}), the value of $J_{\rm{CO}}\left(\bm{I}^{t*},f^{t*},p^{t*}\right)$ will be no less than $-\tilde{B}^{t}E_{\min}$, which is greater than $V\phi$ as achieved by the solution with $I_{\rm{d}}^{t}=1$, i.e., $\langle \bm{I}^{t*},f^{t*},p^{t*}\rangle$ is not optimal for the per-time slot problem. When $B^{t}\geq \tilde{E}_{\max}$, as $\max\limits_{\langle\bm{I}^{t},f^{t},p^{t}\rangle\in\mathcal{F}^{t}_{\rm{CO}}}\mathcal{E}\left(\bm{I}^{t},f^{t},p^{t}\right)=\tilde{E}_{\max}$, $\mathcal{E}\left(\bm{I}^{t},f^{t},p^{t}\right)\leq B^{t},\forall \langle \bm{I}^{t},f^{t},p^{t}\rangle \in \mathcal{F}^{t}_{\rm{CO}}$.  Thus, (\ref{EHcausality}) holds under the LODCO algorithm.
\end{proof}

Based on the optimal energy harvesting decision and Proposition \ref{threszero}, we show the battery energy level is confined within an interval as shown in the following corollary.
\begin{corol}
Under the LODCO algorithm, the battery energy level at the mobile device $B^{t}$ is confined within $\left[0,\theta + E_{H}^{\max}\right],\forall t\in\mathcal{T}$.
\label{batterybound}
\end{corol}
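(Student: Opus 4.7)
The plan is to prove the two bounds $0 \leq B^{t} \leq \theta + E_{H}^{\max}$ separately by induction on $t$, exploiting (i) the feasibility established in Proposition \ref{threszero} for the lower bound, and (ii) the explicit form of the optimal energy-harvesting decision in \eqref{optEH} for the upper bound.

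For the lower bound, I would proceed by induction with base case $B^{0}=0$. Assuming $B^{t}\geq 0$, Proposition \ref{threszero} guarantees that the per-time slot solution produced by the LODCO algorithm satisfies the energy causality constraint \eqref{EHcausality}, namely $\mathcal{E}(\bm{I}^{t},f^{t},p^{t}) \leq B^{t}$. Combining this with $e^{t}\geq 0$ and the battery dynamics \eqref{batterydynamics}, we get $B^{t+1} = B^{t} - \mathcal{E}(\bm{I}^{t},f^{t},p^{t}) + e^{t} \geq 0$, closing the induction.

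For the upper bound, the key observation is that, by Definition \ref{virtuequeue}, the event $\tilde{B}^{t}\leq 0$ is exactly the event $B^{t}\leq \theta$, so the optimal harvesting rule \eqref{optEH} reads $e^{t*} = E_{H}^{t}\cdot \bm{1}\{B^{t}\leq \theta\}$. With $B^{0}=0 \leq \theta+E_{H}^{\max}$, suppose $B^{t}\leq \theta+E_{H}^{\max}$. I would split into two cases. If $B^{t}\leq \theta$, then since $\mathcal{E}\geq 0$ and $e^{t*}\leq E_{H}^{\max}$, the dynamics give $B^{t+1}\leq B^{t}+e^{t*}\leq \theta+E_{H}^{\max}$. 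If $\theta < B^{t}\leq \theta+E_{H}^{\max}$, then $\tilde{B}^{t}>0$ forces $e^{t*}=0$, so $B^{t+1}= B^{t}-\mathcal{E}(\bm{I}^{t},f^{t},p^{t})\leq B^{t}\leq \theta+E_{H}^{\max}$. Either way the upper bound is preserved, completing the induction.

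The step I expect to require the most care is verifying the second case of the upper bound, since one must confirm that the harvesting switch strictly turns off whenever $B^{t}>\theta$; this follows directly from \eqref{optEH}, but it is worth spelling out that this shutoff is what prevents the battery from drifting above $\theta+E_{H}^{\max}$. Beyond that, the argument is a short, forward induction and does not require any further properties of the computation-offloading subproblem.
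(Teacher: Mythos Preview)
Your proposal is correct and follows essentially the same approach as the paper: the lower bound is obtained from Proposition \ref{threszero} (energy causality), and the upper bound is proved by the same two-case induction on whether $B^{t}\leq \theta$ or $\theta<B^{t}\leq \theta+E_{H}^{\max}$, using the optimal harvesting rule \eqref{optEH}. Your write-up is a bit more explicit about the induction structure, but the argument is identical in substance.
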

\begin{proof}
The lower bound of $B^{t}$ is straightforward as the energy causality constraint is not violated according to Proposition \ref{threszero}. The upper bound of $B^{t}$ can be obtained based on the optimal energy harvesting in (\ref{optEH}): Suppose $\theta <B^{t} \leq \theta + E_{H}^{\max}$, since $e^{t*}=0$, we have $B^{t+1}\leq B^{t}\leq \theta+E_{H}^{\max}$; otherwise, if $B^{t}\leq \theta$, since $e^{t*}=E_{H}^{t}$, we have $B^{t+1}\leq B^{t}+e^{t*}\leq \theta+e^{t*}\leq \theta +E_{H}^{\max}$. Consequently, we have $B^{t}\in\left[0,\theta +E_{H}^{\max}\right],\forall t\in\mathcal{T}$.
\end{proof}

As will be seen in the next subsection, the bounds of the battery energy level are useful for deriving the main result on the performance of the proposed algorithm. In addition, Corollary \ref{batterybound} indicates that, given the size of the available energy storage $C_{B}$, we can determine the control parameter $V$ as
%\begin{equation}
$\phi^{-1}\cdot \left(C_{B}-E_{H}^{\max}-\tilde{E}_{\max}\right)E_{\min}$, where $C_{B}$ should be greater than $\tilde{E}_{\max}+E_{H}^{\max}$ in order to guarantee $V>0$. This is instructive for installation of EH and storage units at the mobile devices.
%\end{equation}

\subsection{Asymptotic Optimality}
In this subsection, we will analyze the performance of the LODOC algorithm, where an auxiliary optimization problem $\mathcal{P}_{3}$ will be introduced to bridge the optimal performance of $\mathcal{P}_2$ and the performance achieved by the proposed algorithm. This will demonstrate the asymptotic optimality of the LODCO algorithm for $\mathcal{P}_{1}$ conjointly with Proposition \ref{tightperf}.

Firstly, we define the Lyapunov function as
\begin{equation}
L\left(\tilde{B}^{t}\right)=\frac{1}{2}\left(\tilde{B}^{t}\right)^{2}=\frac{1}{2}\left(B^{t}-\theta\right)^{2}.
\label{lyafunc}
\end{equation}
Accordingly, the Lyapunov drift function and the Lyapunov drift-plus-penalty function can be expressed as
\begin{equation}
\Delta\left(\tilde{B}^{t}\right)=\mathbb{E}\left[L\left(\tilde{B}^{t+1}\right)-L\left(\tilde{B}^{t}\right)|\tilde{B}^{t}\right]
\label{lyadrift}
\end{equation}
and
\begin{equation}
\Delta_{V}\left(\tilde{B}^{t}\right)=\Delta\left(\tilde{B}^{t}\right)+V\mathbb{E}\left[\mathcal{D}\left(\bm{I}^{t},f^{t},p^{t}\right)+\phi \cdot \bm{1}\left(\zeta^{t}=1,I_{\rm{d}}^{t}=1\right)|\tilde{B}^{t}\right],
\end{equation}
respectively.

In the following lemma, we derive an upper bound for $\Delta_{V}\left(\tilde{B}^{t}\right)$, which will play an important part throughout the analysis of the LODCO algorithm.

\begin{lma}
For arbitrary feasible decision variables $e^{t}$, $\bm{I}^{t}$, $f^{t}$ and $p^{t}$ for $\mathcal{P}_{2}$, $\Delta_{V}\left(\tilde{B}^{t}\right)$ is upper bounded by
\begin{equation}
%\begin{split}
\Delta_{V}\left(\tilde{B}^{t}\right)\leq \mathbb{E}\bigg[B^{t}\left[e^{t}-\mathcal{E}\left(\bm{I}^{t},f^{t},p^{t}\right)\right]
+V\left[\mathcal{D}\left(\bm{I}^{t},f^{t},p^{t}\right)+\phi \cdot \bm{1}\left(\zeta^{t}=1,I_{\rm{d}}^{t}=1\right)\right]|\tilde{B}^{t}\bigg]+C,
%\end{split}
\end{equation}
where $C=\frac{\left(E_{H}^{\max}\right)^{2}+\left(\tilde{E}_{\max}\right)^{2}}{2}$.
\label{lmaupperboundlyaV}
\end{lma}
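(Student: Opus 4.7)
The plan is to expand the square $L(\tilde{B}^{t+1})$ using the battery dynamics and then bound the resulting quadratic cross-term. Starting from $\tilde{B}^{t+1}=\tilde{B}^{t}+e^{t}-\mathcal{E}(\bm{I}^{t},f^{t},p^{t})$ (which follows from (\ref{batterydynamics}) and Definition \ref{virtuequeue}), I would compute
\begin{equation*}
L(\tilde{B}^{t+1})-L(\tilde{B}^{t})=\tilde{B}^{t}\bigl[e^{t}-\mathcal{E}(\bm{I}^{t},f^{t},p^{t})\bigr]+\tfrac{1}{2}\bigl[e^{t}-\mathcal{E}(\bm{I}^{t},f^{t},p^{t})\bigr]^{2}.
\end{equation*}
(I will also read the ``$B^{t}$'' in the statement as $\tilde{B}^{t}$, which is consistent with the cost function used in the LODCO algorithm.)

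Next, I would bound the squared residual by a deterministic constant. Using the elementary inequality $(a-b)^{2}\le a^{2}+b^{2}$ for $a,b\ge 0$, together with $0\le e^{t}\le E_{H}^{\max}$ from (\ref{harvestableconst}) and $0\le\mathcal{E}(\bm{I}^{t},f^{t},p^{t})\le\tilde{E}_{\max}$, I would obtain
\begin{equation*}
\tfrac{1}{2}\bigl[e^{t}-\mathcal{E}(\bm{I}^{t},f^{t},p^{t})\bigr]^{2}\le\tfrac{(E_{H}^{\max})^{2}+(\tilde{E}_{\max})^{2}}{2}=C.
\end{equation*}
The bound on $\mathcal{E}$ requires a short justification: for mobile execution $E_{\rm{mobile}}^{t}=\kappa\sum_{w}(f_{w}^{t})^{2}\le\kappa W(f_{\rm{CPU}}^{\max})^{2}$, for MEC server execution $E_{\rm{server}}^{t}=p^{t}D_{\rm{server}}^{t}\le p_{\rm{tx}}^{\max}\tau_{d}\le p_{\rm{tx}}^{\max}\tau$, and in both cases the discharging constraint (\ref{discharging}) gives $\mathcal{E}\le E_{\max}$; together these give $\mathcal{E}\le\tilde{E}_{\max}$ by the definition of $\tilde{E}_{\max}$ in (\ref{perturbpara}).

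Finally, I would take the conditional expectation given $\tilde{B}^{t}$ on both sides to get $\Delta(\tilde{B}^{t})\le\mathbb{E}[\tilde{B}^{t}(e^{t}-\mathcal{E})\mid\tilde{B}^{t}]+C$, and then add $V\,\mathbb{E}[\mathcal{D}(\bm{I}^{t},f^{t},p^{t})+\phi\cdot\bm{1}(\zeta^{t}=1,I_{\rm{d}}^{t}=1)\mid\tilde{B}^{t}]$ to both sides to arrive at the stated bound on $\Delta_{V}(\tilde{B}^{t})$. There is no genuinely hard step here; the only non-trivial observation is the uniform bound $\mathcal{E}\le\tilde{E}_{\max}$, which is where the min-max form of $\tilde{E}_{\max}$ in the definition of the perturbation parameter is used. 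The decomposition is valid for any feasible choice of the decision variables, which is precisely what is needed to later plug in the LODCO per-slot minimizer and any alternative (stationary, randomized) policy for the drift-plus-penalty comparison argument.
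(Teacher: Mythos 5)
Your proposal is correct and follows essentially the same route as the paper's proof in Appendix C: expand the quadratic Lyapunov function via the virtual-queue update, bound the residual term by $\frac{(E_{H}^{\max})^{2}+(\tilde{E}_{\max})^{2}}{2}$ using $(a-b)^{2}\leq a^{2}+b^{2}$ for $a,b\geq 0$, then take the conditional expectation and add the penalty term. Your reading of the coefficient as $\tilde{B}^{t}$ (rather than the $B^{t}$ appearing in the lemma statement) matches what the paper's own derivation actually produces, and your explicit justification of $\mathcal{E}\leq\tilde{E}_{\max}$ is a detail the paper leaves implicit.
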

\begin{proof}
Please refer to Appendix C.
\end{proof}

Note that the terms inside the conditional expectation of the upper bound derived in Lemma \ref{lmaupperboundlyaV} coincides with the objective function of the per-time slot problem in the LODCO algorithm. To facilitate the performance analysis, we define the following auxiliary problem $\mathcal{P}_{3}$:
\begin{align}
&\mathcal{P}_{3}:\ \min\limits_{\bm{I}^{t},f^{t},p^{t},e^{t}}\lim\limits_{T\rightarrow \infty}\frac{1}{T}\mathbb{E}\left[\sum_{t=0}^{T-1}{\rm{cost}}^{t}\right]\nonumber\\
&\ \ \ \ \ \ \ \ \ {\rm{s.t.}}\ \ (\ref{indicatorconstraint}), (\ref{harvestableconst}), (\ref{deadconst})-(\ref{tightenedconst})\nonumber\\
&\ \ \ \ \ \ \ \ \ \ \ \ \ \lim_{T\rightarrow +\infty}\frac{1}{T}\sum_{t=0}^{T-1}\mathbb{E}\left[\mathcal{E}\left(\bm{I}^{t},f^{t},p^{t}\right)-e^{t}\right]=0\label{relaxEHconstraint}.
\end{align}
In $\mathcal{P}_{3}$, the average harvested energy consumption equals the average harvested energy, i.e., the energy causality constraint in $\mathcal{P}_{2}$ is replaced by (\ref{relaxEHconstraint}). Denote the optimal value of $\mathcal{P}_{3}$ as ${\rm{EC}}_{\mathcal{P}_{3}}^{*}$. In the following lemma, we will show that $\mathcal{P}_{3}$ is a relaxation of $\mathcal{P}_{2}$.

\begin{lma}
$\mathcal{P}_{3}$ is a relaxation of $\mathcal{P}_{2}$, i.e., ${\rm{EC}}_{\mathcal{P}_{3}}^{*}\leq {\rm{EC}}_{\mathcal{P}_{2}}^{*}$.
\label{relaxProb}
\end{lma}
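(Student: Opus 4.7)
The plan is to establish $\mathrm{EC}_{\mathcal{P}_{3}}^{*}\le \mathrm{EC}_{\mathcal{P}_{2}}^{*}$ by showing that every feasible tuple of $\mathcal{P}_{2}$ can be turned, with at most a harmless adjustment of the harvested-energy sequence, into a feasible tuple of $\mathcal{P}_{3}$ attaining the same objective value. The two problems share the objective and the constraints $(\ref{indicatorconstraint})$, $(\ref{harvestableconst})$, and $(\ref{deadconst})$--$(\ref{tightenedconst})$, so the only thing really at stake is whether the per-slot energy causality $(\ref{EHcausality})$ of $\mathcal{P}_{2}$ implies the long-term average constraint $(\ref{relaxEHconstraint})$ of $\mathcal{P}_{3}$.

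First I would telescope the battery dynamics $(\ref{batterydynamics})$ over $t=0,\ldots,T-1$. Using $B^{0}=0$, this yields the identity $\sum_{t=0}^{T-1}[e^{t}-\mathcal{E}(\bm{I}^{t},f^{t},p^{t})]=B^{T}$. Combining $(\ref{EHcausality})$ with $e^{t}\ge 0$ gives inductively $B^{t}\ge 0$ for every $t$, so taking expectation and dividing by $T$ produces $\frac{1}{T}\sum_{t=0}^{T-1}\mathbb{E}[\mathcal{E}(\bm{I}^{t},f^{t},p^{t})-e^{t}]=-\mathbb{E}[B^{T}]/T\le 0$ uniformly in $T$. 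Hence the long-term average on the left-hand side of $(\ref{relaxEHconstraint})$ is automatically non-positive along any $\mathcal{P}_{2}$-feasible trajectory.

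To upgrade this inequality to the equality required by $(\ref{relaxEHconstraint})$, I would exploit the crucial fact that the objective of both problems depends only on $\bm{I}^{t}$, $f^{t}$, and $p^{t}$, and not on $\{e^{t}\}$. Given a $\mathcal{P}_{2}$-feasible tuple, I can therefore replace $e^{t}$ by a reduced sequence $\{\tilde{e}^{t}\}$ with $0\le \tilde{e}^{t}\le e^{t}\le E_{H}^{t}$ whose long-term average absorbs the residual slack $\lim_{T\to\infty}\mathbb{E}[B^{T}]/T$, forcing the long-term average of $\mathcal{E}(\bm{I}^{t},f^{t},p^{t})-\tilde{e}^{t}$ to be exactly zero. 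The modified tuple still satisfies $(\ref{indicatorconstraint})$, $(\ref{harvestableconst})$, and $(\ref{deadconst})$--$(\ref{tightenedconst})$, and it attains the same execution-cost value as the original tuple, so it is feasible for $\mathcal{P}_{3}$. Taking the infimum over all $\mathcal{P}_{2}$-feasible tuples then delivers $\mathrm{EC}_{\mathcal{P}_{3}}^{*}\le \mathrm{EC}_{\mathcal{P}_{2}}^{*}$.

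The main obstacle is the construction of $\{\tilde{e}^{t}\}$: it must be causally measurable, must lie in $[0, E_{H}^{t}]$ slot by slot, and must eliminate precisely the limiting slack $\lim_{T\to\infty}\mathbb{E}[B^{T}]/T$. This is routine Lyapunov bookkeeping but is the only non-trivial ingredient; everything else is a direct containment of feasible regions. If instead $(\ref{relaxEHconstraint})$ is read in the Lyapunov-style ``$\le 0$'' sense that is standard for energy-causality relaxations, then the adjustment step can be dropped entirely and the relaxation claim follows immediately from the telescoping identity.
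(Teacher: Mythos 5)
Your proposal is correct and follows essentially the same route as the paper, whose proof is the one-line assertion that any $\mathcal{P}_2$-feasible solution is also $\mathcal{P}_3$-feasible: the substance in both cases is the feasibility containment obtained by telescoping the battery dynamics (\ref{batterydynamics}) with $B^{0}=0$ and $B^{t}\geq 0$. Your additional step of shrinking $\{e^{t}\}$ (e.g., by a constant factor) to convert $\lim_{T\to\infty}\frac{1}{T}\sum_{t=0}^{T-1}\mathbb{E}\left[\mathcal{E}\left(\bm{I}^{t},f^{t},p^{t}\right)-e^{t}\right]\leq 0$ into the exact equality (\ref{relaxEHconstraint}) is a legitimate refinement that the paper glosses over, and it is harmless because the objective and the remaining constraints of $\mathcal{P}_{3}$ do not depend on $\{e^{t}\}$ beyond $0\leq e^{t}\leq E_{H}^{t}$.
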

\begin{proof}
The proof can be obtained by showing any feasible solution for $\mathcal{P}_{2}$ is also feasible for $\mathcal{P}_{3}$, which is omitted for brevity.
\end{proof}

%\begin{proof}
%For any feasible solution for $\mathcal{P}_{2}$, based on the battery dynamics, we have
%\begin{equation}
%B^{t+1}=B^{t}-\mathcal{E}\left(\bm{I}^{t},f^{t},p^{t}\right)+e^{t},t=0,\cdots,T-1.
%\end{equation}
%Summing up both sides of the above $T$ equalities, taking the expectation, dividing both sides %by $T$ and letting $T$ goes to infinity, we have
%\begin{equation}
%\lim_{T\rightarrow +\infty} \frac{1}{T}\mathbb{E}\left[B^{T}\right]=\lim_{T\rightarrow %+\infty}\frac{1}{T}\mathbb{E}\left[B^{0}\right]-\lim_{T\rightarrow %+\infty}\frac{1}{T}\sum_{t=0}^{T-1}\mathbb{E}\left[\mathcal{E}\left(\bm{I}^{t},f^{t},p^{t}\right)-e^{t}\right].
%\end{equation}
%Since $B^{t}<+\infty$, we have $\lim\limits_{T\rightarrow %+\infty}\frac{1}{T}\mathbb{E}\left[B^{0}\right]=\lim\limits_{T\rightarrow %+\infty}\frac{1}{T}\mathbb{E}\left[B^{T}\right]=0$, i.e., (\ref{relaxEHconstraint}) is %satisfied. Hence, any feasible solution for $\mathcal{P}_{2}$ is also feasible for %$\mathcal{P}_{3}$, which ends the proof.
%\end{proof}

Besides, in the following lemma, we show the existence of a stationary and randomized policy \cite{Neely10}, where the decisions are i.i.d. among different time slots and depend only on $E_{H}^{t}$, $\zeta^{t}$ and $h^{t}$, that behaves arbitrarily close to the optimal solution of $\mathcal{P}_{3}$, meanwhile, the difference between $\mathbb{E}\left[e^{t}\right]$ and $\mathbb{E}\left[\mathcal{E}\left(\bm{I}^{t},f^{t},p^{t}\right)\right]$ is arbitrarily small.

\begin{lma}
For an arbitrary $\delta >0 $, there exists a stationary and randomized policy $\Pi$ for $\mathcal{P}_{3}$, which decides $e^{t\Pi}$, $\bm{I}^{t\Pi}$, $f^{t\Pi}$ and $p^{t\Pi}$, such that (\ref{indicatorconstraint}), (\ref{harvestableconst}), (\ref{deadconst})-(\ref{tightenedconst}) are met, and the following inequalities are satisfied:
\begin{equation}
\mathbb{E}\left[\mathcal{D}\left(\bm{I}^{t\Pi},f^{t\Pi},p^{t\Pi}\right)+\phi\cdot \bm{1}\left(\zeta^{t}=1, I_{\rm{d}}^{t\Pi}\right)\right]\leq {\rm{EC}}^{*}_{\mathcal{P}_{3}}+\delta, t\in\mathcal{T},
\end{equation}
\begin{equation}
\bigg|\mathbb{E}\left[\mathcal{E}\left(\bm{I}^{t\Pi},f^{t\Pi},p^{t\Pi}\right)-e^{t\Pi}\right]\bigg|\leq \varrho \delta, t\in\mathcal{T},
\end{equation}
where $\varrho$ is a scaling constant.
\label{lmaarbclose}
\end{lma}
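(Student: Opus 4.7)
The plan is to invoke the standard existence result for stationary randomized policies in constrained MDPs driven by i.i.d.\ random events, which is the cornerstone of Neely's Lyapunov optimization framework (Theorem 4.5 in \emph{Stochastic Network Optimization with Application to Communication and Queueing Systems}). The key observation is that in $\mathcal{P}_{3}$ the coupling between time slots introduced by the battery dynamics has been replaced by the single time-average equality constraint (\ref{relaxEHconstraint}), while the per-slot random events $\omega^{t}\triangleq(\zeta^{t},h^{t},E_{H}^{t})$ are i.i.d.\ and the per-slot feasible action set $\mathcal{A}(\omega^{t})$ (defined by (\ref{indicatorconstraint}), (\ref{harvestableconst}), (\ref{deadconst})--(\ref{tightenedconst})) depends only on $\omega^{t}$.

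First, I would formalize $\mathcal{P}_{3}$ as a standard constrained stochastic optimization over the class of (history-dependent) causal policies, with objective $\limsup_{T}\tfrac{1}{T}\sum_{t}\mathbb{E}[\mathrm{cost}^{t}]$ and the single coupling constraint $\lim_{T}\tfrac{1}{T}\sum_{t}\mathbb{E}[\mathcal{E}-e^{t}]=0$. I would then apply the cited existence theorem to obtain a stationary randomized policy $\Pi^{*}$ whose decision $(e^{t\Pi^{*}},\bm{I}^{t\Pi^{*}},f^{t\Pi^{*}},p^{t\Pi^{*}})$ depends on history only through the current $\omega^{t}$, such that $\mathbb{E}[\mathcal{D}+\phi\cdot\mathbf{1}(\zeta^{t}=1,I_{\mathrm{d}}^{t\Pi^{*}}=1)]=\mathrm{EC}^{*}_{\mathcal{P}_{3}}$ and $\mathbb{E}[\mathcal{E}-e^{t\Pi^{*}}]=0$ for every $t$ (by stationarity, time averages equal per-slot expectations). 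Any such $\Pi^{*}$ already satisfies the conclusions of the lemma with $\delta=0$.

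Next, to accommodate the weaker statement with slack $\delta$ and scaling constant $\varrho$, I would construct an explicit perturbation of $\Pi^{*}$: mix $\Pi^{*}$ with probability $1-\lambda$ against a trivial ``null'' policy $\Pi_{0}$ (drop every task, set $f^{t}=p^{t}=0$, and pick $e^{t}=E_{H}^{t}$ or $0$) with probability $\lambda\in[0,1]$. Since both policies are stationary and randomized and their per-slot decisions depend only on $\omega^{t}$, so is the mixture; a direct computation of the mixture's expected objective and expected energy drift shows each perturbation of the objective by $O(\delta)$ induces a perturbation of $\mathbb{E}[\mathcal{E}-e^{t}]$ of at most order $\delta$ (since both quantities are bounded by constants depending on $\tilde{E}_{\max}$, $E_{H}^{\max}$ and $\phi$), yielding the scaling $\varrho$. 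This gives the required policy for any $\delta>0$, and also handles the case where $\mathrm{EC}^{*}_{\mathcal{P}_{3}}$ is only an infimum and not achieved exactly, via a standard approximating sequence.

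The main obstacle will be the measure-theoretic justification of Step~1 when $h^{t}$ and $E_{H}^{t}$ live on continuous supports. The existence theorem as usually stated assumes a compact action space and either a finite/countable event space or suitable regularity (Borel measurability, bounded per-slot costs, compact randomization kernel). Here the action space is already compact once the constraints in $\mathcal{F}^{t}_{\mathrm{CO}}$ are imposed, costs are bounded by $\max\{\tau_{d},\phi\}$, and the event space is Polish, so the result does apply; however, writing this cleanly requires either citing a measurable-selection version of Neely's result or approximating $F_{H}$ and the distribution of $E_{H}^{t}$ by discrete distributions and passing to a limit. I would handle this by citing the appropriate version and noting that all relevant quantities are continuous in the event law, so the discrete approximation carries through.
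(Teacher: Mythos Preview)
Your proposal is correct and takes essentially the same approach as the paper: both invoke Theorem~4.5 of Neely's \emph{Stochastic Network Optimization} to obtain the near-optimal stationary randomized policy for the relaxed problem $\mathcal{P}_{3}$ with i.i.d.\ events and decoupled per-slot action sets. The paper in fact omits all details and simply cites that theorem, so your additional elaboration on the mixture construction for the $\delta$-slack and the measure-theoretic regularity goes beyond what the paper provides.
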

\begin{proof}
The proof can be obtained by Theorem 4.5 in \cite{Neely10}, which is omitted for brevity.
\end{proof}

In Section IV, we bounded the optimal performance of the modified ECM problem $\mathcal{P}_{2}$ with that of the original ECM problem $\mathcal{P}_{1}$, while in Lemma \ref{relaxProb}, we showed the auxiliary problem $\mathcal{P}_{3}$ is a relaxation of $\mathcal{P}_{2}$. With the assistance of these results, next, we will provide the main result in this subsection, which characterizes the worst-case performance of the LODCO algorithm.

\begin{thm}
The execution cost achieved by the proposed LODCO algorithm, denoted as ${\rm{EC}}_{\rm{LODCO}}$, is upper bounded by
\label{thmasymopt}
\end{thm}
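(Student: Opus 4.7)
The plan is to apply the classical Lyapunov drift-plus-penalty argument, pivoting on Lemma \ref{lmaupperboundlyaV}, Lemma \ref{relaxProb}, Lemma \ref{lmaarbclose}, and Proposition \ref{tightperf} to chain three inequalities: (i) LODCO's drift-plus-penalty is dominated by that of any feasible policy, (ii) a stationary randomized policy for $\mathcal{P}_{3}$ attains cost arbitrarily close to ${\rm EC}^{*}_{\mathcal{P}_{3}}$ with negligible net energy imbalance, and (iii) ${\rm EC}^{*}_{\mathcal{P}_{3}} \le {\rm EC}^{*}_{\mathcal{P}_{2}} \le {\rm EC}^{*}_{\mathcal{P}_{1}} + \nu(E_{\min})$.

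First, I would observe that the per-time-slot problem solved by LODCO is exactly the minimization of the right-hand side of the bound in Lemma \ref{lmaupperboundlyaV}. Hence, for every $t$, the drift-plus-penalty under LODCO is no larger than that produced by substituting the decisions $\langle e^{t\Pi},\bm{I}^{t\Pi},f^{t\Pi},p^{t\Pi}\rangle$ of the stationary randomized policy $\Pi$ guaranteed by Lemma \ref{lmaarbclose}. Because $\Pi$'s decisions are i.i.d.\ and independent of $\tilde{B}^{t}$, the conditional expectations reduce to unconditional ones, giving
\begin{equation}
\Delta_{V}\!\left(\tilde{B}^{t}\right) \le \tilde{B}^{t}\,\mathbb{E}\!\left[e^{t\Pi}-\mathcal{E}\!\left(\bm{I}^{t\Pi},f^{t\Pi},p^{t\Pi}\right)\right] + V\,\mathbb{E}\!\left[\mathcal{D}\!\left(\bm{I}^{t\Pi},f^{t\Pi},p^{t\Pi}\right)+\phi\bm{1}\!\left(\zeta^{t}=1,I_{\rm{d}}^{t\Pi}=1\right)\right] + C.
\end{equation}
Driving $\delta\downarrow 0$ in Lemma \ref{lmaarbclose} kills the first expectation and reduces the second to ${\rm EC}^{*}_{\mathcal{P}_{3}}$, yielding $\Delta_{V}(\tilde{B}^{t}) \le V\,{\rm EC}^{*}_{\mathcal{P}_{3}} + C$.

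Next I would telescope this inequality over $t=0,1,\ldots,T-1$. Using the definition of $\Delta(\tilde{B}^{t})$ in (\ref{lyadrift}), summing gives $\mathbb{E}\!\left[L(\tilde{B}^{T})\right] - \mathbb{E}\!\left[L(\tilde{B}^{0})\right] + V\sum_{t=0}^{T-1}\mathbb{E}\!\left[{\rm cost}^{t}\right] \le T\!\left(V\,{\rm EC}^{*}_{\mathcal{P}_{3}}+C\right)$. Dropping the non-negative Lyapunov term at time $T$, bounding $L(\tilde{B}^{0})$ by a finite constant (since $B^{0}=0$ and $\theta$ is bounded by Definition \ref{perturbpara}), dividing by $VT$, and letting $T\to\infty$ gives
\begin{equation}
{\rm EC}_{\rm LODCO} \;\le\; {\rm EC}^{*}_{\mathcal{P}_{3}} + \frac{C}{V}.
\end{equation}
Finally, I would chain this with Lemma \ref{relaxProb} and Proposition \ref{tightperf} to conclude
\begin{equation}
{\rm EC}_{\rm LODCO} \;\le\; {\rm EC}^{*}_{\mathcal{P}_{1}} + \nu(E_{\min}) + \frac{C}{V},
\end{equation}
which, combined with Corollary \ref{asym1}, proves the claimed asymptotic optimality as $V\to\infty$ and $E_{\min}\to 0$.

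The main obstacle I anticipate is the careful justification of step (i): the LODCO per-slot objective uses the current virtual queue length $\tilde{B}^{t}$ as a weight, but its decisions must be feasible for $\mathcal{P}_{2}$ (in particular the hard energy-causality constraint (\ref{EHcausality})), whereas the comparison policy $\Pi$ from Lemma \ref{lmaarbclose} is only required to be feasible for the relaxed $\mathcal{P}_{3}$. What saves the argument is that the drift-plus-penalty upper bound in Lemma \ref{lmaupperboundlyaV} was derived without invoking (\ref{EHcausality}), so it is valid for every decision satisfying the remaining constraints. Once this is noted, the dominance used in step (i) is legitimate, and the fact that LODCO itself remains feasible for $\mathcal{P}_{2}$ is already guaranteed by Proposition \ref{threszero}, closing the loop. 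The remaining calculations are routine telescoping and limits.
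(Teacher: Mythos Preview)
Your proposal is correct and follows essentially the same route as the paper: bound the drift-plus-penalty via Lemma~\ref{lmaupperboundlyaV}, compare LODCO's per-slot minimizer to the stationary randomized policy $\Pi$ of Lemma~\ref{lmaarbclose}, send $\delta\to 0$, telescope, and then invoke Lemma~\ref{relaxProb} and Proposition~\ref{tightperf}. The only cosmetic difference is that the paper explicitly uses Corollary~\ref{batterybound} to bound $|\tilde{B}^{t}|\le \max\{\theta,E_{H}^{\max}\}$ before sending $\delta\to 0$, whereas you let $\delta\downarrow 0$ pointwise in $\tilde{B}^{t}$; both are valid, and your closing paragraph about why the $\mathcal{P}_{3}$-feasible policy $\Pi$ may be plugged into the bound of Lemma~\ref{lmaupperboundlyaV} (because that lemma does not use the energy-causality constraint~(\ref{EHcausality})) is exactly the point that makes the comparison legitimate.
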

\begin{equation}
{\rm{EC}}_{\rm{LODCO}}\leq {\rm{EC}}_{\mathcal{P}_{1}}^{*}+\nu\left(E_{\min}\right) + C\cdot V^{-1}.
\end{equation}
\begin{proof}
Please refer to Appendix D.
\end{proof}

\begin{rmk}
Theorem \ref{thmasymopt} indicates that the execution cost upper bound can be made arbitrarily tight by letting $V\rightarrow +\infty$, $E_{\min}\rightarrow 0$, that is, the proposed algorithm asymptotically achieves the optimal performance of the original design problem $\mathcal{P}_{1}$. However, the optimal performance of $\mathcal{P}_{1}$ is achieved at the price of a higher battery capacity requirement and longer convergence time to the optimal performance. This is because that, the battery energy level will be stabilized around $\theta$ under the LODCO algorithm. As $E_{\min}$ decreases or $V$ increases, $\theta$ increases accordingly, and it will need a longer time to accumulate the harvested energy, which postpones the arrival of the system stability and hence delays the convergence. Thus, by adjusting the control parameters, we can balance the system performance and the battery capacity/convergence time. Similar phenomenon was observed in our previous work \cite{YMao1512}.
\end{rmk}

\section{Simulation Results}
In this section, we will verify the theoretical results derived in Section V and evaluate the performance of the proposed LODCO algorithm through simulations. In simulations, $E_{H}^{t}$ is uniformly distributed between 0 and $E_{H}^{\max}$ with the average EH power given by $P_{H}=E_{H}^{\max}\left(2\tau\right)^{-1}$, and the channel power gains are exponentially distributed with mean $g_{0}d^{-4}$, where $g_{0}=-40$ dB is the path-loss constant. In addition, $\kappa =10^{-28}$, $\tau = \phi = 2$ ms, $w=1$ MHz, $\sigma = 10^{-13}$ W, $p_{\rm{tx}}^{\max}=1$ W, $f_{\rm{CPU}}^{\max}=1.5$ GHz, $E_{\max}=2$ mJ, and $L=1000$ bits. Besides, $X=5900$ cycles per byte, which corresponds to the workload of processing the English main page of Wikipedia \cite{Miettinen10}. Moreover, $P_{H}=12$ mW, $d=50$ m and $\tau_{d}=2$ ms unless otherwise specified. For comparison, we introduce three benchmark policies, namely, \emph{mobile execution with greedy energy allocation} (Mobile Execution (GD)), \emph{MEC server execution with greedy energy allocation} (MEC Server Execution (GD)) and \emph{dynamic offloading with greedy energy allocation} (Dynamic offloading (GD)), which minimize the execution cost at the current time slot. They work as follows:
\begin{itemize}
\item \textbf{Mobile Execution (GD):} Compute the maximum feasible CPU-cycle frequency as $f_{U}^{t}=\min\{f_{\rm{CPU}}^{\max},\sqrt{\frac{\min\{B^{t},E_{\max}\}}{\kappa W}}\}$ when $\zeta^{t}=1$. If $W\slash f_{U}^{t}\leq \tau_{d}$, the computation task will be executed locally with CPU-cycle frequency $f_{U}^{t}$; otherwise, mobile execution is infeasible and the task will be dropped. Note that computation offloading is disabled in this policy.
\item \textbf{MEC Server Execution (GD):} When $\zeta^{t}=1$, compute the maximum feasible transmit power as $p^{t}_{U}=\min\{p_{\rm{tx}}^{\max},p^{t}_{\min\{B^{t},E_{\max}\}}\}$ if $\sigma L \ln 2\left(\omega h^{t}\right)^{-1}< \min\{B^{t},E_{\max}\}$, where $p^{t}_{\min\{B^{t},E_{\max}\}}$ is the unique solution of $pL=r\left(h^{t},p\right)\min\{B^{t},E_{\max}\}$. If $L\slash r\left(h^{t},p^{t}_{U}\right)\leq \tau_{d}$, the computation task will be offloaded to the MEC server with transmit power $p_{U}^{t}$; otherwise, MEC server execution is infeasible and the computation task will be dropped. Note that the computation tasks are always offloaded to the MEC server in this policy.
\item \textbf{Dynamic Offloading (GD):} When $\zeta^{t}=1$, compute $f_{U}^{t}$ and $p_{U}^{t}$ as in the Mobile Execution (GD) and MEC Server Execution (GD) policies, respectively, and check if they can meet the delay requirement. Then the feasible computation mode that incurs smaller execution delay will be chosen. If neither computation modes is feasible, the computation task will be dropped.
\end{itemize}

\subsection{Theoretical Results Verification}

\begin{figure}[h]
\begin{center}
   \includegraphics[width=0.6\textwidth]{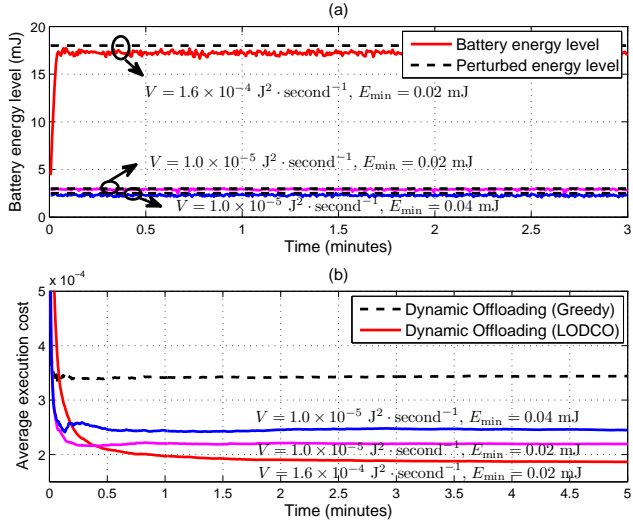}
\end{center}
\vspace{-30pt}
\caption{Battery energy level and average execution cost vs. time, $\rho=0.6$.}
\label{EnergyECvstime}
\end{figure}
In this subsection, we will verify the feasibility and asymptotic optimality of the LODCO algorithm developed in
Proposition \ref{threszero}, Corollary \ref{batterybound}, and Theorem \ref{thmasymopt}, respectively. The value of $\theta$ is chosen as the value of the right-hand side of (\ref{perturbpara}). In Fig. \ref{EnergyECvstime}(a), the battery
energy level is depicted to demonstrate the feasibility of the LODOC algorithm for $\mathcal{P}_{2}$ ($\mathcal{P}_{1}$). First, we observe that the harvested energy keeps accumulating at the beginning, and finally stabilizes
around the perturbed energy level. This is due to the fact that in the proposed algorithm the Lyapunov drift-plus-penalty function is minimized at each time slot. From the curves, with a larger value of $V$ or a smaller value of $E_{\min}$, the stabilized energy level becomes higher, which agrees with the definition of the perturbation parameter in (\ref{perturbpara}). Also, we see that the energy level is confined within $\left[0,\theta+E_{H}^{\max}\right]$, which verifies Corollary \ref{batterybound} and confirms that the energy causality constraint is not violated, i.e., Proposition \ref{threszero} holds. The evolution of the average execution cost with respect to time is
shown in Fig. \ref{EnergyECvstime}(b). We see that, a larger value of $V$ or a smaller value of $E_{\min}$ results in a smaller long-term average execution cost.
Nevertheless, the algorithm converges more slowly to the stable performance. Besides, if $\langle E_{\min},V\rangle$ are properly selected,
the proposed algorithm will achieve significant performance gain compared to the benchmark policies.

\begin{figure}[h]
\begin{center}
   \includegraphics[width=0.6\textwidth]{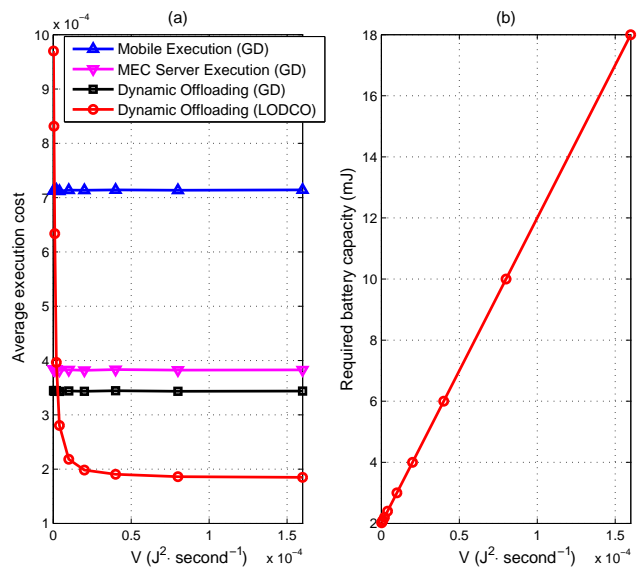}
\end{center}
\vspace{-30pt}
\caption{Average execution cost and required battery capacity vs. $V$, $\rho=0.6$ and $E_{\min}=0.02$ mJ.}
\label{ECvsV}
\end{figure}

The relationship between the average execution cost/required battery capacity and $V$ is shown in Fig. \ref{ECvsV}. We see from Fig. \ref{ECvsV}(a) that the execution cost achieved by the proposed algorithm decreases inversely proportional to $V$, and eventually it converges to the optimal value of $\mathcal{P}_{2}$, which verifies the asymptotic optimality developed in Theorem \ref{thmasymopt}. However, as shown from Fig. \ref{ECvsV}(b), the required battery capacity grows linearly with $V$ since the value of $\theta$ is linearly increasing with $V$. Thus, $V$ should be chosen to balance the achievable performance, convergence time and required battery capacity. For instance, if a battery with 18 mW capacity is available, we can choose $V=1.6\times 10^{-4}\ {\rm{J}}^{2}\cdot {\rm{second}}^{-1}$ for the LODCO algorithm, and then 74.4\%, 51.8\% and 46.3\% performance gain compared to the Mobile Execution (GD), MEC Server Execution (GD) and Dynamic Offloading (GD) policies, respectively, will be obtained.

\subsection{Performance Evaluation}
\begin{figure}[!htbp]
  \centering
  \subfigure[Execution cost vs. $\rho$]{
    \label{ECRho} %% label for first subfigure
    \raisebox{-1cm}{\includegraphics[width=0.5\textwidth]{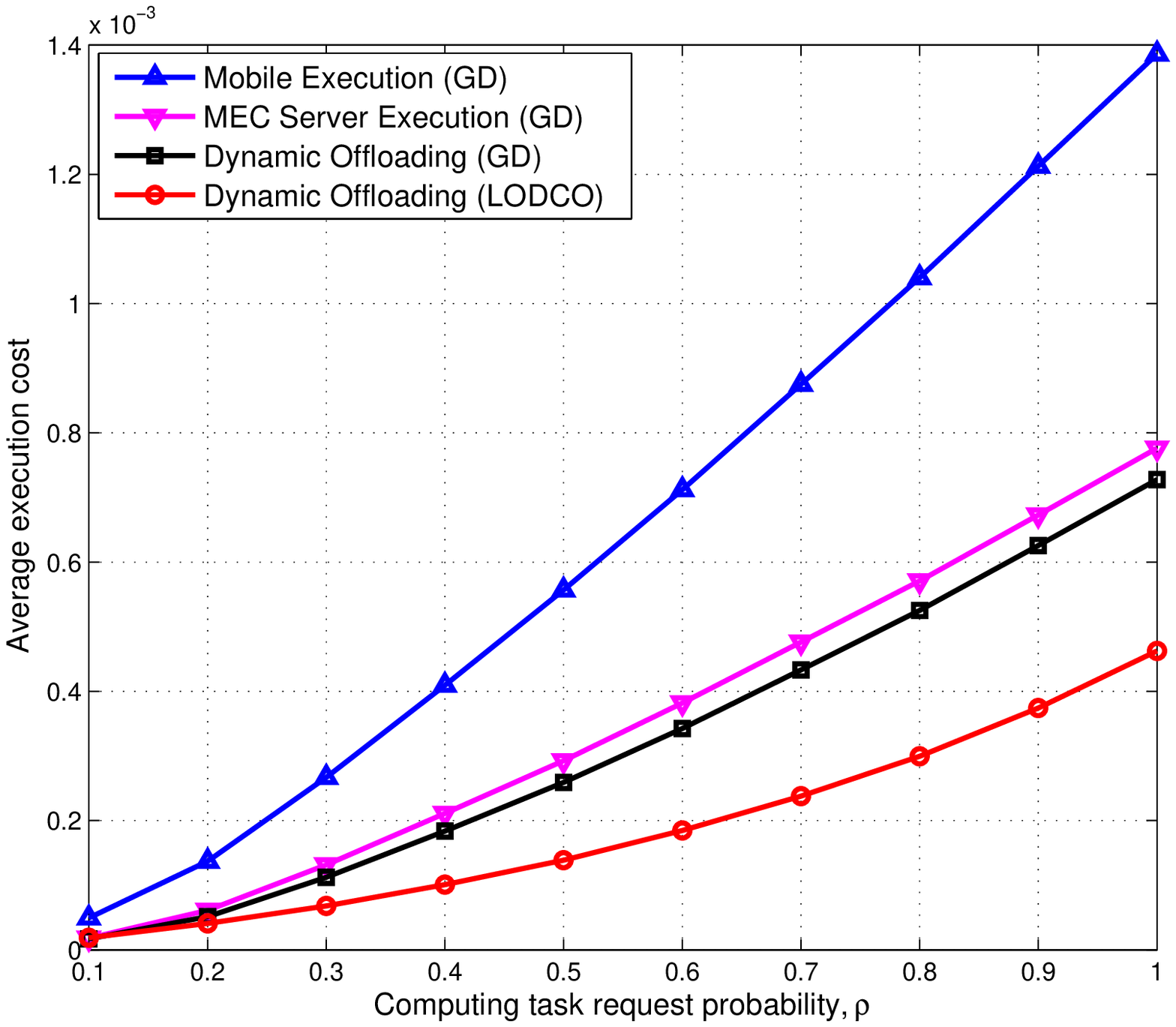}} }  %Case1.eps
  \hspace{-25pt}
  \subfigure[Average completion time/task drop ratio vs. $\rho$]{
    \label{CTDRRho} %% label for second subfigure
    \raisebox{-1cm}{\includegraphics[width=0.5\textwidth]{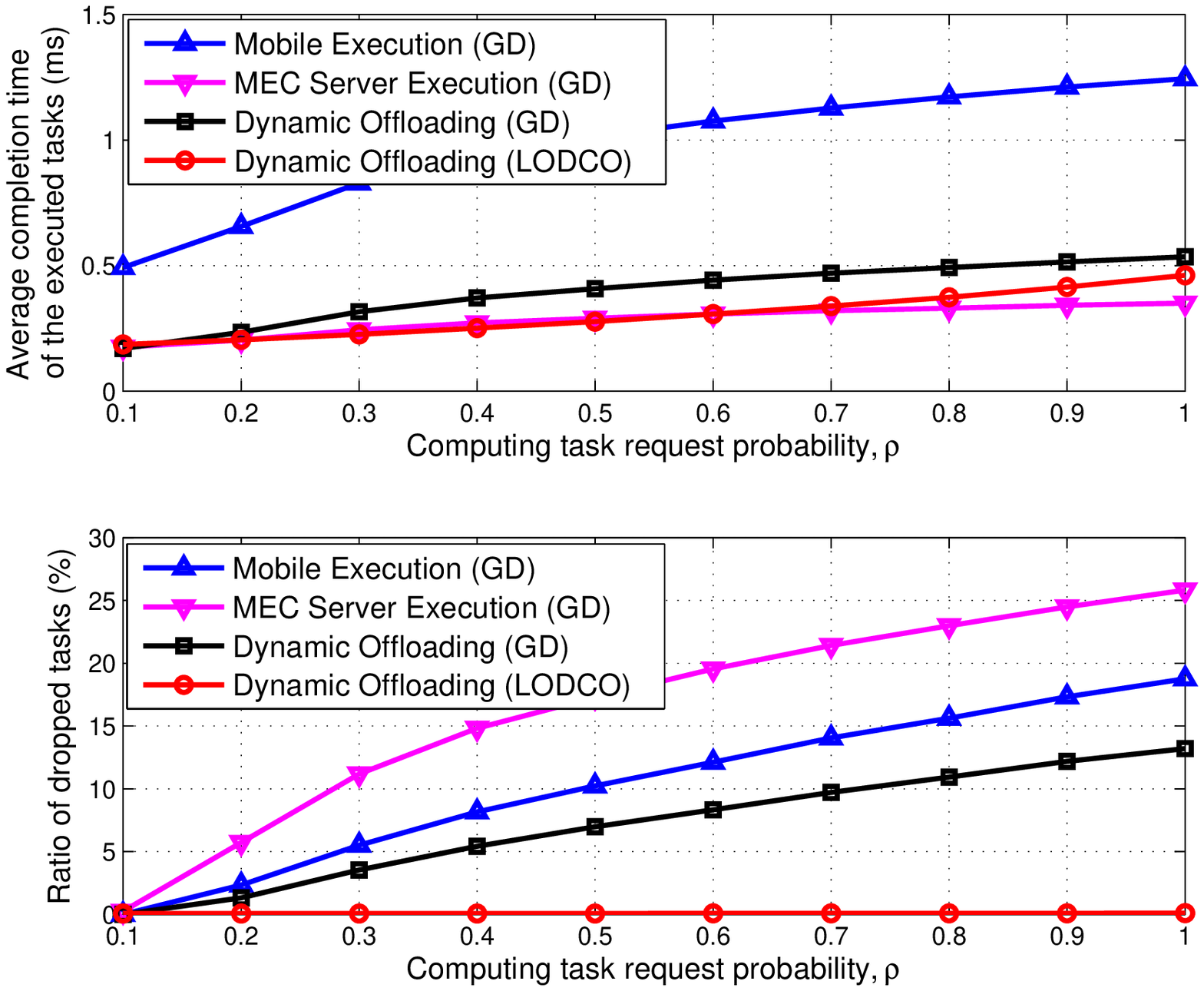}}}     %Case2.eps
  \vspace{-5pt}
  \caption{System performance vs. task arrival probability.}
  \label{ImpactRho} %% label for entire figure
\end{figure}

We will show the effectiveness of the proposed algorithm and demonstrate the impacts of various system parameters in this subsection. First, the impacts of the task request probability $\rho$ on the system performance, including the execution cost, the average completion time of the executed tasks and the task drop ratio, are illustrated in Fig. \ref{ImpactRho}. We see in Fig. \ref{ECRho} that the execution cost increases with $\rho$, which is in accordance with our intuition. Besides, the LODCO algorithm achieves significant execution cost reduction compared to the benchmark policies. In Fig. \ref{CTDRRho}, the average completion time of the executed tasks and the task drop ratio are shown, We see that the LODCO algorithm achieves a near-zero task drop ratio, while those achieved by the benchmark policies increase rapidly with $\rho$. In terms of the average completion time, the LODCO algorithm outperforms the benchmark policies when $\rho$ is small. However, when $\rho$ is large, the average completion time achieved by the LODCO algorithm is slightly longer than that achieved by the MEC Server Execution (GD) policy. The reason is, in order to minimize the execution cost, the LODCO algorithm suppresses the task drop ratio at the expense of a minor execution delay performance degradation.

\begin{figure}[!htbp]
  \centering
  \subfigure[Execution cost vs. $P_{H}$]{
    \label{ECEHrate} %% label for first subfigure
    \raisebox{-1cm}{\includegraphics[width=0.5\textwidth]{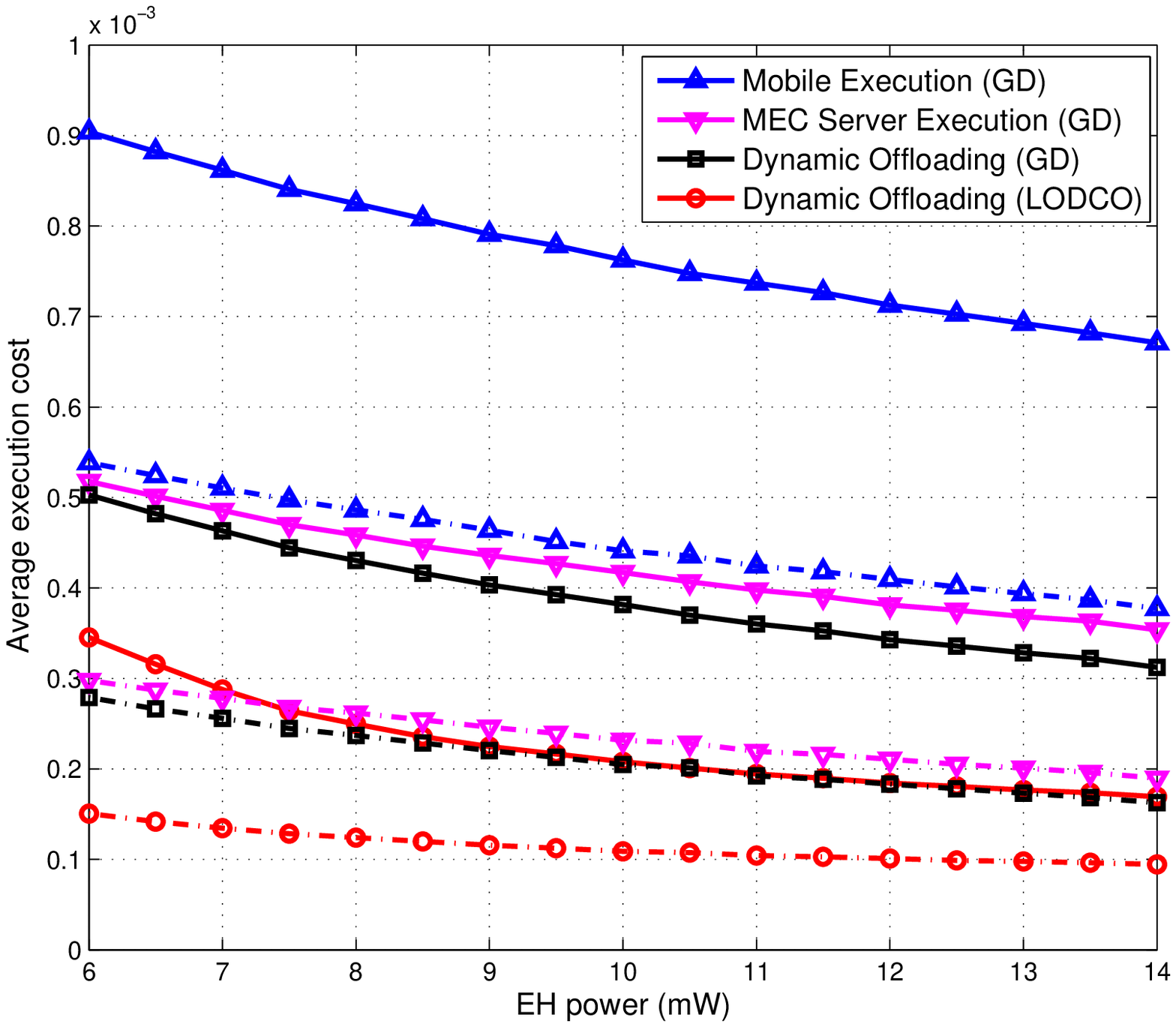}} }  %Case1.eps
  \hspace{-25pt}
  \subfigure[Average completion time/task drop ratio vs. $P_{H}$]{
    \label{CTDREHrate} %% label for second subfigure
    \raisebox{-1cm}{\includegraphics[width=0.5\textwidth]{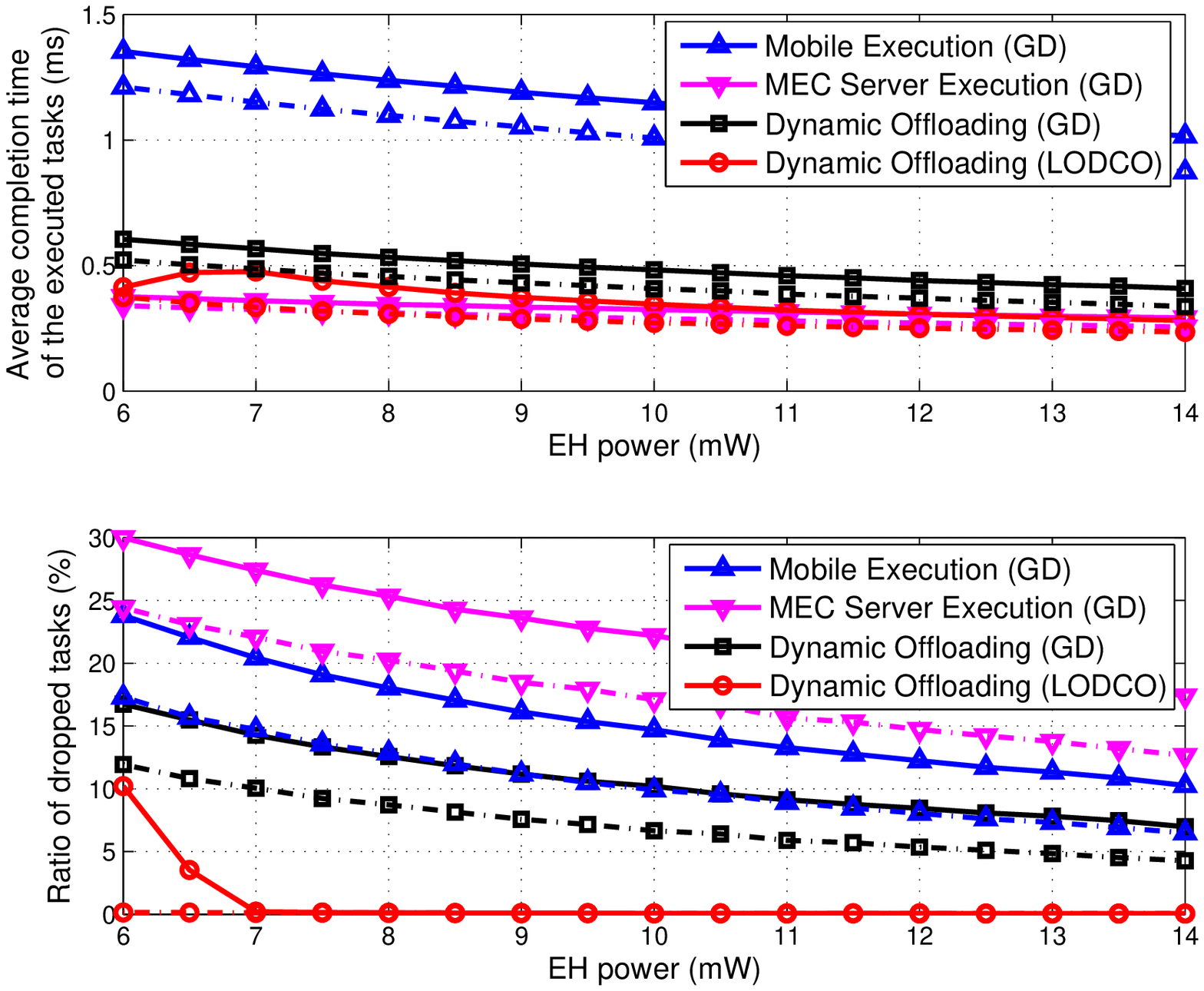}}}     %Case2.eps
  \vspace{-5pt}
  \caption{System performance vs. EH rate,  the solid lines corresponds to $\rho=0.6$ and the dash-solid lines corresponds to $\rho=0.4$.}
  \label{ImpactEHrate} %% label for entire figure
\end{figure}

The system performance versus the EH rate, i.e., $P_{H}$, is shown in Fig. \ref{ImpactEHrate}, where the effectiveness of the LODCO algorithm is again validated. In addition, we see the execution cost decreases as the EH rate increases since consuming the renewable energy incurs no cost. Similar to the execution cost, the task drop ratios achieved by different policies decrease with the EH rate. Interestingly, under the LODCO algorithm, an increase of the EH rate does not necessarily reduce the average completion time, e.g., when $\rho=0.6$ and $P_{H}$ increases from $6$ to $7$ mW, the LODCO algorithm has introduced a $0.1$ ms extra average completion time, but secured a 10\% task drop reduction. Since the optimization objective is the execution cost, eliminating task drops brings more benefits in terms of system cost when the system resource is scarce, i.e., the harvested energy is insufficient compared to the relatively intense computation workload.

\begin{figure}[!htbp]
  \centering
  \subfigure[Execution cost vs. $\tau_{d}$]{
    \label{ECDeadline} %% label for first subfigure
    \raisebox{-1cm}{\includegraphics[width=0.5\textwidth]{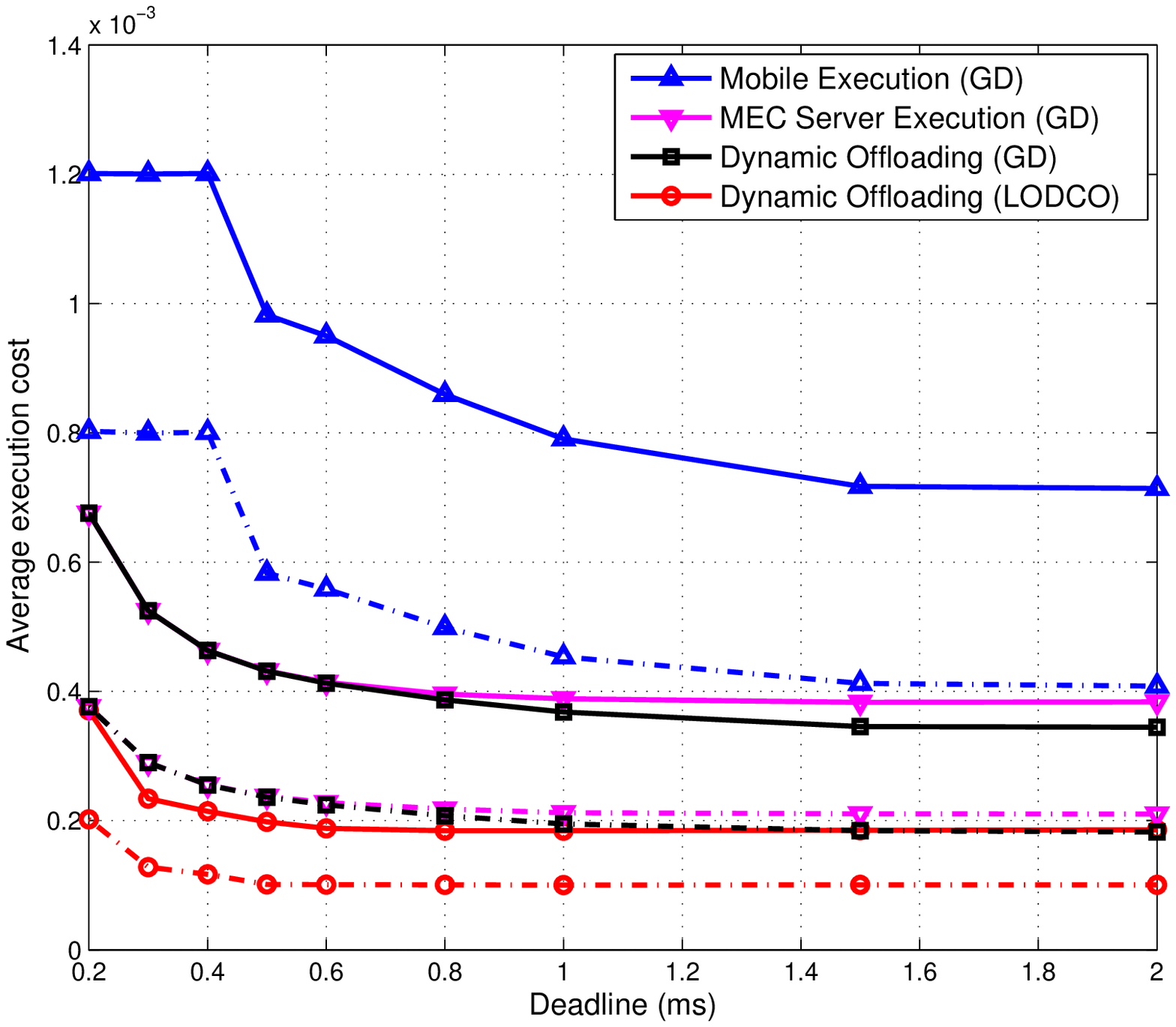}} }  %Case1.eps
  \hspace{-25pt}
  \subfigure[Average completion time/task drop ratio vs. $\tau_{d}$]{
    \label{CTDRDealine} %% label for second subfigure
    \raisebox{-1cm}{\includegraphics[width=0.5\textwidth]{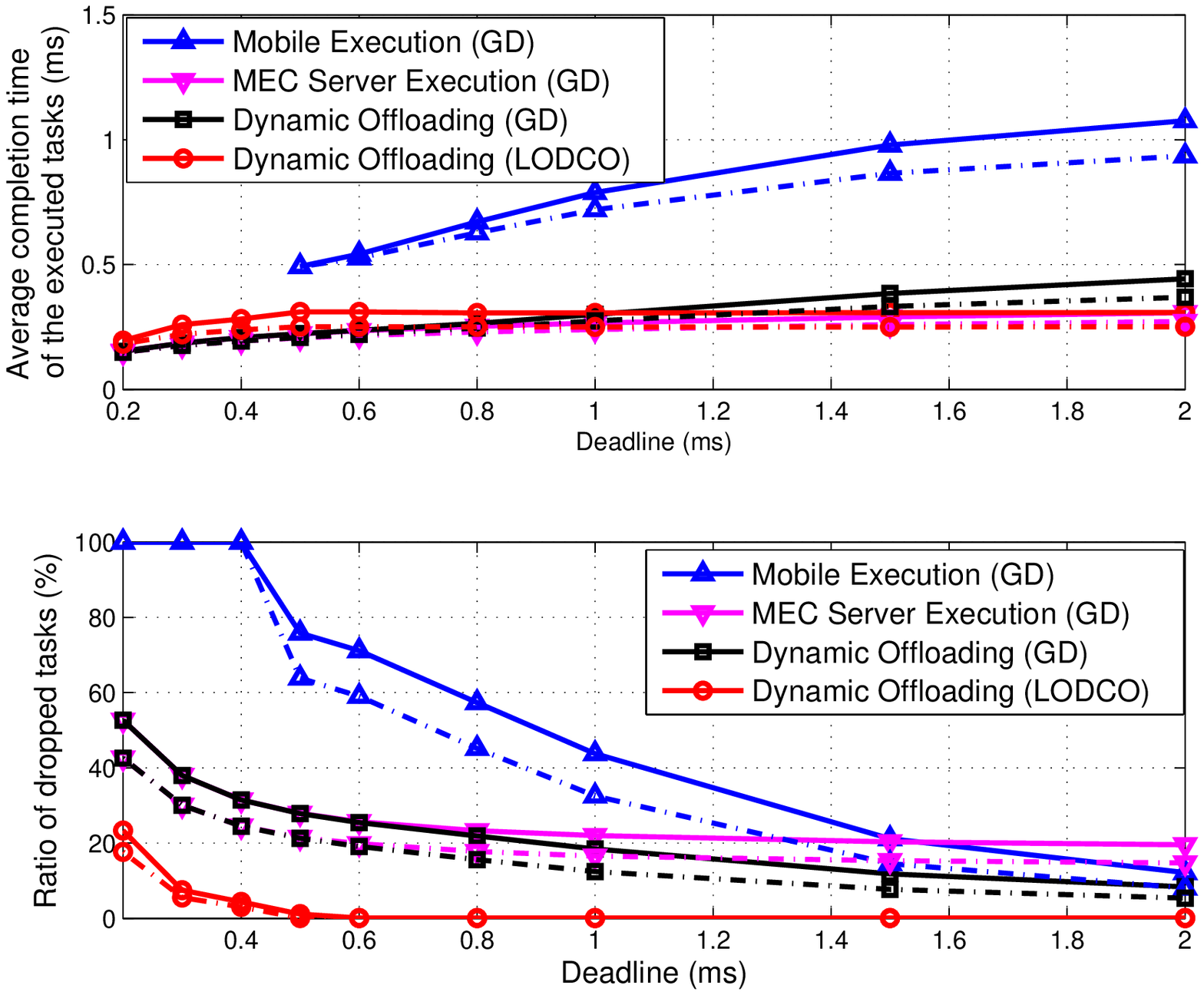}}}     %Case2.eps
  \vspace{-5pt}
  \caption{System performance vs. execution deadline, the solid lines corresponds to $\rho=0.6$ and the dash-solid lines corresponds to $\rho=0.4$.}
  \label{ImpactDeadline} %% label for entire figure
\end{figure}

In Fig. \ref{ImpactDeadline}, we reveal the relationship of between the execution deadline $\tau_{d}$ and the system performance. As $\tau_{d}$ decreases, i.e., the computation requirement becomes more stringent, the execution cost, average completion time and task drop ratio achieved by all four policies increase. It can be seen that when $\tau_{d}\leq 0.4$ ms, the execution cost achieved by the Mobile Execution (GD) policy becomes a constant $\rho \phi$, and the task drop ratio is 100\%. Meanwhile, the MEC Server Execution (GD) and the Dynamic Offloading (GD) policies converge. In these scenarios, the mobile device is not able to conduct any computation because of hardware limitation, i.e., $f^{t}\leq f_{\rm{CPU}}^{\max}=1.5$ GHz, and all the computation tasks have to be offloaded to the MEC server for MEC. The results in Fig. \ref{CTDRDealine} confirms the benefits of MEC as around 50\% tasks are successfully executed for $\tau_{d}=0.2$ ms even under the greedy offloading policy. Note that for a small value of $\tau_{d}$, e.g., $\tau_{d}\leq 0.8$ ms, the average completion time achieved by the LODCO algorithm is slightly longer than those of the other two policies with computation offloading, but the task drop ratio is reduced noticeably by more than 20\%. This phenomenon is similar to what was observed in Fig. \ref{CTDRRho}, where the LODCO algorithm tends to avoid dropping tasks by prolonging the average completion time in order to achieve a minimum execution cost.
\vspace{-10pt}
\begin{figure}[!htbp]
  \centering
  \subfigure[Execution cost vs. $d$]{
    \label{ECDistance} %% label for first subfigure
    \raisebox{-1cm}{\includegraphics[width=0.5\textwidth]{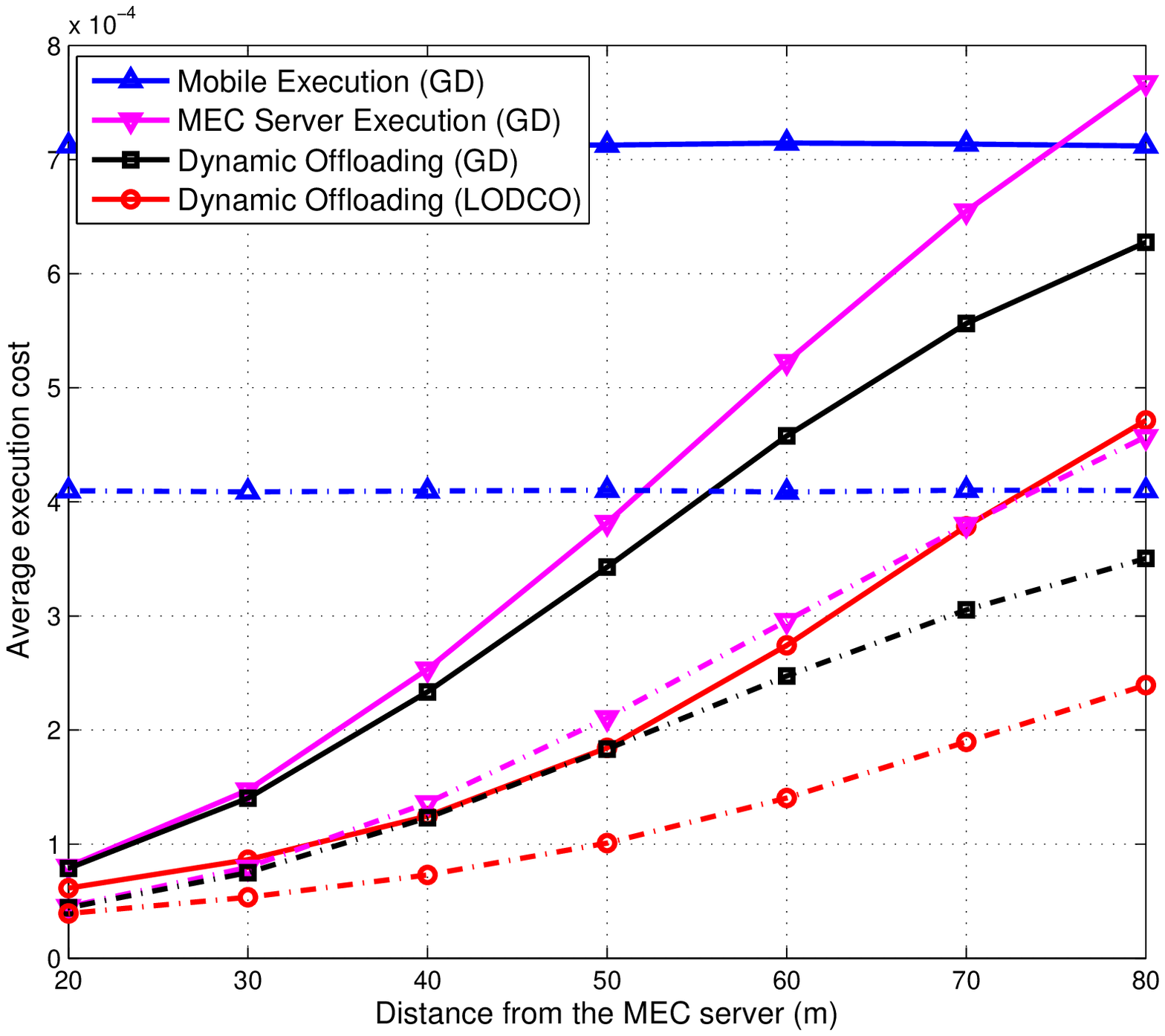}} }  %Case1.eps
  \hspace{-25pt}
  \subfigure[Average completion time/task drop ratio vs. $d$]{
    \label{CTDRDistance} %% label for second subfigure
    \raisebox{-1cm}{\includegraphics[width=0.5\textwidth]{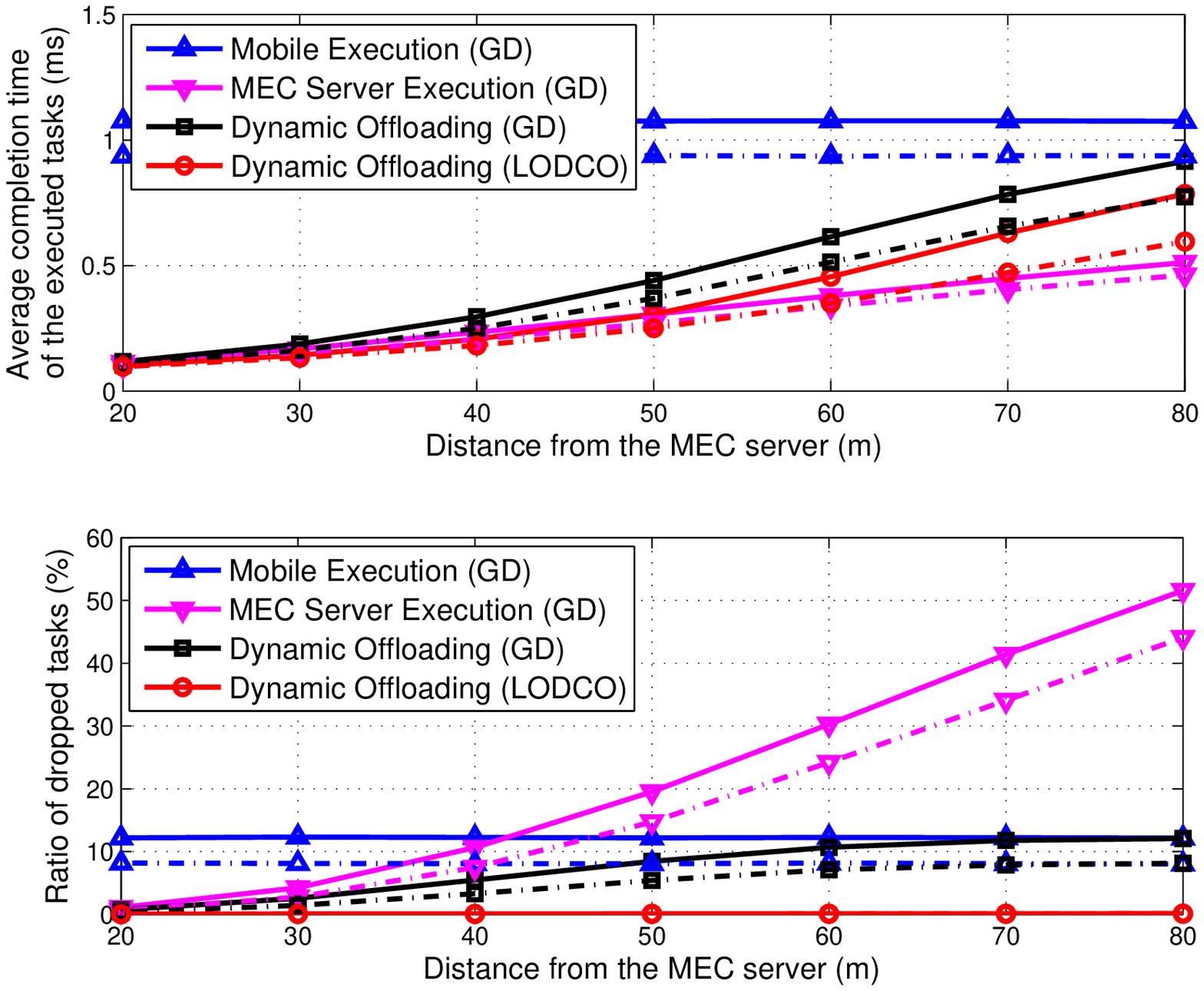}}}     %Case2.eps
  \vspace{-5pt}
  \caption{System performance vs. distance, the solid lines corresponds to $\rho=0.6$ and the dash-solid lines corresponds to $\rho=0.4$.}
  \label{ImpactDistance} %% label for entire figure
\end{figure}

Finally, we show the relationship between the system performance and $d$, i.e., the distance from the mobile device to the MEC server, in Fig. \ref{ImpactDistance}. The performance of the computation offloading policies, including the MEC Server Execution (GD), Dynamic offloading (GD) and the LODCO algorithms, deteriorate as $d$ becomes large. As can be seen from Fig. \ref{ECDistance}, when the mobile device is close to the MEC server, the three computation offloading policies converge and greatly outperform the Mobile Execution (GD) policy. In such scenarios, the mobile device is able to offload the computation
tasks to the MEC server with a small amount of harvested energy due to small path loss. With a large value of $d$, e.g., $d =80$ m, offloading the tasks greedily cannot bring any execution cost reduction compared the Mobile Execution (GD) policy, while the LODCO algorithm offers more than 40\% performance gain. From Fig. \ref{CTDRDistance}, we see that although the MEC Server Execution (GD) policy incurs the least completion time for the executed tasks, its task failure performance sharply degrades. In contrast, the proposed LODCO algorithm achieves a near-zero task drop ratio with an improved completion time performance compared to the Mobile Execution (GD) and Dynamic Offloading (GD) policies.

\section{Conclusions}
In this paper, we investigated mobile-edge computing (MEC) systems with EH mobile devices. The execution cost, which addresses the execution delay and task failure, was adopted as the performance metric. A dynamic computation offloading policy, namely, the Lyapunov optimization-based dynamic computation offloading (LODCO) algorithm, was then developed. It is a low-complexity online algorithm and requires little prior knowledge. We found the monotonic properties of the CPU-cycle frequencies (transmit power) for mobile execution (computation offloading) with respect to the battery energy level, which uncovers the impact of EH to the system operations. Performance analysis was conducted which revealed the asymptotic optimality of the proposed algorithm. Simulation results showed that the proposed LODCO algorithm not only significantly outperforms the benchmark greedy policies in terms of execution cost, but also reduces computation failures noticeably at an expense of minor execution delay performance degradation. Our study provides a viable approach to design future MEC systems with renewable energy-powered devices. {It would be interesting to extend the proposed algorithm to more general MEC systems with multiple mobile devices, as well as consider resource-limited MEC servers. Another extension is to combine the concepts of wireless energy transfer and energy harvesting by deploying a power beacon co-located with the MEC server so that the energy deficit incurred by the renewable energy sources can be compensated by the controllable radio frequency energy.}

%%%%%%%%%%%%%%%%%%%%%%%%%%%%%%%%%%%%%%%%%%%%%%%%%%%%%%%
%%%%%%%%%%%%%%%%%%%%%%%%%%%%%%%%%%%%%%%%%%%%%%%%%%%%%%%

\begin{appendix}
\subsection{Proof for Proposition \ref{tightperf}}
Since $\mathcal{P}_{2}$ is a tightened version of $\mathcal{P}_{1}$, we have $\rm{EC}_{\mathcal{P}_{1}}^{*}\leq \rm{EC}_{\mathcal{P}_{2}}^{*}$. The other side of the inequality can be obtained by constructing a feasible solution for $\mathcal{P}_{2}$ (denoted as $\langle e^{t}_{\mathcal{P}_{2}}, \bm{I}_{\mathcal{P}_{2}}^{t},f^{t}_{\mathcal{P}_{2}},p^{t}_{\mathcal{P}_{2}}\rangle$) based on the optimal solution for $\mathcal{P}_{1}$ (denoted as $\langle e^{t}_{\mathcal{P}_{1}}, \bm{I}_{\mathcal{P}_{1}}^{t},f^{t}_{\mathcal{P}_{1}},p^{t}_{\mathcal{P}_{1}}\rangle$\footnote{For simplicity, we assume the optimal solution for $\mathcal{P}_{1}$ satisfies the property of the optimal CPU-cycle frequencies in Lemma \ref{equalfreq}.}): \textbf{i)} If $\mathcal{E}\left(\bm{I}^{t}_{\mathcal{P}_{1}},f_{\mathcal{P}_{1}}^{t},p_{\mathcal{P}_{1}}^{t}\right)\in\left(0,E_{\min}\right)$, then the computation task will be dropped in the constructed solution and no harvested energy will be consumed, i.e., ${\rm{cost}}^{t}_{\mathcal{P}_{2}}=\phi$; \textbf{ii)} If $\mathcal{E}\left(\bm{I}^{t}_{\mathcal{P}_{1}},f_{\mathcal{P}_{1}}^{t},p_{\mathcal{P}_{1}}^{t}\right)\in\left[E_{\min},E_{\max}\right]$, the constructed solution for the $t$th time slot will be the same as the optimal solution for $\mathcal{P}_{1}$; \textbf{iii)} The EH decision $e_{\mathcal{P}_{2}}^{t}$ is determined by
%\begin{equation}
$e_{\mathcal{P}_{2}}^{t}=\max\{B^{t}_{\mathcal{P}_{1}}-\mathcal{E}\left(\bm{I}^{t}_{\mathcal{P}_{1}},f_{\mathcal{P}_{1}}^{t},p_{\mathcal{P}_{1}}^{t}\right)
+e_{\mathcal{P}_{1}}^{t}-B_{\mathcal{P}_{2}}^{t}+\mathcal{E}\left(\bm{I}^{t}_{\mathcal{P}_{2}},f_{\mathcal{P}_{2}}^{t},p_{\mathcal{P}_{2}}^{t}\right),0\}$.
%\begin{itemize}
%\item If $\mathcal{E}\left(\bm{I}^{t}_{\mathcal{P}_{1}},f_{\mathcal{P}_{1}}^{t},p_{\mathcal{P}_{1}}^{t}\right)\in\left(0,E_{\min}\right)$, then the %computing task will be dropped in the constructed solution, i.e., $I_{\rm{d},\mathcal{P}_{2}}^{t}=1$, %$I_{\rm{m},\mathcal{P}_{2}}^{t}=I_{\rm{s},\mathcal{P}_{2}}^{t}=0$, and no harvested energy will be consumed and %${\rm{cost}}^{t}_{\mathcal{P}_{2}}=\phi$.
%\item If $\mathcal{E}\left(\bm{I}^{t}_{\mathcal{P}_{1}},f_{\mathcal{P}_{1}}^{t},p_{\mathcal{P}_{1}}^{t}\right)\in\left[E_{\min},E_{\max}\right]$, the %constructed solution for the $t$th time slot will be the same as the optimal solution for $\mathcal{P}_{1}$, i.e., %$\bm{I}^{t}_{\mathcal{P}_{2}}=\bm{I}^{t}_{\mathcal{P}_{1}}$, $f_{\mathcal{P}_{1}}^{t}=f_{\mathcal{P}_{2}}^{t}$, and %$p_{\mathcal{P}_{1}}^{t}=p_{\mathcal{P}_{2}}^{t}$.
%\item The energy harvesting decision $e_{\mathcal{P}_{2}}^{t}$ is determined by
%%\begin{equation}
%$e_{\mathcal{P}_{2}}^{t}=\max\{B^{t}_{\mathcal{P}_{1}}-\mathcal{E}\left(\bm{I}^{t}_{\mathcal{P}_{1}},f_{\mathcal{P}_{1}}^{t},p_{\mathcal{P}_{1}}^{t}\right)
%+e_{\mathcal{P}_{1}}^{t}-B_{\mathcal{P}_{2}}^{t}+\mathcal{E}\left(\bm{I}^{t}_{\mathcal{P}_{2}},f_{\mathcal{P}_{2}}^{t},p_{\mathcal{P}_{2}}^{t}\right),0\}.$
%%\end{equation}
%\end{itemize}

It is not difficult to show $B_{\mathcal{P}_{1}}^{t}\leq B_{\mathcal{P}_{2}}^{t}<+\infty$, and thus the constructed solution is feasible to $\mathcal{P}_{2}$. If $E_{\min}\geq E_{\min}^{\tau_{d}}$, where $E_{\min}^{\tau}=\kappa W^{3}\tau_{d}^{-2}$ is the minimum amount of energy required to meet the deadline constraint for mobile execution, for a time slot with $I_{\rm{m},\mathcal{P}_{1}}^{t}=1$ and
$\mathcal{E}\left(\bm{I}^{t}_{\mathcal{P}_{1}},f_{\mathcal{P}_{1}}^{t},p_{\mathcal{P}_{1}}^{t}\right)\in\left(0,E_{\min}\right)$, the constructed solution incurs $\left(\phi-\tau_{E_{\min}}\right)$ units of extra execution cost in the worst case. Here, $\tau_{E_{\min}}=\kappa^{\frac{1}{2}}W^{\frac{3}{2}}E_{\min}^{-\frac{1}{2}}$ is the execution delay corresponds to $E_{\min}$ amount of energy consumption for mobile execution; otherwise, if $E_{\min}< E_{\min}^{\tau_{d}}$, $I_{\rm{m},\mathcal{P}_{1}}^{t}=1$ and
$\mathcal{E}\left(\bm{I}^{t}_{\mathcal{P}_{1}},f_{\mathcal{P}_{1}}^{t},p_{\mathcal{P}_{1}}^{t}\right)\in\left(0,E_{\min}\right)$ is infeasible as the deadline constraint cannot be met. Besides, the probability of offloading a task to the MEC server successfully with energy consumption less than $E_{\min}$ is no greater than $\mathbb{P}\{\omega \tau_{d} \log_{2}\left(1+\frac{h^{t}p^{t}}{\sigma}\right)\geq L\}=1-F_{H}\left(\eta\right)$, where $\eta \triangleq \left(2^{\frac{L}{\omega\tau_{d}}}-1\right)\tau_{d}\sigma E_{\min}^{-1}$, and the constructed solution will incur at most $\phi$ units of extra execution cost as ${\rm{cost}}_{\mathcal{P}_{1}}^{t}>0$. By further incorporating the task request probability $\rho$, we can obtain the desired result.

\subsection{Proof for Corollary \ref{property2}}
For $\tilde{B}^{t}<0$, since $\Xi\left(h^{t},p_{0}^{t},\tilde{B}^{t}\right)=0$, with some manipulations,
%\begin{equation}
%\tilde{B}^{t}\left[\frac{h^{t}p_{0}^{t}}{\left(h^{t}p_{0}^{t}+\sigma\right)\ln %2}-\log_{2}\left(1+\frac{h^{t}p_{0}^{t}}{\sigma}\right)\right]=\frac{Vh^{t}}{\left(\sigma + h^{t}p_{0}^{t}\right)\ln 2}
%\label{proofproperty2eq1}
%\end{equation}
we have $\tilde{B}^{t} \cdot k\left(h^{t},p_{0}^{t}\right)=\frac{h^{t}V}{\ln 2}$, where
%\begin{equation}
$k\left(h,p\right)=\frac{hp}{\ln 2}-\left(hp+\sigma\right)\log_{2}\left(1+\frac{hp}{\sigma}\right)$,
%\end{equation}
and
%\begin{equation}
$\frac{\partial k\left(h,p\right)}{\partial p}=-h\log_{2}\left(1+\frac{hp}{\sigma}\right)<0$,
%\end{equation}
i.e., $k\left(h,p\right)$ decreases with $p$ for $p > 0$. Denote $\tilde{B}^{t}_{-}<\tilde{B}^{t}_{+}<0$ and the corresponding solutions
for $\Xi\left(h^{t},p,\tilde{B}^{t}\right)=0$ as $p_{0,-}^{t}$ and $p_{0,+}^{t}$, respectively. Since $\tilde{B}^{t}_{+}k\left(h^{t},p^{t}_{0,+}\right)=\tilde{B}^{t}_{-}k\left(h^{t},p^{t}_{0,-}\right)>0$, we have $k\left(h^{t},p^{t}_{0,+}\right)<k\left(h^{t},p^{t}_{0,-}\right)<0$, i.e., $p_{0,+}^{t}>p_{0,-}^{t}$. Since $p_{L}^{t}$ and $p_{U}^{t}$ are invariant with $\tilde{B}^{t}$, according to (\ref{optOLpwr}), $p^{t*}$ is non-decreasing with $\tilde{B}^{t}$ for $\tilde{B}^{t}<0$. Besides, as $p^{t*}=p^{t}_{U}$ when $\tilde{B}^{t}>0$, we can conclude that $p^{t*}$ is non-decreasing with $\tilde{B}^{t}$.

\subsection{Proof for Lemma \ref{lmaupperboundlyaV}}
By subtracting $\theta$ at both sides of (\ref{batterydynamics}), we have
%\begin{equation}
$\tilde{B}^{t+1}=\tilde{B}^{t}+e^{t}-\mathcal{E}\left(\bm{I}^{t},f^{t},p^{t}\right)$.
%\label{pertbatterydynamics}
%\end{equation}
Squaring both sides of this equality, we have
\begin{equation}
\begin{split}
\left(\tilde{B}^{t+1}\right)^{2}&=\left(\tilde{B}^{t}+e^{t}-\mathcal{E}\left(\bm{I}^{t},f^{t},p^{t}\right)\right)^{2}\\
%&=\left(\tilde{B}^{t}\right)^{2}+2\tilde{B}^{t}\left(e^{t}-\mathcal{E}\left(\bm{I}^{t},f^{t},p^{t}\right)\right)+\left(e^{t}-\mathcal{E}\left(\bm{I}^{t},f^{t},p^{t}\right)\right)^{2}\\
& \leq \left(\tilde{B}^{t}\right)^{2} + 2\tilde{B}^{t}\left(e^{t}-\mathcal{E}\left(\bm{I}^{t},f^{t},p^{t}\right)\right) + \left(e^{t}\right)^{2} +\mathcal{E}^{2}\left(\bm{I}^{t},f^{t},p^{t}\right)\\
& \leq \left(\tilde{B}^{t}\right)^{2} + 2\tilde{B}^{t}\left(e^{t}-\mathcal{E}\left(\bm{I}^{t},f^{t},p^{t}\right)\right) + \left(E_{H}^{\max}\right)^{2} + \tilde{E}_{\max}^{2}.
\end{split}
\label{lyabound1}
\end{equation}
Dividing both sides of (\ref{lyabound1}) by $2$, adding $V\left[\mathcal{D}\left(\bm{I}^{t},f^{t},p^{t}\right)+\phi\cdot \bm{1}\left(\zeta^{t}=1,I_{\rm{d}}^{t}=1\right)\right]$, as well as taking the expectation conditioned on $\tilde{B}^{t}$, we can obtain the desired result.
\end{appendix}

\subsection{Proof for Theorem \ref{thmasymopt}}
Since the LODCO algorithm obtains the optimal solution of the per-time slot problem, (\ref{ProofThm1bound}) holds, where ${\rm{cost}}^{t*}$ and ${\rm{cost}}^{t\Pi}$ are the execution cost at the $t$th time slot under $\langle \bm{I}^{t*},f^{t*},p^{t*}\rangle$ and $\langle \bm{I}^{t\Pi},f^{t\Pi},p^{t\Pi}\rangle$, respectively. ($\dagger$) is because that policy $\Pi$ is independent of the battery energy level $B^{t}$, and ($\ddagger$) is due to Corollary \ref{batterybound} and Lemma \ref{lmaarbclose}.
\begin{equation}
\begin{split}
\Delta_{V}\left(\tilde{B}^{t}\right)
&\leq \mathbb{E}\bigg[\tilde{B}^{t}\left[e^{t*}-\mathcal{E}\left(\bm{I}^{t*},f^{t*},p^{t*}\right)\right]+V\cdot {\rm{cost}}^{t*}|\tilde{B}^{t}\bigg]+C\\
&\leq \mathbb{E}\bigg[\tilde{B}^{t}\left[e^{t\Pi}-\mathcal{E}\left(\bm{I}^{t\Pi},f^{t\Pi},p^{t\Pi}\right)\right]+V\cdot {\rm{cost}}^{t\Pi}|\tilde{B}^{t}\bigg]+C\\
&\mathop{=}^{(\dagger)} \tilde{B}^{t}\mathbb{E}\left[e^{t\Pi}-\mathcal{E}\left(\bm{I}^{t\Pi},f^{t\Pi},p^{t\Pi}\right)\right]+V\cdot \mathbb{E}\left[{\rm{cost}}^{t\Pi}\right]+C\\
&\mathop{\leq}^{(\ddagger)}\max\{\theta,E_{H}^{\max}\}\cdot \varrho \delta + V \left({\rm{EC}_{\mathcal{P}_{3}}}+\delta\right)+C.
\end{split}
\label{ProofThm1bound}
\end{equation}
%\begin{equation}
%\begin{split}
%&\Delta_{V}\left(\tilde{B}^{t}\right)\\
%&\leq %\mathbb{E}\bigg[\tilde{B}^{t}\left[e^{t*}-\mathcal{E}\left(\bm{I}^{t*},f^{t*},p^{t*}\right)\right]+V\left[\mathcal{D}\left(\bm{I}^{t*},f^{t*},p^{t*}\right)+
%\phi\cdot\bm{1}\left(\zeta^{t}=1,I_{\rm{d}}^{t*}=1\right)\right]|\tilde{B}^{t}\bigg]+C\\
%&\leq %\mathbb{E}\bigg[\tilde{B}^{t}\left[e^{t\Pi}-\mathcal{E}\left(\bm{I}^{t\Pi},f^{t\Pi},p^{t\Pi}\right)\right]+V\left[\mathcal{D}\left(\bm{I}^{t\Pi},f^{t\Pi},p^{t\Pi}\right)+
%\phi\cdot\bm{1}\left(\zeta^{t}=1,I_{\rm{d}}^{t\Pi}=1\right)\right]|\tilde{B}^{t}\bigg]+C\\
%&\mathop{=}^{(\dagger)} %\tilde{B}^{t}\mathbb{E}\left[e^{t\Pi}-\mathcal{E}\left(\bm{I}^{t\Pi},f^{t\Pi},p^{t\Pi}\right)\right]+V\mathbb{E}\left[\mathcal{D}\left(\bm{I}^{t\Pi},f^{t\Pi},p^{t\Pi}\right)+\phi\cdot %\bm{1}\left(\zeta^{t}=1,I_{\rm{d}}^{t\Pi}=1\right)\right]+C\\
%&\mathop{\leq}^{(\ddagger)}\max\{\theta,E_{H}^{\max}\}\cdot \varrho \delta + V \left({\rm{EC}_{\mathcal{P}_{3}}}+\delta\right)+C,
%\end{split}
%\end{equation}
By letting $\delta$ go to zero, we obtain
\begin{equation}
\Delta_{V}\left(\tilde{B}^{t}\right)\leq V {\rm{EC}_{\mathcal{P}_{3}}^{*}} + C.
\label{asym2}
\end{equation}
Taking the expectation on both sides of (\ref{asym2}), summing up the inequalities for $t=0,\cdots T-1$, dividing by $T$ and letting $T$ go to infinity, we have ${\rm{EC}}_{\rm{LODCO}}\leq {\rm{EC}}_{\mathcal{P}_{3}}^{*}+\frac{C}{V}$. By further utilizing Proposition \ref{tightperf} and Lemma \ref{relaxProb}, the theorem is proved.

\end{document}